%%%% ijcai20.tex

\documentclass[format=acmsmall, review=false]{acmart}
\pdfoutput=1
\usepackage{acm-ec-18}

% Use the postscript times font!
\usepackage{times}
\usepackage{soul}
\usepackage{url}
\usepackage[utf8]{inputenc}
\usepackage{caption}
\usepackage{graphicx}
\usepackage{amsmath}
\usepackage{amsthm}
\usepackage{booktabs}
\urlstyle{same}
\usepackage{amsmath,amsthm,amssymb,algorithm,algpseudocode,graphicx,tikz,caption,subcaption,xcolor,bbm}
\definecolor{light-gray}{gray}{0.7}
\usepackage{amssymb}% http://ctan.org/pkg/amssymb
\usepackage{pifont}% http://ctan.org/pkg/pifont
\newcommand{\cmark}{\ding{51}}%

\usepackage{adjustbox}
\usepackage{pifont}
\usepackage{wrapfig}
\usepackage{adjustbox}
\usepackage{graphicx}
\usepackage{xcolor,makecell}
\usepackage{booktabs}
\usepackage{multirow}
\usepackage{stackengine}
\usepackage{csquotes}
\usepackage{natbib}
\numberwithin{equation}{section}
%% HERE: page numbers
\pagestyle{plain}

\allowdisplaybreaks

\usepackage{hyperref}

\setlength{\textfloatsep}{0pt}
\stackMath

\makeatletter
\def\moverlay{\mathpalette\mov@rlay}
\def\mov@rlay#1#2{\leavevmode\vtop{%
		\baselineskip\z@skip \lineskiplimit-\maxdimen
		\ialign{\hfil$\m@th#1##$\hfil\cr#2\crcr}}}
\newcommand{\charfusion}[3][\mathord]{
	#1{\ifx#1\mathop\vphantom{#2}\fi
		\mathpalette\mov@rlay{#2\cr#3}
	}
	\ifx#1\mathop\expandafter\displaylimits\fi}
\makeatother

\newcommand{\cupdott}{\charfusion[\mathbin]{\cup}{\cdot}}
\newcommand{\bigcupdott}{\charfusion[\mathop]{\bigcup}{\cdot}}

\algrenewcommand\algorithmicindent{0.43em}
\algnotext{EndFor}
\algnotext{EndIf}
\algnotext{EndWhile}
\algnotext{EndProcedure}

\makeatletter
\providecommand*{\cupdot}{%
	\mathbin{%
		\mathpalette\@cupdot{}%
	}%
}
\newcommand*{\@cupdot}[2]{%
	\ooalign{%
		$\m@th#1\cup$\cr
		\hidewidth$\m@th#1\cdot$\hidewidth
	}%
}
\def\NoNumber#1{{\def\alglinenumber##1{}\State #1}\addtocounter{ALG@line}{-1}}
\allowdisplaybreaks
%\footnote{Most of this research was done when the first author was an MSc 
%student at the University of Oxford.}

%A shortened version of this paper appears in the proceedings of IJCAI'20 and 
%an extended  abstract in the proceedings of AAMAS'20.

\begin{document}

\title{Stable Roommate Problem with Diversity Preferences}
\titlenote{A shortened version of this paper appears in the proceedings of 
IJCAI'20 and an extended  abstract in the proceedings of AAMAS'20.}
\author{Niclas Boehmer}
\affiliation{TU Berlin, Berlin, Germany}
\author{Edith Elkind}
\affiliation{University of Oxford, Oxford, U.K.}

\begin{abstract}
  In the multidimensional stable roommate problem, agents have to be allocated to rooms
  and have preferences over sets of potential roommates. We study
  the complexity of finding good allocations of agents to rooms under the assumption
  that agents have diversity preferences \citep{bredereck2019hedonic}: each agent
  belongs to one of the two types (e.g., juniors and seniors, artists and engineers),
  and agents' preferences over rooms depend solely on the fraction of 
  agents of their own type among their potential roommates. We consider various solution 
  concepts for this setting, such as core and exchange stability, Pareto optimality 
  and envy-freeness.
  On the negative side, we prove that envy-free, core stable or (strongly) exchange stable outcomes 
  may fail to exist and that the associated decision problems 
  are NP-complete. On the positive side, we show that these problems are in FPT
  with respect to the room size, which is not the case for the general stable 
  roommate problem. Moreover, for the classic
  setting with rooms of size two, we present a linear-time algorithm that 
  computes an outcome 
  that is core and exchange stable as well as Pareto optimal. 
  Many of our results for the stable roommate problem extend to the stable 
  marriage problem.
  \vspace{-10pt}
\end{abstract}

\maketitle

\section{Introduction} \label{ro::MaR}
Alice and Bob are planning their wedding. They have agreed on the gift registry 
and the music to be played, but they still need to decide on the 
seating plan for the wedding reception. They expect 120 guests, and the reception venue
has 20 tables, with each table seating 6 guests. However, this task is far from being easy:
e.g., Alice's great-aunt does not get along with Bob's family and prefers
not to share the table with any of them; on the contrary, Bob's younger brother is keen
to meet Alice's family and would be upset if he were stuck with his relatives.
After spending an evening trying to find a seating plan
that would keep everyone happy, Alice and Bob are on the brink of canceling the wedding 
altogether.

Bob's friend Charlie wonders if the hapless couple may benefit from consulting 
the literature on the {\em stable roommate problem}. In this problem, 
the goal is to find a stable assignment of $2n$ agents 
into $n$ rooms of size $2$, where every agent has a preference relation over her possible 
roommates. The most popular notion of stability in this context is {\em core stability}:
no two agents should strictly prefer each other to their current roommate. Another
relevant notion is {\em exchange stability}: no two agents should want to swap their places.
However, for the stable roommate problem, neither core
stable nor exchange stable outcomes are guaranteed to exist.
Further, while \citet{irving1985efficient} proved that it is 
possible to decide in time linear in the size of the input if an instance of 
the roommate problem 
with strict preferences admits a core stable outcome, many other algorithmic 
problems
for core and exchange stability are computationally hard
\citep{ronn1990np,cechlarova2005exchange}. 
For the {\em $s$-dimensional stable roommate problem},
where each room has size $s$ and agents have preferences over all $s-1$-subsets 
of 
agents as their potential roommates, even the core non-emptiness problem for 
strict preferences
is NP-complete for $s\ge 3$ \citep{ng1991three,huang2007two}.

However, Charlie then notes that Alice and Bob's problem has additional structure:
the invitees can be classified as bride's family or groom's family, and it appears
that all constraints on seating arrangements can be expressed in terms of this 
classification: each person only has preferences over the ratio of groom's relatives
and bride's relatives at her table. Thus, the problem in question is closely related
to {\em hedonic diversity games}, recently introduced by \citet{bredereck2019hedonic}.
These are coalition formation games where agents have {\em diversity preferences}, 
i.e., they are partitioned into two 
groups (say, \textit{red} and \textit{blue}), and every agent is indifferent among 
coalitions with the same ratio of red and blue agents. However, positive results
for hedonic diversity games are not directly applicable to the roommate setting: 
in hedonic games, agents
form groups of varying sizes, while the wedding guests have to be split into groups of six.

In this paper, we investigate 
the multidimensional stable roommate problem (for arbitrary room size $s$) with diversity preferences;  
we refer to the resulting problem as the \textit{roommate diversity problem}. 
This model captures important aspects of several real-world group formation scenarios, 
such as flat-sharing, splitting students into teams for group projects, and 
seating arrangements at important events. We consider common solution
concepts from the literature on the stable roommate problem; for each
solution concept, we analyze the complexity of checking if a given outcome 
is a valid solution, whether the set of solutions is guaranteed to be non-empty, 
and, if not, how hard it is to check if an instance admits a solution
as well as to compute a solution if it exists.

%%%%%%%%%%%%%%%%%%%%%%%%%%%%%%%%%%%%%%%%%%%%%%%%%%%%%%%%%%%%%%%%%%%%%%%
\begin{table*}[t!]\centering
		\begin{adjustbox}{max width=\textwidth}
		\begin{tabular}{@{}clllllllllllll@{}}
			\toprule
			& 
			\multicolumn{3}{c}{unrestricted} & 
			& 
			\multicolumn{3}{c}{strict} & 
			& 
			\multicolumn{3}{c}{dichotomous}
			\\
			\cmidrule{2-4} 
			\cmidrule{6-8} 
			\cmidrule{10-12}
			& Gu. & \multicolumn{1}{c}{Ex.} & \multicolumn{1}{c}{Co.} && Gu. & 
			\multicolumn{1}{c}{Ex.} & \multicolumn{1}{c}{Co.} && 
			Gu. & \multicolumn{1}{c}{Ex.} & \multicolumn{1}{c}{Co.}
			\\
			\midrule
			% Strong core & \hyperref[th::Ro_StCor_Ex]{\xmark}    
			%& \hyperref[th::Roo_SC_Npc]{NP-c.}  &&  
			%\hyperref[th::Ro_StCor_Ex]{\xmark} & \textbf{?} && 
			% \hyperref[th::Ro_StCor_Ex]{\xmark}  & \textbf{?} 
			%\\
			Core &  \hyperref[th:exCore]{\xmark} (\ref{th:exCore}) & 
			\hyperref[th::Roo_Core_NPc]{NPc} (\ref{th::Roo_Core_NPc}) & 
			\hyperref[th::Roo_Core_NPc]{NPh} (\ref{th::Roo_Core_NPc})&& 
			\hyperref[th:exCore]{\xmark} (\ref{th:exCore})& 
			\hyperref[th::Roo_Core_NPc]{NPc} (\ref{th::Roo_Core_NPc}) & 
			\hyperref[th::Roo_Core_NPc]{NPh} (\ref{th::Roo_Core_NPc}) && 
			\hyperref[th:exCoreSP]{\cmark} (\ref{th::RoomDich})& - & P 
			(\ref{th::RoomDich})
			\\
			Strong Core &  \hyperref[th:exCore]{\xmark} (\ref{th:exCore}) & 
			\hyperref[th::Roo_SC_Npc]{NPc} (\ref{th::Roo_SC_Npc}) & 
			\hyperref[th::Roo_SC_Npc]{NPh} (\ref{th::Roo_SC_Npc})&& 
			\hyperref[th:exCore]{\xmark} (\ref{th:exCore})& 
			\hyperref[th::Roo_SC_Npc]{NPc} (\ref{th::Roo_SC_Npc}) & 
			\hyperref[th::Roo_SC_Npc]{NPh} (\ref{th::Roo_SC_Npc}) && 
			\hyperref[th::Roo_SC_Npc]{\xmark} (\ref{th::Roo_SC_Npc})& 
			\hyperref[th::Roo_SC_Npc]{NPc} (\ref{th::Roo_SC_Npc}) & 
			\hyperref[th::Roo_SC_Npc]{NPh} (\ref{th::Roo_SC_Npc})
			\\
			Same Ex. &  	\hyperref[th::Room_SSwap]{\cmark} 
			(\ref{th::Room_SSwap}) & - & \hyperref[th::Room_SSwap]{P} 
			(\ref{th::Room_SSwap}) && 
			\hyperref[th::Room_SSwap]{\cmark} (\ref{th::Room_SSwap}) & - & 
			\hyperref[th::Room_SSwap]{P} (\ref{th::Room_SSwap}) && 
			\hyperref[th::Room_SSwap]{\cmark} (\ref{th::Room_SSwap}) & - & 
			\hyperref[th::Room_SSwap]{P} (\ref{th::Room_SSwap}) 
			\\
			Exch. &  	\hyperref[th:SSSP]{\xmark} (\ref{th:SSSP}) & \textbf{?} 
			& \textbf{?} &&  \hyperref[th:SSSP]{\xmark} (\ref{th:SSSP}) & 
			\textbf{?} & 
			\textbf{?} && 
			\textbf{?} & \textbf{?} & \textbf{?} \\
			Strong Ex. & \hyperref[th:SSSP]{\xmark} (\ref{th:SSSP}) & 
			\hyperref[th:SSS]{NPc} (\ref{th:SSS}) & \hyperref[th:SSS]{NPh} 
			(\ref{th:SSS}) &&
			\hyperref[th:SSSP]{\xmark} (\ref{th:SSSP}) & \textbf{?} & 
			\textbf{?} &&
			\hyperref[th:SSSP]{\xmark} (\ref{th:SSSP}) & \hyperref[th:SSS]{NPc} 
			(\ref{th:SSS}) & \hyperref[th:SSS]{NPh} (\ref{th:SSS})
			\\
			Pareto & \cmark  & - & \hyperref[th:Rea]{NPh} (\ref{th:Rea}) && 
			\cmark  & - & \textbf{?} && 
			\cmark  & - & \hyperref[th:Rea]{NPh} (\ref{th:Rea})
			\\
			Envy-free &  	\hyperref[th::Room_Envy]{\xmark} 
			(\ref{th::Room_Envy}) & \hyperref[th::Room_Envy_NPc]{NPc 
			(\ref{th::Room_Envy_NPc})}& \hyperref[th::Room_Envy_NPc]{NPh 
			(\ref{th::Room_Envy_NPc})} && 
			\hyperref[th::Room_Envy]{\xmark} (\ref{th::Room_Envy}) & 
			\hyperref[th::Room_Envy_NPc]{NPc (\ref{th::Room_Envy_NPc})} & 
			\hyperref[th::Room_Envy_NPc]{NPh (\ref{th::Room_Envy_NPc})} && 
			\hyperref[th::Room_Envy]{\xmark} (\ref{th::Room_Envy}) & 
			\hyperref[th::Room_Envy_NPc]{NPc (\ref{th::Room_Envy_NPc})} & 
			\hyperref[th::Room_Envy_NPc]{NPh (\ref{th::Room_Envy_NPc})}
			\\
			\bottomrule
		\end{tabular}
	\end{adjustbox}
	\caption{Overview of complexity results for different solution concepts 
		and restrictions on the preference relations. For each solution concept 
		and restriction, we indicate whether every instance 
		satisfying this restriction is guaranteed to admit an outcome with 
                the respective property (Gu.), the complexity of deciding if an 
                instance admits 
                such an outcome (Ex.) and of finding one if it exists (Co.).
		The number in brackets is the number of the respective theorem.
		For all solution concepts, the problem of verifying whether a given 
		outcome
		has the desired property is in P except for Pareto optimality,
		for which this problem is coNP-complete.
		We prove that all existence problems in this table are in FPT with 
		respect to the room size.}
	\label{t:Sum}
\end{table*}

\smallskip

\noindent\textbf{\textit{Our Contribution}\ } 
%In this paper, we lay the strategic foundations of the algorithmic analysis of the roommate 
%diversity problem. 
%
To begin with, we show that for room size two, every roommate diversity problem 
admits an outcome 
that is core stable, exchange stable and Pareto optimal; moreover, we show that 
such an outcome can be computed in 
linear time. For $s>2$, we provide
counterexamples showing that 
core stable, exchange stable or envy-free outcomes may fail to exist; 
these negative results hold even when agents' preferences over type ratios
are restricted to be strict and single-peaked (see Section~\ref{sec:prelim}
for formal definitions). We also prove that for core stability, strong exchange
stability and envy-freeness, the existence questions are computationally hard;
for Pareto optimality, we show that it is not only hard to find a Pareto optimal outcome, 
but also to verify whether a given outcome is Pareto optimal. 
We provide an overview of our results in Table \ref{t:Sum}. 
%For all studied properties, the problem of verifying whether an outcome fulfills this 
%property is in P except for Pareto optimality where this problem is coNP-complete.

On the positive side, we show that all existence questions 
we consider are in FPT with respect to the room size.
To the best of our knowledge, apart from some work on the multidimensional 
stable roommate problem with cyclic preferences \citep{hofbauer2016ddimensional}, 
this is one of the first positive results for
the multidimensional stable roommate problem. Thus, the roommate diversity
problem offers an attractive combination of expressive power 
and computational tractability.

In the end, we also investigate 
the multidimensional stable marriage problem with diversity preferences.
In our model, there are $s$ agent categories, with $n$ agents in each category, 
and each agent can be classified as red or blue irrespective of her category.
An admissible outcome is a partition of agents into $n$ groups, 
with each group having one representative from each category; the agents' 
preferences 
over groups are governed by the fraction of red and blue agents in each group. 
We 
show that 
our negative results for the roommate diversity problem 
extend to this model, but for the positive results the picture is more 
complicated.

%%%%%%%%%%%%%%%%%%%%%%%%%%%%%%%%%%%%%%%%%%%%%%%%%%%%%%%%%%%%%%%%%%%%%%%%%%%%%%%%%%
\smallskip

\noindent\textbf{\textit{Related Work}\ } 
The stable roommate problem was proposed by \citet{gale1962college} and has 
been studied 
extensively since then 
\cite{cechlarova2002complexity,irving1985efficient,irving2002stable,ronn1990np,huang2007two}.
It can be seen as a special case of {\em hedonic coalition formation} 
\cite{bogomolnaia2002stability}, where agents have to split into groups 
(with no prior constraints on the group sizes) and have preferences over groups
that they can be part of; precisely, a stable roommate problem with room size 
$s$ 
is a hedonic coalition formation problem where each agent considers all 
coalitions 
of size other than $s$ unacceptable.

In contrast to the closely related stable marriage problem 
\cite{gale1962college}, 
for the stable roommate problem, it is not guaranteed that the core is not 
empty. While 
\citet{irving1985efficient} proved that it is possible to check in time linear 
in the input whether a roommate problem admits a core stable outcome if the 
preferences are 
restricted to be strict, \citet{ronn1990np} showed that this problem
becomes NP-complete if ties in the preference relations are allowed.
As in practice a group deviation usually requires some regrouping, 
\citet{alcalde1994exchange} initiated the study of local stability notions 
that do not require reallocating non-deviating agents, by introducing the notion of exchange 
stability.  Subsequently, \citet{cechlarova2005exchange} 
proved that it is NP-complete to 
decide whether an instance of the roommate problem with strict preferences 
admits an exchange stable outcome.
Another concept that is relevant for the roommate problem is 
Pareto optimality 
\citep{abraham2004pareto,sotomayor2011pareto,DBLP:journals/corr/abs-1901-06737}.
\citet{morrill2010roommates} proved that for room size two, 
it is possible to check if a given outcome
is Pareto optimal, and to find a Pareto improvement if it exists, 
in time cubic in the number of agents. This implies that 
a Pareto optimal outcome can be found in polynomial time.
Researchers have also considered various notions of fairness 
in the context of the roommate problem 
\citep{aziz2019random,abdulkadirouglu2003school}. One 
such notion is envy-freeness: an outcome is said to be {\em envy-free} 
if no agent wants to take the place of another agent.

While much of the work on the stable roommate problem focuses on the case $s=2$,
there are a few papers that consider the three-dimensional stable roommate problem
\citep{ng1991three, huang2007two}. In some of this work, the agents' 
preferences are defined over individual agents 
and then lifted to pairs \cite{iwama2007stable, 
	huang2007two}, while other authors define preferences directly over pairs 
\cite{ng1991three, huang2007two}. In both models, it 
is NP-complete to decide whether an instance with strict preferences admits a core stable 
outcome \citep{iwama2007stable,ng1991three,huang2007two}. For this reason, despite its great 
practical relevance, the multidimensional version of the stable roommate 
problem 
has not attracted much attention yet.

One possibility to circumvent these negative results is to search for reasonable subclasses 
of the stable roommate problem, i.e., to identify realistic restrictions on 
the agents' preferences for which the associated computational problems become tractable.
This approach has been successful in the study of the two-dimensional stable roommate problem
\citep{cseh2019pairwise,abraham2007stable,bartholdi1986stable,bredereck2017stable,chung2000existence},
as well as in the context of hedonic games \citep{banerjee2001core,bogomolnaia2002stability,aziz2019fractional}.
In particular, \citet{bredereck2019hedonic} and \citet{be20} 
analyzed the complexity of finding stable outcomes
in hedonic diversity games for several notions of stability, such
as Nash stability, individual stability and core stability. However, 
these results do not directly translate to our model: first, 
with the exception of core stability, the solution concepts we consider
are different from those considered in hedonic diversity games, 
and second, the hard constraint on the room sizes in the 
roommate problem changes both the set of feasible allocations and the set
of possible deviations.  

Finally, we note that \citet{bredereck2019hedonic} and \citet{be20}
related hedonic games with diversity preferences to {\em anonymous hedonic games}
\citep{bogomolnaia2002stability}, where the agents' preferences over 
coalitions only depend on coalition sizes. This connection proves useful 
in our setting as well: 
%since in the roommate diversity problem 
%we can describe each room by specifying the number of red agents in it.
%This enables us to 
%use anonymous hedonic games in our algorithmic analysis 
%of the roommate diversity problem. In particular, 
e.g., in our hardness
reductions we use the fact that it is NP-complete 
to decide whether an anonymous hedonic game admits a Nash, core or individually stable 
outcome~\citep{ballester2004np}.

Diversity related questions have been also previously studied in the context of 
stable matching problems \citep{huang2010classified,kamada2015efficient}. 
However, our model is 
fundamentally different, as, 
here, agents have preferences over types and, in contrast to previous research, 
types are not used to formulate distributional constraints on the outcome. 

%%%%%%%%%%%%%%%%%%%%%%%%%%%%%%%%%%%%%%%%%%%%%%%%%%%%%%%%%%%%%%%%%%%%%%%%%%%%%

\section{Preliminaries}\label{sec:prelim}
For every positive integer $t$, we denote the set $\{1, \dots, t\}$ by $[t]$, 
and we write $[0, t]$ to denote $\{0\}\cup[t]$.
\begin{definition}\label{def:rdp}
	A {\em roommate diversity problem} with room size $s$ is a quadruple 
	$G=(R,B,s,$ $ (\succsim_{i})_{i\in R\cup B})$ with $N=R\cupdott B$ and 
	$|N|=k\cdot s$ 
	for some $k\in \mathbb{N}$. The preference relation $\succsim_{i}$ 
	of each agent $i\in N$ is a weak order over the set $D=\{\frac{j}{s} : j\in [0, s]\}$.
\end{definition}
In the following, we call all agents in $R$ {\em red} agents 
and all agents in $B$ {\em blue} agents, and write
$r=|R|$, $b=|B|$ and $n=|R\cupdott B|$ . 
We refer to size-$s$ subsets of $N$ as {\em coalitions} or {\em rooms};
the quantity $k=\frac{|N|}{s}$ is then the number of rooms. 
For each $i\in N$, let $\mathcal{N}_{i}=\{S\subseteq N: |S|=s, i\in S\}$ 
denote all size-$s$ subsets of $N$ containing $i$, i.e., all possible rooms that 
$i$ can be part of. 

An {\em outcome} of $G$ is a partition of all agents into $k$ rooms $\pi=\{C_1,\dots,C_k\}$ 
such that $|C_i|=s$ for all $i\in [k]$. 
Let $\pi(i)$ denote $i$'s coalition in $\pi$. Given a coalition $C\subseteq N$, 
let $\theta(C)$ denote the fraction of red agents in $C$, i.e., 
$\theta(C)=\frac{|C\cap R|}{|C|}$: we say that $C$ {\em is of fraction $\theta(C)$}. 
A coalition $C\subseteq N$ is called {\em pure} if $C\subseteq R$ or $C\subseteq B$ 
and {\em mixed} otherwise. 

For each agent $i\in N$, we interpret her preference relation $\succsim_i$ over $D$ as her 
preferences over the fraction of red agents in her coalition; for instance, 
$\frac{2}{5}\succsim_{i}\frac{3}{5}$ means that agent $i$ prefers a room 
where two out of five agents 
are red to a room where three out of five agents are red.\footnote{We could equivalently 
	define the agents' preferences over the number of red agents in each room; we chose the ratio-based
	definition for consistency with the hedonic diversity games literature 
	and to emphasize the room size.} 
Thereby, $\succsim_i$ induces agent $i$'s preferences over all possible rooms she can be 
part of.\footnote{For succinctness and consistency with prior work, we assume that each agent has preferences 
over the entire set $D$, including $0$ and $1$, even though a red agent cannot 
be part of a room with ratio $0$ and a blue agent cannot be part of a room with ratio $1$. 
Allowing agents to have preferences over `impossible' ratios has no impact on our results:
even if, say, a blue agent ranks $1$ highly, she cannot deviate 
to a coalition with this ratio. In all of our examples, the impossible 
ratios are ranked at the bottom of agent's preferences.} 
Given two rooms $S,T\in \mathcal{N}_i$, overloading notation, we write $S\succ_i T$ and say that $i$ 
{\em strictly prefers} $S$ to $T$ if $\theta(S)\succsim_{i}\theta(T)$ and 
$\theta(T)\not\succsim_{i}\theta(S)$. Further,  
we write $S\succsim_i T$ and say that $i$ 
{\em weakly prefers} $S$ to $T$ if $\theta(T)=\theta(S)$ or 
$\theta(T)\succsim_{i}\theta(S)$. If $i$ weakly prefers $S$ to $T$ and $T$ to $S$, 
we write $S\sim_i T$ and say that $i$ is {\em indifferent} between $S$ and $T$.

The preference relation of agent $i$ is said to be {\em single-peaked} 
if there exists a peak $p_i\in D$ such that 
for all $\alpha, \beta \in D$ such that  
$p_i\leq \alpha < \beta$ or $\beta <\alpha \leq p_i$ it holds that 
$\alpha \succsim_i \beta$. The preference relation of $i$ 
is said to be {\em dichotomous} if it is possible to partition $D$ 
into two sets $D^+$ and $D^-$ so that 
for all $d\in D^+, d'\in D^-$ it holds that $d\succ_i d'$, 
for all $d,d'\in D^-$ it holds that $d\sim_i d'$ and 
for all $d,d'\in D^+$ it holds that $d\sim_i d'$. 
We say that $i$ {\em approves} all fractions in $D^+$ and {\em disapproves} all fractions in $D^-$. 

Generalizing the definition of \citet{gale1962college}, 
we say that a coalition $S\subseteq N$ with $|S|=s$ {\em blocks} an outcome $\pi$ if 
for all $i\in S$ it holds that $\theta(S)\succ_{i}\theta(\pi(i))$; 
we say that $S$ {\em weakly blocks} an outcome $\pi$ if 
for all         $i\in S$ it holds that $\theta(S)\succsim_{i}\theta(\pi(i))$ and 
there exists an $i\in S$ such that $\theta(S)\succ_{i}\theta(\pi(i))$. 
An outcome $\pi$ is called {\em (strongly) core stable} if no coalition (weakly) blocks it.

In an outcome $\pi$, a pair of agents $i,j\in N$ with $\pi(i)\neq \pi(j)$ has an 
{\em exchange deviation} if they would like to exchange places, i.e., 
$\theta\big((\pi(j)\setminus \{j\})\cup \{i\} \big) \succ_{i} \theta\big(\pi(i)\big)$ and 
$\theta\big((\pi(i)\setminus \{i\})\cup \{j\}\big)  \succ_{j} \theta\big(\pi(j)\big)$. 
Further, a pair of agents $i,j\in N$ with 
$\pi(i)\neq \pi(j)$ has a {\em weak exchange deviation} if 
$\theta\big((\pi(j)\setminus \{j\})\cup \{i\}\big)\succ_{i}\theta\big(\pi(i)\big)$ and 
$\theta\big((\pi(i)\setminus \{i\})\cup \{j\}\big)\succsim_{j}\theta\big(\pi(j)\big)$. 
An outcome is called {\em (strongly) exchange stable} 
if no pair of agents has a (weak) exchange deviation~\citep{alcalde1994exchange}.

An outcome $\pi$ is called {\em Pareto optimal} 
if there is no other outcome that makes all agents weakly better off and some agents 
strictly better off, i.e., 
there is no outcome $\pi'$ such that 
$\theta(\pi'(i))\succsim_{i} \theta(\pi(i))$ for all $i\in N$ and 
$\theta(\pi'(i))\succ_{i} \theta(\pi(i))$ for some $i\in N$.

An outcome $\pi$ is called {\em envy-free} if there does not exist a pair of agents $i,j\in N$ 
with $\pi(i)\neq \pi(j)$ such that $i$ envies $j$'s place, i.e., 
$\theta\big((\pi(j)\setminus \{j\}) \cup \{i\} \big) \succ_{i} \theta\big(\pi(i)\big)$.

Besides the study of the classic computational complexity of a problem, 
we can also investigate its complexity as a function of a specific parameter 
$\rho$
of the input; e.g., in the context of the roommate diversity problem, this 
parameter could
be the size of the rooms $s$ or the size of the smaller of the two groups 
$\min\{r, b\}$.
We say that a problem is {\em fixed-parameter tractable with respect to a 
parameter}
if, given an instance $I$ where this parameter takes value $\rho$, 
it is solvable in time 
$f(\rho)|I|^{\mathcal{O}(1)}$, where $f$ is some computable 
function.\footnote{For 
	further details on the theory of parameterized complexity and the 
	significance of $\mathit{FPT}$, the reader is referred to textbooks by 
	\citet{downey2013fundamentals} and by \citet{niedermeier2006invitation}.}

%%%%%%%%%%%%%%%%%%%%%%%%%%%%%%%%%%%%%%%%%%%%%%%%%%%%%%%%%%%%%%%%%%%%%%%%%%%%%%%%%%%%%

\section{Roommate Diversity Problem With Room Size Two} \label{se::Two}
For $s=2$, the roommate diversity problem becomes a special case of the classic stable 
roommate problem. Our first observation is that diversity preferences make the classic 
problem significantly easier: we can efficiently find an outcome that is Pareto optimal, 
core stable and exchange stable. This result motivates us to focus on the case $s>2$
in the remainder of the paper.

\begin{theorem} \label{th::Room2}
	Every instance of the roommate diversity problem with room size two admits an outcome that 
	is Pareto optimal, core stable and exchange stable, which can be computed 
	in linear time, even if we allow for 
	indifferences in the preferences.
\end{theorem}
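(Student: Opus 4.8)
The plan is to reduce the problem to a one-parameter optimization over the number of \emph{mixed} rooms and then read off a stable, Pareto optimal assignment. Since $s=2$, we have $D=\{0,\tfrac12,1\}$, so every room has fraction $0$ (two blues), $\tfrac12$ (one red and one blue) or $1$ (two reds); a red agent is only ever in a room of fraction $\tfrac12$ or $1$, and a blue agent only in a room of fraction $0$ or $\tfrac12$. Hence each red cares only about ``mixed vs.\ pure'', and I would partition the reds into $R^+$ (those with $\tfrac12\succ_i 1$), $R^-$ (those with $1\succ_i\tfrac12$) and $R^0$ (the indifferent ones), and symmetrically the blues into $B^+,B^-,B^0$. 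An outcome is then completely described by the number $m$ of mixed rooms together with a choice of which agents occupy them: the remaining $r-m$ reds and $b-m$ blues fill pure rooms, which forces $m\equiv r\equiv b\pmod 2$ and $0\le m\le\min(r,b)$. For a fixed $m$ I will always use the \emph{greedy bucket assignment}: place into the $m$ mixed rooms first the agents who most want to be there, i.e.\ reds in the order $R^+,R^0,R^-$ and blues in the order $B^+,B^0,B^-$.

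Next I would translate the solution concepts into conditions on $m$. Because fractions are color-determined, swapping two same-color agents changes nothing, so every blocking pair or exchanging pair consists of one red and one blue; a short case analysis over the room types the two agents can occupy shows: (i) a blocking/exchanging pair forming a \emph{mixed} room exists iff some $R^+$ agent sits in a pure room \emph{and} some $B^+$ agent sits in a pure room; (ii) a blocking \emph{pure-red} (resp.\ pure-blue) room exists iff two $R^-$ (resp.\ two $B^-$) agents sit in mixed rooms; and (iii) an exchange deviation between two mixed rooms exists iff an $R^-$ agent and a $B^-$ agent occupy \emph{different} mixed rooms. Under the greedy assignment these say exactly that $m\ge\min(|R^+|,|B^+|)$, that at most one $R^-$ and at most one $B^-$ sit in mixed rooms (so $m\le|R^+|+|R^0|+1$ and $m\le|B^+|+|B^0|+1$), and that whenever a lone $R^-$ and a lone $B^-$ are both pushed into mixed rooms we must \emph{co-locate them in the same mixed room}, which neutralizes~(iii).

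I would then choose $m$ to also guarantee Pareto optimality. The idea is to take $m$ as large as possible without dragging any strict-preference agent into its disliked room type: ideally $\max(|R^+|,|B^+|)\le m\le\min(|R^+|+|R^0|,|B^+|+|B^0|)$, which makes every strict agent happy and is trivially Pareto optimal and stable. When this window is empty (too many reds want mixed for the blues to absorb, or vice versa) one side must contain unsatisfied strict agents, but then pushing $m$ further would hurt a $B^-$ (resp.\ $R^-$) agent, so the extreme feasible $m$ is still Pareto optimal; I would verify this by the local argument that any Pareto improvement must move agents between pure and mixed rooms in blocks of two and that every such move strictly hurts someone. The genuinely delicate point is the \textbf{parity constraint}: the target $m$ may have the wrong parity, and I would fix this by spending the ``$+1$'' slack in condition~(ii) --- shifting $m$ by one either drops a single indifferent agent into a pure room (costless) or forces exactly one $R^-$ and one $B^-$ into mixed rooms, which the co-location trick from step~(ii)/(iii) renders stable. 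That this repair is always available follows because the feasible integer window for $m$ always contains two consecutive values (hence both parities), its upper endpoint strictly exceeding its lower endpoint.

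Finally, all of the above is pure counting and bucketing: classifying agents, computing $|R^+|,|R^-|,|R^0|,|B^+|,|B^-|,|B^0|$, selecting $m$, and filling the rooms each take linear time, giving the claimed bound. The main obstacle I anticipate is not any single stability notion in isolation but reconciling them simultaneously under the parity constraint --- in particular the corner case where parity forces one pure-preferring red and one pure-preferring blue into mixed rooms, where the naive greedy assignment breaks exchange stability and only co-locating these two agents in one room saves it.
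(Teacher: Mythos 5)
Your construction is essentially the paper's algorithm in different clothing: you maximize the number of mixed pairs formed by agents who weakly prefer being mixed, fill the rest with pure pairs, and repair parity by either retiring an indifferent mixed pair or adding one extra mixed pair whose two ``forced'' members share a room --- the paper's Algorithm~1 does exactly this, including your co-location trick (its extra pair $\{r'_{q},b'_{p}\}$). The outcome you build is correct. However, there is a genuine hole in your verification: the claim that ``swapping two same-color agents changes nothing, so every blocking pair or exchanging pair consists of one red and one blue'' is false. A same-color swap leaves both \emph{rooms'} fractions unchanged, but the two swapping \emph{agents} land in rooms of different fractions: a red agent $i$ with $1\succ_{i}\tfrac12$ sitting in a mixed room and a red agent $j$ with $\tfrac12\succ_{j}1$ sitting in a pure room form an exchange-blocking pair (after the swap $i$ is at fraction $1$ and $j$ at fraction $\tfrac12$, both strictly better). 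Consequently your conditions (i)--(iii) are not a complete characterization of exchange stability, and your case analysis as written never rules out this class of deviations.

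The gap is patchable, and the patch is exactly what your greedy bucket order silently provides: since mixed rooms are filled in the order $R^{+},R^{0},R^{-}$, an $R^{-}$ agent occupies a mixed room only when every $R^{+}$ and $R^{0}$ agent does too, so no $R^{+}$ agent remains in a pure room to swap with (symmetrically for blue). But this argument must actually be made --- it does not follow from (i)--(iii) --- and it is no accident that the paper's own proof devotes a separate paragraph to precisely this case (a red agent in a pure coalition with $\tfrac12\succ 1$ who would like to swap with a red agent in a mixed coalition). A secondary weakness: your Pareto-optimality step (``any improvement moves agents between pure and mixed rooms in blocks of two and every such move strictly hurts someone'') is a sketch rather than a proof; it does go through, by the same accounting the paper performs (any dominating outcome must change the number of mixed pairs, and either direction strictly hurts some agent with strict preferences, given your choice of $m$), but in its current form it is an appeal to intuition.
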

\begin{proof}
	In the following, we prove the theorem by proving that every outcome 
	returned by Algorithm~\ref{al:n2}, which runs obviously in linear time, is 
	guaranteed to be core stable, exchange stable and Pareto optimal. However, 
	we start by giving a high-level description of the algorithm. 
	We say that a mixed pair is {\em happy} if both agents in the pair weakly
	prefer a mixed pair to a pure pair.
	Algorithm~\ref{al:n2} first create as many happy mixed pairs as possible.
	The algorithm then attempts to put the remaining agents in pure pairs. 
	If at this point the number of blue agents 
	is odd, this is not possible. 
	In this case, depending on the preferences of agents in mixed pairs, 
	it either inserts an additional mixed pair or breaks up one of the mixed pairs 
	to create two pure pairs: one red and one blue. 
	
	\begin{algorithm}[t] 
		\caption{Computing a stable outcome in an instance of the 2-dimensional 
			roommate 
			diversity problem} \label{al:n2}
		\begin{algorithmic}[1]
			\Require{An instance $G=(R,B,2, (\succ_{i})_{i\in N})$} 
			\Ensure{Pareto optimal, core and exchange stable outcome $\pi$ of 
				$G$}
			\State {Let $B^{*}=\{b^{*}_{1},\dots,b^{*}_{k}\}$ 
				be the set of all blue agents $b$ such that 
				$\frac{1}{2}\succsim_{b}0$, 
				starting with all agents for which this relation is strict.}
			\State{Let $R^{*}=\{r^{*}_{1},\dots, r^{*}_{\ell}\}$ 
				be the set of all red agents $r$ such that 
				$\frac{1}{2}\succsim_{r}1$, 
				starting with all agents for which this relation is strict.}  
			\State {Let $B^{'}=\{b^{'}_{1},\dots,b^{'}_{p}\}=
				B\setminus\{b^{*}_{1},\dots,b^{*}_{\min\{k,\ell\}}\}$,  
				$R^{'}=\{r^{'}_{1},\dots,r^{'}_{q}\}=
				R\setminus\{r^{*}_{1},\dots,r^{*}_{\min\{k,\ell\}}\}$, 
				starting with all agents who strictly prefer a pure coalition 
				followed by all agents who are indifferent 
				followed by all agents who strictly prefer a mixed coalition. 
			}      
			\If{$p\mod 2=0$}
			\State \Return 
			$\pi=\{\{b^{*}_{i},r^{*}_{i}\}: i\in [\min\{k,\ell\}]\}\cup 
			\{\{b^{'}_{i},b^{'}_{i+1}\}: i\in \{1,3,\dots,p-3,p-1\}\} 
			\cup$ 
			\NoNumber{\hspace{1.6cm}
				$\{\{r^{'}_{i},r^{'}_{i+1}\}: i\in \{1,3,\dots,q-3,q-1\}\}$}
			\Else 
			\If{$0\succsim_{b^{*}_{\min\{k,\ell\}}}\frac{1}{2}$ {\bf and}
				$1\succsim_{r^{*}_{\min\{k,\ell\}}}\frac{1}{2}$}
			\State \Return $\pi=\{\{b^{*}_{i},r^{*}_{i}\}: i\in 
			[\min\{k,\ell\}-1]\}\cup\{\{b^{'}_{i},b^{'}_{i+1}\}: i\in 
			\{1,3,\dots,p-4, p-2\}\}\cup$
			\NoNumber{\hspace{1.7cm}$\{\{r^{'}_{i},r^{'}_{i+1}\}: 
				i\in \{1,3,\dots,q-4,q-2\}\}\cup 
				\{\{r^{'}_{q},r^{*}_{\min\{k,\ell\}}\},  
				\{b^{'}_{p},b^{*}_{\min\{k,\ell\}}\}\}$}
			\Else
			\State \Return $\pi=\{\{b^{*}_{i},r^{*}_{i}\}: i\in 
			[\min\{k,\ell\}]\}\cup\{\{b^{'}_{i},b^{'}_{i+1}\}: 
			i\in \{1,3,\dots,p-4,p-2\}\}\cup$
			\NoNumber{\hspace{1.7cm}$ \{\{r^{'}_{i},r^{'}_{i+1}\}: 
				i\in \{1,3,\dots,q-4,q-2\}\} \cup \{\{r^{'}_{q},b^{'}_{p}\}\}$}
			\EndIf
			\EndIf
		\end{algorithmic}
	\end{algorithm}

	\noindent \textbf{\textit{Core stability}\ } First of all, note that the 
	outcome $\pi$ computed by Algorithm 1 is core stable, as for each agent in 
	a pure coalition it either holds that this is one of her most preferred 
	coalitions or all agents of the other type strictly preferring mixed 
	coalitions are already in a mixed coalition. Moreover, for agents in a 
	mixed coalition it either holds that this is one of their most preferred 
	coalitions or all agents of the same type strictly preferring a pure 
	coalition are already in one.

	\noindent \textbf{\textit{Exchange stability}\ } To prove that $\pi$ is 
	exchange stable, we examine all agents who might have an incentive to swap 
	and prove that there never exists an agent willing to swap with them: an 
	agent in one of the coalitions from $\{\{b^{*}_{i},r^{*}_{i}\} : i\in 
	[\min(\ell,k)]\}$ never strictly prefers swapping with another agent, as 
	she is already in one of her most preferred coalitions. 
	
	If 
	$\{r^{'}_{q},b^{'}_{p}\}\in \pi$, only in the case that~$1\succ_{r^{'}_{q}} 
	\frac{1}{2}$, agent $r'_{q}$ strictly prefers to swap with a red agent in a 
	pure coalition or with a blue agent in a mixed coalition. However, as 
	argued above, no blue agent in one of the other mixed coalitions has an 
	incentive to swap. Moreover, by construction,  $1\succ_{r^{'}_{q}} 
	\frac{1}{2}$ implies that $1\succ_{r^{'}} \frac{1}{2}$ for all red agents 
	$r^{'}$ in pure coalitions. Thereby, no red agent in a pure coalition wants 
	to swap with $r^{'}_{q}$. The same holds if $b^{'}_{p}$ has an incentive to 
	swap. 
	
	Lastly, a red agent $r$ in a pure coalition with 
	$\frac{1}{2}\succ_{r} 1$ strictly prefers swapping with a red agent in a 
	mixed coalition or a blue agent in a pure coalition. However, from the fact 
	that a red agent with $\frac{1}{2}\succ_{r} 1$ is in a pure coalition it 
	follows that all red agents in mixed coalitions also have this preference 
	relation. Moreover, it also follows that all blue agents who weakly prefer 
	a mixed coalition are already in a mixed coalition. For blue agents in pure 
	coalitions with an incentive to swap, the reasoning is analogous. Note that 
	if the preferences of agents are strict, $\pi$ is also guaranteed to be 
	strongly exchange stable.

	\noindent \textbf{\textit{Pareto optimality}\ }
	For the sake of contradiction, let us assume that there exists an outcome 
	$\pi'$ that weakly Pareto dominates $\pi$. By definition, there needs to 
	exist at least one agent $i\in N$ who strictly prefers $\pi'$ to $\pi$. 
	Consequently, $i$ cannot be indifferent between the two realizable 
	fractions in her domain.
	
	Assuming that the condition in line $4$ is triggered, no agent strictly 
	preferring a pure coalition can be in a mixed coalition in $\pi$.  Thereby, 
	$i$ needs to strictly prefer a mixed coalition and needs to be part of a 
	pure coalition in $\pi$. However, from this it follows that all agents of 
	her type in mixed coalitions also strictly prefer a mixed coalition. 
	Consequently, $\pi'$ needs to include more mixed pairs than $\pi$. However, 
	this cannot be achieved without making some of the agents of the other type 
	worse off, as all agents from the other type weakly preferring a mixed 
	coalition are already placed in a mixed coalition in $\pi$. 
	
	Assuming that the condition in line $7$ is triggered, $i$ needs to be in a 
	pure coalition and strictly prefers a mixed coalition. However, it is not 
	possible that such an agent exists in this case, as this would imply that 
	either $b^{*}_{\min(k,\ell)}$ or $r^{*}_{\min(k,\ell)}$ also strictly 
	prefers a mixed coalition.  
	
	Assuming that the condition in line $9$ is triggered, it either holds that 
	$i$ strictly prefers a mixed coalition---and the reasoning from the first 
	case applies---or $i$ is member of $\{r^{'}_{q},b^{'}_{p}\}$ and strictly 
	prefers a pure coalition. In the latter case, every agent of $i$'s type who 
	is in a pure coalition  strictly prefers a pure coalition. Thereby, in 
	every outcome $\pi'$ that weakly Pareto dominates $\pi$, the number of pure 
	coalitions needs to be higher than in $\pi$. It is only possible 
	to increase the number of pure coalitions by putting at least one red and 
	one blue agent from $\{\{b^{*}_{i},r^{*}_{i}\} : i\in [\min(\ell,k)]\}$ in 
	a pure coalition. However, as the condition in line $7$ does not hold, at 
	least one of these agents strictly prefers a mixed coalition and is thereby 
	worse off in $\pi'$.
\end{proof}

%%%%%%%%%%%%%%%%%%%%%%%%%%%%%%%%%%%%%%%%%%%%%%%%%%%%%%%%%%%%%%

\section{(Strong) Core Stability}\label{sec:core}
We have seen that for the roommate diversity problem with $s=2$,
a core stable outcome is guaranteed to exist. 
However, for larger values of $s$ this is not the case.
\begin{theorem} \label{th:exCore}
	An instance of the roommate diversity problem
	may fail to have a core stable outcome, 
	even if no indifferences in the preferences are allowed. 
\end{theorem}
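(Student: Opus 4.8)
The plan is to construct an explicit counterexample: a small instance of the roommate diversity problem with room size $s>2$ whose core is empty. The natural approach is to exploit the known fact, stated in the related-work discussion, that anonymous hedonic games can have an empty core, and to emulate such a game using diversity preferences. Since a room here has fixed size $s$, the only free parameter each agent cares about is the fraction of red agents in its room, so I would look for the smallest $s$ (likely $s=3$) and a carefully chosen multiset of preference profiles over $D=\{0,\tfrac13,\tfrac23,1\}$ such that every candidate partition of the agents into rooms admits a blocking coalition.

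Concretely, I would first fix the numbers $r=|R|$ and $b=|B|$ so that $n=r+b$ is divisible by $s$ and so that the number of achievable room-fractions is small but rich enough to create a cyclic ``chasing'' structure. The key device is to design a few ``types'' of agents whose peaks over fractions point in conflicting directions: some agents strictly prefer to be in a room where they are in the majority of their own color, others prefer to be in the minority, and the global counting constraint (the total number of red and blue agents is fixed) makes it impossible to satisfy enough of them simultaneously. For each of the finitely many outcomes up to the relevant symmetry, I would then exhibit a size-$s$ coalition $S$ with $\theta(S)\succ_i\theta(\pi(i))$ for every $i\in S$, i.e. a blocking set. Because the instance is small, this is a finite case check; the goal is to engineer the preferences so that a single blocking coalition (or a short cyclic family of them) rules out every outcome at once.

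The main obstacle will be keeping the case analysis finite and clean: for general room size the number of distinct outcomes grows quickly, so I would restrict to a fixed small $s$ and a fixed small $n$ and argue by symmetry that only a handful of ``outcome shapes'' (classified by the multiset of room-fractions they realize) need to be examined. The delicate part is certifying that the constructed preferences are consistent with the requested restriction of the theorem, namely strict preferences with no indifferences, and—if one wants the stronger statement hinted at in the contribution section—also single-peaked; I would verify single-peakedness by checking that each agent's ranking of $D$ increases up to a peak $p_i$ and decreases afterward. Finally I would double-check the feasibility constraints: a red agent cannot actually occupy a room of fraction $0$ and a blue agent cannot occupy a room of fraction $1$, so every blocking coalition I name must be color-realizable, and the ``impossible'' ratios must be placed harmlessly low in the relevant preference lists.

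Thus the overall structure of the argument is: (i) specify $R$, $B$, $s$, and the strict (single-peaked) preference relations explicitly; (ii) enumerate all outcomes up to symmetry, grouping them by the multiset of realized fractions; and (iii) for each group, produce a concrete color-feasible blocking coalition, concluding that no outcome can be core stable.
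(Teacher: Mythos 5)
Your strategy coincides with the paper's: exhibit one small instance and rule out every outcome shape (classified by the multiset of realized room fractions) via an explicit blocking coalition. But as written your proposal has a genuine gap: it never actually produces the instance. Every load-bearing step is stated in the conditional --- ``I would look for the smallest $s$,'' ``I would then exhibit a size-$s$ coalition'' --- and for a theorem of this form the explicit example \emph{is} the proof; the surrounding case check is routine once the preferences are on the table. Engineering a preference profile so that \emph{every} feasible outcome is simultaneously blocked, while the global red/blue counts stay consistent, is precisely the nontrivial content, and it is exactly what is left undone. Moreover, your guess that $s=3$ suffices is unverified; the paper's construction uses $s=4$, and it is not obvious that a strict-preference core-empty instance exists for $s=3$.

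For comparison, the paper takes $s=4$, $R=\{r_1,r_2,r_3,r_4\}$, $B=\{b_1,b_2,b_3,b_4\}$, with $r_1,r_2,r_3\colon \frac{2}{4}\succ\frac{4}{4}\succ\frac{3}{4}\succ\frac{1}{4}\succ\frac{0}{4}$, \ $r_4\colon \frac{4}{4}\succ\frac{1}{4}\succ\frac{2}{4}\succ\frac{3}{4}\succ\frac{0}{4}$, and all $b_i\colon \frac{1}{4}\succ\frac{2}{4}\succ\frac{3}{4}\succ\frac{0}{4}\succ\frac{4}{4}$. With four red and four blue agents split into two rooms of size four, only three outcome shapes exist, and each is killed uniformly, independently of which particular agents sit where: two rooms of fraction $\frac{2}{4}$ are blocked by $\{r_4,b_1,b_2,b_3\}$; a $\frac{3}{4}$/$\frac{1}{4}$ split is blocked by $\{r_1,r_2,r_3,r_4\}$ (every red agent ranks $\frac{4}{4}$ above both $\frac{3}{4}$ and $\frac{1}{4}$); and the all-pure outcome is blocked by $\{r_1,r_2,b_1,b_2\}$. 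This uniformity is what keeps the case analysis to three lines --- a point your plan gestures at but does not secure. Note also that this instance is strict but \emph{not} single-peaked ($r_1$ ranks $\frac{4}{4}$ above $\frac{3}{4}$ despite having peak $\frac{2}{4}$); the paper obtains the single-peaked strengthening separately (Corollary~\ref{th:exCoreSP}) via its reduction from anonymous hedonic games, so you should not expect a single hand-built example to deliver both claims at once.
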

\begin{proof}
	Let $G=(\{r_{1},r_2,r_3,r_{4}\},\{b_{1},b_2,b_3,b_{4}\},4,
	(\succsim_{i})_{i\in N})$ 
	with
	\begin{align*}
	r_{1},r_{2},r_{3}:\frac{2}{4}\succ \frac{4}{4} \succ& \frac{3}{4} 
	\succ \frac{1}{4} \succ \frac04;
	\text{ } r_{4}:\frac{4}{4}\succ \frac{1}{4} \succ \frac{2}{4} 
	\succ \frac{3}{4} \succ \frac04\\
	b_{1}, b_{2}, b_{3},b_{4}&: \frac{1}{4}\succ \frac{2}{4}\succ \frac{3}{4}\succ \frac04 \succ \frac44. 
	\end{align*}
	Assume for the sake of contradiction that $G$ has a core stable outcome 
	$\pi$. If $\pi$ consists of two coalitions of fraction $\frac{2}{4}$, 
	$\{r_{4}, b_{1},b_{2},b_{3}\}$ is blocking. If $\pi$ consists of one 
	coalition of fraction $\frac{3}{4}$ and one coalition of fraction 
	$\frac{1}{4}$, $\{r_{1},r_{2},r_{3},r_{4}\}$ is blocking. If $\pi$ consists 
	of a purely blue and a purely red coalition, $\{r_{1},r_{2},b_{1},b_{2}\}$ 
	is blocking.
\end{proof}

\noindent Further, if we assume that the room size is an unfixed part of the 
input, the 
associated existence question becomes NP-complete.

\begin{theorem} \label{th::Roo_Core_NPc}
	It is {\em NP}-complete to decide whether a given instance of the roommate diversity 
	problem admits a core stable outcome. The hardness result holds even 
	if no indifferences in the preferences are allowed. 
\end{theorem}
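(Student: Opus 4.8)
The statement has two parts: membership in NP and NP-hardness. I'll start by thinking about which is routine and which requires real work.

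**Membership in NP.** This should be easy. Given an outcome $\pi$ (a partition into $k$ rooms of size $s$), I need to verify in polynomial time that no coalition blocks it. A subtle point: there are exponentially many size-$s$ subsets, so I can't naively check all of them. But diversity preferences help enormously. The blocking condition only depends on the *fraction* $\theta(S)$ of red agents in the blocking coalition. So to check whether a blocking coalition of fraction $j/s$ exists, I just need to count how many red agents strictly prefer $j/s$ to their current room and how many blue agents strictly prefer $j/s$ to their current room; a blocking coalition of that fraction exists iff there are at least $j$ such red agents and at least $s-j$ such blue agents. Iterating over all $s+1$ possible fractions gives a polynomial verification procedure. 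So verification is in P, hence the existence problem is in NP.

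**Hardness.** This is the real work. The related-work section gives a strong hint: the authors point to the NP-completeness of deciding whether an *anonymous hedonic game* admits a core stable outcome \citep{ballester2004np}. Anonymous hedonic games are exactly the setting where preferences depend only on coalition *size*, which is structurally parallel to diversity preferences depending only on the *ratio*. So the plan is to reduce from core non-emptiness in anonymous hedonic games.

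**The reduction.** My plan is to take an anonymous hedonic game on a set of $m$ agents and build a roommate diversity instance in which the room size $s$ is large, and to encode each original agent's coalition-size preference as a diversity preference over fractions. The natural idea: fix a large room size $s$ and introduce a pool of "filler" blue agents plus a set of red agents representing the original players, so that a red agent sharing a room with $t-1$ other red agents (and the rest blue filler) sits in a room of fraction $t/s$, encoding membership in a coalition of size $t$ in the anonymous game. Each red agent's preference over fractions $t/s$ is then set to mirror the original agent's preference over coalition sizes $t$. The filler blue agents are given preferences (e.g.\ all approving some innocuous fraction, ranking everything else low) that make them indifferent or compliant, so they never create spurious blocking coalitions. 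The design constraints are: (i) in any outcome, a blocking coalition in the diversity instance must correspond to a genuine blocking coalition in the anonymous game, and conversely; (ii) the number of agents must be divisible by $s$, which I'd arrange by padding with extra filler agents grouped into their own stable pure rooms. The fact that no indifferences are needed in the original statement suggests the anonymous-game reduction can be carried out with strict preferences, which constrains how I set up ties among fractions—I'd lift the strict preferences of the anonymous game directly.

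**Main obstacle.** The hard part will be controlling the blue filler agents so that the set of blocking coalitions in the roommate instance corresponds exactly to the set of blocking coalitions in the anonymous game—no more, no less. I must ensure filler agents never form a blocking coalition among themselves or destabilize otherwise-stable configurations, while still allowing red agents enough freedom to realize every coalition size they could deviate to. Getting the counting to align (so that a fraction-$t/s$ room is feasible exactly when the corresponding size-$t$ coalition is feasible, given the supply of red and blue agents) is the delicate bookkeeping step, and I'd expect the bulk of the proof to be a careful verification that core stable outcomes of the two instances are in bijective correspondence.
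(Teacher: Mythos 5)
Your NP-membership argument and your choice of source problem (core non-emptiness in anonymous hedonic games, via the result of \citet{ballester2004np}) coincide exactly with the paper's proof, and your overall architecture---room size $s=|N|$, one red agent per original agent with preferences lifted from coalition sizes to fractions, blue fillers to pad the rooms---is also the paper's. However, the step you yourself flag as the ``main obstacle'' is precisely where the proposal stops being a proof, and the one concrete design you float for it would fail. If all blue fillers approve a single ``innocuous'' fraction and rank everything else low, both directions break. Forward direction: the blue agents you must place in mixed rooms to realize a coalition $P_\ell$ as a room of fraction $|P_\ell|/s$ are then far from their top choice; since the anonymous-game outcome typically has $q\ge 2$ coalitions, there are $\sum_\ell (s-|P_\ell|)=(q-1)s\ge s$ such discontented fillers, and (depending on how ``everything else'' is ordered) they can assemble a spurious all-blue or innocuous-fraction blocking coalition that exists regardless of whether the anonymous game was stable. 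Backward direction: to translate an anonymous-game blocking coalition of size $j$ into a blocking coalition of fraction $j/s$, you need $s-j$ blue agents who \emph{strictly} prefer $j/s$ to their current room, for \emph{every} possible $j$; fillers who only like one fraction cannot supply this, so the roommate instance can be core stable while the anonymous game is not.

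The paper resolves this tension with a type-specific supply plus a counting argument, and this is the missing idea. For each $i\in[s]$ it introduces $s^2$ blue agents $b_i^j$ with preferences $\tfrac{i}{s}\succ 0\succ\cdots$. Then (i) in the constructed outcome, each mixed room of fraction $|P_\ell|/s$ is filled only with blue agents of type $b_{|P_\ell|}$, who are at their unique top choice and hence never join any blocking coalition, while blue agents in pure rooms sit at their second choice and strictly prefer only their own designated fraction; consequently a blocking coalition with $j$ red agents forces all $j$ of those red agents to strictly prefer $j/s$, i.e., it projects to a size-$j$ blocking coalition in the anonymous game. And (ii) since there are only $s$ red agents, at most $s(s-1)$ blue agents can ever occupy mixed rooms, so at least $s^2-s(s-1)=s\ge s-j$ blue agents of each type $b_j$ remain in pure rooms, always available and willing to complete any deviation the red agents want. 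This dual role of the blue pool---inert in the forward direction, abundant in the backward direction---is exactly what your proposal identifies as needed but does not construct; without it (or an equivalent device) the reduction does not go through.
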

\begin{proof}
    To check whether an outcome is core stable, we iterate over all $j\in[0,s]$ 
    and check if there are $j$ red agents and $s-j$ blue agents preferring 
    $\frac{j}{s}$ to the fraction of their current coalition.
	To prove hardness, we reduce from the problem 
	of deciding whether an anonymous hedonic game admits a core stable outcome 
	\cite{ballester2004np}. Given an anonymous game with $n$ agents, 
	we build an $n$-dimensional instance of the roommate diversity problem. 
	The general idea of the reduction is to map coalitions of size $\ell$ in 
	the 
	anonymous game to coalitions of fraction $\frac{\ell}{n}$ in the roommate 
	diversity problem.

	\textbf{Construction: }Let $G=(N, (\succ_{i})_{i\in N})$ be an anonymous 
	game. 
	In the corresponding roommate diversity problem, the room size $s$ is set 
	to 
	$|N|$, and for each original agent $i\in N$, one red agent $r_{i}$  is 
	introduced. The preference relation of $r_{i}$ is defined by: $$ 
	\frac{j}{s}\succsim_{r_{i}} \frac{j'}{s} \text{ iff } j\succsim_{i} j', 
	\forall 
	j,j'\in [s].$$ Additionally, for each $i\in [s],$ we insert $s^2$ blue 
	agents: 
	$$b_{i}^{j}: \frac{i}{s}\succ_{b_{i}^{j}} 0, \forall j\in [s^2].$$ 
	
	\textbf{Correctness: }($\Rightarrow$) Given a core stable outcome 
	$\pi=\{P_{1},\dots,P_{q}\}$ of the anonymous game, we construct a core 
	stable 
	outcome $\pi'$ of the corresponding roommate diversity problem. For each 
	$P_{\ell}$ with $\ell\in [q]$, we introduce a coalition $P'_{\ell}$ 
	consisting 
	of the red agents corresponding to the agents in $P_{\ell}$ together with 
	$s-|P_{\ell}|$ blue agents of type $b_{|P_{\ell}|}$: $P_{\ell}'=\{r_{i} : 
	i\in 
	P_{\ell}\}\cup \{b_{|P_{\ell}|}^{j} : j\in 
	[s(\ell-1)+1,s\ell-|P_{\ell}|]\}$. 
	All remaining blue agents are put into pure coalitions of size $s$. 
	
	For the sake of contradiction, let us assume that there exists a blocking 
	coalition $S$ with $|S|=s$ in $\pi'$. First of all, note that at least one 
	red 
	agent needs to be in $S$. Let us assume 
	that $S$ includes $j$ red agents $S_{r}\subseteq S$. For all $r_{i}\in 
	S_{r}$, 
	it holds that $\frac{j}{s}\succ_{r_{i}}\theta(\pi'(r_{i}))$. However, by 
	construction of the preference relation of $r_{i}$ and the outcome $\pi'$, 
	this 
	implies that for all $i\in \{\ell : r_{\ell}\in S_{r}\}$ it holds that 
	$j\succ_{i}\theta(\pi(i))$. Consequently, $\{\ell : r_{\ell}\in S_{r}\}$ 
	blocks 
	$\pi$ which leads to a contradiction. 
	
	$(\Leftarrow)$ To prove the other direction, let us assume that the 
	roommate 
	diversity problem admits a core stable outcome 
	$\pi=\{P_{1},\dots,P_{\ell},$ 
	$\dots,P_{k}\}$ where w.l.o.g. we assume that the first $\ell$ coalitions 
	contain red agents and all other coalitions are purely blue coalitions. 
	From 
	$\pi$, we construct an outcome of the anonymous game $\pi'=\{P'_{1},\dots, 
	P'_{\ell}\}$ with $P'_{j}=\{i : r_{i}\in P_{j} \}$ for all $j\in [\ell]$. 
	For 
	the sake of contradiction, let us assume that there exists a blocking 
	coalition 
	$S$ with $|S|=j$ in $\pi'$ for some $j\in [n]$. This implies that 
	$j\succ_{i}|\pi'(i)|$ for all $i\in S$. Let $S'=\{r_{i} : i\in S\}$. 
	By 
	construction, it holds that $S'$ is non-empty and that 
	$\frac{j}{s}\succ_{r_{i}}\theta(\pi(r_{i}))$ for all $r_{i}\in S'$. 
	Moreover, 
	note that there always exists a set $S_{b}$ of $s-j$ blue agents $b$ with $ 
	\frac{j}{s}\succ_{b} \theta(\pi(b))$, as at most $s*(s-1)$ blue agents can 
	be 
	part of a mixed coalition and $s^2$ blue agents prefer $\frac{j}{s}$ to a 
	pure 
	coalition. Therefore, $S'\cup S_{b}$ with $|S'\cup S_{b}|=s$ blocks $\pi$ 
	which 
	leads to a contradiction.
\end{proof}

Using the construction in the proof of Theorem~\ref{th::Roo_Core_NPc},
we can map the single-peaked anonymous hedonic game with empty core 
constructed by \citet{banerjee2001core} to a single-peaked instance 
of the roommate diversity problem with an empty core. 
We obtain the following corollary.
\begin{corollary} \label{th:exCoreSP}
	An instance of the roommate diversity problem may fail to have a core stable outcome, 
	even if all agents' preferences are single-peaked. 
\end{corollary}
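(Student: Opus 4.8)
The plan is to use the reduction from the proof of Theorem~\ref{th::Roo_Core_NPc} as a black box, applying it to a \emph{specific} anonymous hedonic game rather than to arbitrary instances. Since Theorem~\ref{th::Roo_Core_NPc} already establishes a correctness-preserving correspondence between core stable outcomes of an anonymous game $G=(N,(\succ_i)_{i\in N})$ and core stable outcomes of the associated roommate diversity instance, it suffices to feed the construction an anonymous game that (i) has an empty core and (ii) induces single-peaked preferences in the roommate instance. The natural candidate is the single-peaked anonymous hedonic game with empty core exhibited by \citet{banerjee2001core}, which is exactly what the corollary invokes.

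The main work is verifying that single-peakedness is preserved by the construction. Recall that in the reduction, each original agent $i$ becomes a red agent $r_i$ whose preferences over fractions $\frac{j}{s}$ mirror agent $i$'s preferences over coalition sizes $j$ via the rule $\frac{j}{s}\succsim_{r_i}\frac{j'}{s}$ iff $j\succsim_i j'$. Because the map $j\mapsto\frac{j}{s}$ is monotone and $D=\{\frac{j}{s}:j\in[0,s]\}$ inherits the linear order of $[0,s]$, a peak $p_i\in[s]$ for agent $i$ translates directly into a peak $\frac{p_i}{s}\in D$ for $r_i$: the single-peaked inequalities on either side of the peak carry over verbatim. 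Thus if the anonymous game is single-peaked (with respect to coalition size), every red agent's induced preference relation is single-peaked. For the blue agents $b_i^j$, whose preferences are of the form $\frac{i}{s}\succ_{b_i^j}0$ with all remaining fractions ranked at the bottom, I would check that these can be arranged to be single-peaked as well; since each such agent essentially approves one fraction and is otherwise flat, one can order the bottom fractions to respect a single-peaked profile (the footnote in the preliminaries already notes that impossible ratios sit at the bottom, so there is freedom here).

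Concretely, I would proceed as follows. First I would recall the Banerjee–Konishi–Sönmez instance and confirm it is single-peaked over coalition sizes with empty core. Next I would apply the construction from Theorem~\ref{th::Roo_Core_NPc} to obtain the roommate diversity instance. Then I would invoke the two directions of the correctness argument already proved there: a core stable outcome of the roommate instance would yield a core stable outcome of the anonymous game, contradicting emptiness of its core, so the roommate instance has no core stable outcome. Finally I would verify single-peakedness of the resulting profile, handling red and blue agents as sketched above.

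I expect the only genuine obstacle to be the blue-agent single-peakedness check and the bookkeeping needed to make all preference relations simultaneously single-peaked with respect to the \emph{same} underlying order on $D$; the red agents are immediate from monotonicity of $j\mapsto\frac{j}{s}$, and the empty-core transfer is handed to us by Theorem~\ref{th::Roo_Core_NPc}.
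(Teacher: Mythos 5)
Your proposal is correct and follows exactly the paper's route: the paper proves this corollary by feeding the single-peaked empty-core anonymous game of Banerjee et al.\ into the construction from Theorem~\ref{th::Roo_Core_NPc} and observing that the resulting roommate diversity instance is single-peaked with empty core. Your additional care about making the blue agents' (largely unconstrained) preferences single-peaked is a detail the paper glosses over, and your resolution of it is sound since only the relation $\frac{i}{s}\succ 0$ is needed for correctness.
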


In contrast, if agents' preferences are dichotomous, the core is non-empty, 
and an outcome in the core can be computed efficiently: following 
the approach of \citet{peters2016complexity}, to construct a core stable outcome 
we iterate over all fractions $\frac{\ell}{s}$ for $\ell \in[0,s]$ and add 
the maximum possible number of rooms 
consisting of $\ell$ red agents and $s-\ell$ blue agents who all approve 
$\frac{\ell}{s}$. The rest of the agents are split into 
the remaining rooms. We obtain the following result.
\begin{theorem} \label{th::RoomDich}
	Every instance of the roommate diversity problem with dichotomous preferences 
	admits a core stable outcome; moreover, an outcome in the core can be computed 
	in polynomial time.
\end{theorem}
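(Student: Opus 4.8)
The plan is to analyze the greedy procedure sketched just before the theorem and show that it always outputs a core stable outcome. Call a room of fraction $\frac{\ell}{s}$ (one with exactly $\ell$ red and $s-\ell$ blue agents) \emph{happy} if every one of its $s$ occupants approves $\frac{\ell}{s}$. The algorithm processes the fractions $\frac{\ell}{s}$ for $\ell\in[0,s]$ one at a time, and at each step forms the maximum possible number of happy rooms of fraction $\frac{\ell}{s}$ from the currently unassigned agents who approve $\frac{\ell}{s}$; the agents never placed into a happy room---the \emph{leftover} agents---are packed arbitrarily into the remaining rooms at the very end. Since the total number of agents is divisible by $s$ and each happy room has size $s$, the number of leftover agents is also a multiple of $s$, so the final packing is always feasible; counting approving agents and forming rooms is clearly polynomial, which gives the running-time claim.

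The heart of the argument is core stability, and the key observation is that under dichotomous preferences $\theta(S)\succ_i\theta(\pi(i))$ holds exactly when agent $i$ approves $\theta(S)$ and disapproves $\theta(\pi(i))$. Hence in any blocking coalition $S$ of fraction $\frac{\ell^*}{s}$, every member approves $\frac{\ell^*}{s}$ and disapproves the fraction of her current room. But any agent placed into a happy room approves that room's fraction, so no happy-room agent can belong to a blocking coalition. Therefore every member of $S$ must be a leftover agent, and leftover agents are precisely those that remained unassigned throughout the entire greedy process.

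To finish, I would derive a contradiction with the maximality of the greedy step for $\ell^*$. The coalition $S$ contributes $\ell^*$ red and $s-\ell^*$ blue agents, all approving $\frac{\ell^*}{s}$ and all leftover, hence still unassigned at the moment the algorithm processed fraction $\frac{\ell^*}{s}$. Thus at that moment there remained at least $\ell^*$ unassigned approving red agents and at least $s-\ell^*$ unassigned approving blue agents \emph{after} the step had formed all the happy rooms it could---enough to build one more happy room of fraction $\frac{\ell^*}{s}$. This contradicts the fact that the step formed the maximum number of such rooms, so no blocking coalition exists. The argument is independent of the order in which fractions are processed and handles the pure cases $\ell^*=0$ and $\ell^*=s$ identically.

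The main obstacle, and where I would spend the most care, is the bookkeeping that links ``member of a blocking coalition'' to ``unassigned when fraction $\frac{\ell^*}{s}$ was processed'': one must argue cleanly that happy-room membership certifies approval of one's current room and hence immunity from participating in any block, and that leftover status propagates backward in time to the relevant greedy step. Once this linkage is in place, the contradiction with greedy maximality is immediate, and the polynomial running time follows directly from the structure of the procedure.
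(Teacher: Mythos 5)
Your proposal is correct and follows essentially the same route as the paper: the paper's proof is exactly this greedy procedure (iterating over fractions $\frac{\ell}{s}$, $\ell\in[0,s]$, and forming the maximum number of rooms of $\ell$ red and $s-\ell$ blue agents who all approve $\frac{\ell}{s}$, then packing the rest arbitrarily), stated as a sketch before the theorem with the correctness argument left implicit. Your elaboration of that argument---blocking members must approve the blocking fraction and disapprove their own, hence must all be leftover agents, contradicting the maximality of the greedy step for that fraction---is sound, including the boundary cases $\ell^*=0$ and $\ell^*=s$.
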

\noindent However, this positive result does not extend to the more demanding notion of strong 
core stability.

\begin{theorem} \label{th::Roo_SC_Npc}
	It is {\em NP}-complete to decide whether a given
	roommate diversity problem 
	admits a strongly core stable outcome, even if the preferences are 
	restricted to be dichotomous and every agent approves at most four fractions. 
\end{theorem}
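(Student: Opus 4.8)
The plan is to prove both membership in NP and NP-hardness, with the whole argument organized around the gap between the ordinary core (always non-empty for dichotomous preferences by Theorem~\ref{th::RoomDich}) and the strong core. First I would record the following characterization of weak blocking under dichotomous preferences, writing $D_i^+$ and $D_i^-$ for the fractions agent $i$ approves and disapproves. Joining a coalition $S$ makes $i$ \emph{strictly worse off} exactly when $\theta(\pi(i))\in D_i^+$ and $\theta(S)\in D_i^-$, and \emph{strictly better off} exactly when $\theta(\pi(i))\in D_i^-$ and $\theta(S)\in D_i^+$. Hence a coalition of fraction $\frac{\ell}{s}$ weakly blocks $\pi$ iff it can be built from $\ell$ red and $s-\ell$ blue agents, each of whom either approves $\frac{\ell}{s}$ or currently sits in a disapproved room, with at least one member strictly improving. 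For each $\ell\in[0,s]$ this is a one-shot counting check analogous to the core-verification step in the proof of Theorem~\ref{th::Roo_Core_NPc} (count eligible red and blue agents and confirm a strict improver can be seated), so verifying an outcome runs in polynomial time and the existence problem lies in NP.

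For hardness I would reduce from a suitable NP-complete problem, reusing the fraction-encoding idea of Theorem~\ref{th::Roo_Core_NPc}, where a combinatorial object of "size $\ell$" is represented by the fraction $\frac{\ell}{s}$ with room size $s$ set to the instance size. Concretely, I would encode a constraint problem (for instance a bounded-occurrence satisfiability variant or a 3-dimensional-matching/exact-cover instance) using red agents whose approval sets specify the admissible local fractions, together with large batches of blue "filler" agents of the kind introduced in Theorem~\ref{th::Roo_Core_NPc}. The role of the blue padding is to guarantee that every fraction meant to be reachable has enough blue approvers to complete a room, which keeps the ordinary core intact and concentrates all the difficulty on the red side; the target correspondence is that a strongly core stable outcome exists iff the source instance is a yes-instance. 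The restriction that every agent approves at most four fractions would be enforced by making each red agent encode a single local constraint, so that only a constant number of fractions need be approved per agent.

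The hard part will be controlling weak blocking, which is "cheap" in exactly the way the strong core is demanding: any agent currently in a disapproved room weakly accepts joining \emph{every} coalition, so a single movable dissatisfied agent together with otherwise-indifferent fillers already forms a weak block. The main obstacle is therefore to design gadgets in which every attempt to strictly satisfy one dissatisfied red agent necessarily forces some currently-satisfied agent into a disapproved fraction (or consumes a scarce blue approver needed elsewhere), so that the only globally weak-block-free configuration is one that realizes a valid solution of the source instance. Getting this tradeoff tight while respecting both the dichotomous restriction and the four-approval bound is the delicate part, since I must simultaneously ensure that "wrong" configurations always admit a weak block and that "correct" ones admit none.

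Once the gadgets are fixed, the correctness proof would split into the two standard directions: from a solution of the source instance I would read off a room partition and argue, fraction by fraction, that no coalition satisfies the weak-blocking counting condition above; conversely, from a strongly core stable outcome I would extract a solution, using the blue-approver scarcity to show that the red agents must be grouped according to the encoded constraints, exactly as the mapping between blocking coalitions and deviations is argued in Theorem~\ref{th::Roo_Core_NPc}.
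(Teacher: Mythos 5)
Your NP-membership argument is fine: the counting characterization of weak blocking under dichotomous preferences is correct, and checking each fraction $\frac{\ell}{s}$ for enough weak acceptors plus an includable strict improver is a valid polynomial-time verification. The problem is the hardness direction, which is where the entire content of the theorem lies: you never actually give a reduction. You name candidate source problems (bounded-occurrence SAT, exact cover), describe what the gadgets would have to accomplish, and then explicitly concede that ``getting this tradeoff tight \dots is the delicate part.'' That delicate part \emph{is} the proof. The difficulty you correctly identify---that any agent sitting in a disapproved room weakly accepts every coalition, so a single movable strict improver plus fillers yields a weak block---is precisely what makes a from-scratch gadget construction hard, and your proposal offers no mechanism that resolves it. As written, the proposal is a research plan for a hardness proof, not a hardness proof.

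The paper avoids this difficulty entirely by reducing from a problem where it has already been solved: \citet{peters2016complexity} proved that deciding whether an \emph{anonymous hedonic game} with dichotomous preferences (each agent approving at most four sizes) admits a strongly core stable outcome is NP-complete. The paper then reuses the translation machine of Theorem~\ref{th::Roo_Core_NPc}---one red agent per anonymous-game agent, with coalition sizes $\ell$ mapped to fractions $\frac{\ell}{s}$, plus blue filler agents---so that strong-core deviations in the roommate instance correspond to weakly blocking coalitions in the anonymous game and vice versa. All the gadget-level work of controlling weak blocking under the dichotomous, four-approval restriction is thereby absorbed into the cited result rather than reconstructed. If you want to salvage your approach, the most direct fix is the same: replace your unspecified SAT/X3C gadgets with a reduction from the anonymous-game strong-core problem, and check that the blue fillers (e.g., blue agents approving no fraction, as in the dichotomous adaptation used for Theorem~\ref{th:SSS}) never enable a weak block on their own.
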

\begin{proof}
	\citet{peters2016complexity} showed that the corresponding problem for anonymous 
	hedonic games is NP-complete. By reducing from this problem, 
	we can establish that our problem is NP-hard as well; the reduction is similar
	to the one used in the proof of Theorem~\ref{th::Roo_Core_NPc}. 
\end{proof}

%%%%%%%%%%%%%%%%%%%%%%%%%%%%%%%%%%%%%%%%%%%%%%%%%%%%%%%%%%%%%%%%%%%%%%%%%%%%

\section{(Strong) Exchange Stability}
As pointed out in the introduction, it is not always plausible to assume that agents are 
allowed to perform group deviations. Therefore, in this section we focus on stability 
concepts that are defined in terms of agent swaps.

%%%%%%%%%%%%%%%%%%%%%%%%%%%%%%%%%%%%%%%%

\subsection{Same-Type Swaps} \label{se::STS}
If the set of possible deviations is limited to agent swaps, 
it may be the case that only agents of the same type are allowed 
to swap their places: we call the 
resulting stability notion {\em same-type-exchange stability}. For example, 
if a professor forms fixed-size teams for a group project, 
she may want to fix the fraction of graduate students in each group
(e.g., to ensure that the experienced students are equally distributed)
but still allow for swaps between two undergraduate students or 
between two graduate students. Under this weaker 
version of exchange stability, the agents are guaranteed
to eventually converge to a stable outcome.

\begin{theorem} \label{th::Room_SSwap}
	Every instance of the roommate diversity problem has a (strongly) 
	same-type-exchange stable outcome, and some such outcome  
	can be computed in polynomial time. 
\end{theorem}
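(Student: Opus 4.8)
The plan is to start from an arbitrary outcome and repeatedly perform beneficial same-type swaps until none remain, showing that this local-search process terminates after polynomially many steps. The crucial structural observation is that a swap between two agents of the \emph{same} type never changes any room's fraction: if $i,j$ are both red and we move $i$ into $j$'s room in place of $j$, we remove one red agent and add one red agent, so $\theta\big((\pi(j)\setminus\{j\})\cup\{i\}\big)=\theta(\pi(j))$, and symmetrically for $j$. Hence the multiset of room fractions is invariant under same-type swaps; such a swap merely reassigns which agent of a given type sits in which fraction, and a weak exchange deviation between same-type $i,j$ amounts to $i$ strictly preferring $\theta(\pi(j))$ to $\theta(\pi(i))$ while $j$ weakly prefers $\theta(\pi(i))$ to $\theta(\pi(j))$.

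Given this, I would introduce a potential function. For each agent $i$, fix an integer utility representation $u_i\colon D\to\{0,1,\dots,s\}$ of her weak order $\succsim_i$ (possible since $|D|=s+1$), with $u_i(x)>u_i(y)$ iff $x\succ_i y$ and $u_i(x)=u_i(y)$ iff $x\sim_i y$. Define $\Phi(\pi)=\sum_{i\in N}u_i(\theta(\pi(i)))$. When a pair of same-type agents performs a weak exchange deviation, the strictly improving agent raises her utility by at least $1$ (the values are integers) while the other agent's utility does not decrease, so $\Phi$ strictly increases by at least $1$.

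I would then conclude as follows. Since $0\le \Phi(\pi)\le n\cdot s$ for every outcome, at most $n s$ such swaps can be performed before reaching an outcome that admits no weak exchange deviation between same-type agents, i.e., a strongly same-type-exchange stable outcome. This outcome is in particular same-type-exchange stable, because every (strict) exchange deviation is also a weak exchange deviation, so handling the strong notion automatically settles the weak one. Each iteration scans the $O(n^2)$ same-type pairs and tests the swap condition in constant time, so starting from any initial partition the procedure runs in polynomial time.

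The main thing to get right is the invariance claim together with the fact that the potential increment is strictly positive exactly when a weak exchange deviation is executed (the improving agent gains while the other does not lose); once this is in place, the bounded integer-valued potential delivers both existence and polynomial-time computation simultaneously. The only mild subtlety to flag is that the argument applies verbatim to red--red and blue--blue swaps, since each type's contribution to $\Phi$ is affected only by swaps within that type.
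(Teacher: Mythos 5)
Your proposal is correct and follows essentially the same route as the paper's proof: start from an arbitrary outcome, repeatedly execute weak same-type exchange deviations, and use the fact that same-type swaps leave all room fractions unchanged so that the swapping agents weakly improve (one strictly) while everyone else is unaffected, giving termination within $ns$ swaps. Your explicit integer-valued potential function $\Phi$ is just a formalization of the paper's informal counting argument, so the two proofs are substantively identical.
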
 
\begin{proof}
	To compute a (strongly) same-type-exchange stable outcome, 
	we start at an arbitrary outcome $\pi$ and swap pairs who have a 
	(weak) same-type-exchange deviation 
	until this is no longer possible. To see that this procedure always 
	terminates, note that same-type exchange deviations do not change the fraction 
	of red agents in any coalition. Thereby, nothing changes for agents 
	who are not involved in the swap, 
	while both agents involved in the swap 
	weakly improve and at least one of them strictly improves. 
	As every agent's preference relation is defined over $s+1$ elements, 
	the total number of swaps is at most $ns$.
\end{proof}

%%%%%%%%%%%%%%%%%%%%%%%%%%%%%%%%%%%%%%%%%%%%%%%%%%%%%%%%%%%%%%%

\subsection{Unrestricted Swaps}
Unfortunately, the existence guarantee for same-type swaps
does not extend to unrestricted swaps.

\begin{theorem} \label{th:SSSP}
	An instance of the roommate diversity problem may fail to have an exchange stable outcome,
	even if the preferences are single-peaked and 
	no indifferences in the preferences are allowed. 
\end{theorem}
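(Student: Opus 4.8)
The plan is to exhibit a single explicit instance $G$ with a constant room size $s$ and a small number of red and blue agents, all of whose preferences are strict and single-peaked, and to show by exhaustive case analysis that \emph{every} outcome of $G$ admits an exchange deviation. This mirrors the counterexample strategy of Theorem~\ref{th:exCore}, but since the set of admissible deviations is now restricted to swaps, the instance must be engineered more carefully.

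The first step is to cut down the space of outcomes. Because every agent's utility depends only on the fraction $\theta(\cdot)$ of her room, I would group outcomes by their \emph{fraction profile}, i.e., the multiset $\{\theta(C_1),\dots,\theta(C_k)\}$ of room fractions, equivalently by the number of red agents assigned to each room. For a fixed profile there are only finitely many ways to distribute the agents, and what matters for detecting a swap is only which \emph{preference type} (peak) sits in a room of which fraction. Two elementary moves drive the analysis: a cross-type swap between a red agent in a room of fraction $a$ and a blue agent in a room of fraction $b$ sends the red agent to fraction $b+\tfrac1s$ and the blue agent to fraction $a-\tfrac1s$, whereas a same-type swap leaves all fractions unchanged and merely exchanges the two agents' fractions (so it is profitable exactly when two equal-colored agents sit in rooms of different fractions and each strictly prefers the other's fraction). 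For each fraction profile I would then name one concrete profitable swap of one of these two kinds, proving that no outcome with that profile is exchange stable.

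The heart of the construction, and the main obstacle, is choosing the preferences so that every profile is covered while each preference relation remains strict and single-peaked. A single agent can never profit both from raising and from lowering the fraction of her room across the same step $\tfrac1s$, so a two-profile instance in which the only destabilizing swaps are exact reverses of one another is impossible: it would force some agent to strictly prefer a fraction $x$ to $y$ and $y$ to $x$. Consequently the instance must contain several agents of each color with \emph{different} peaks, arranged so that in every fraction profile at least one agent is ``pointing away'' from her current fraction and has a partner willing to trade. Single-peakedness constrains this orchestration, since it fixes, for each agent, the direction in which she is willing to move on each side of her peak.

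The remaining work is bookkeeping. For each profile I must verify that the claimed swap is genuinely available, and, more importantly, that no placement of the agents within that profile escapes all swaps; this is where the same-type swaps have to be tracked with care, since whether they fire depends on which specific agents share which fraction. Finally I would confirm single-peakedness of each listed preference order over $D=\{0,\tfrac1s,\dots,1\}$ directly from its peak. I expect the bookkeeping over placements, together with simultaneously maintaining single-peakedness and strictness, to be the delicate part, whereas the reduction to fraction profiles keeps the number of cases small and constant.
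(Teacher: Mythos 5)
Your high-level strategy is exactly the one the paper uses: exhibit one concrete instance with strict single-peaked preferences, partition its outcomes by fraction profile (equivalently, by how many red agents sit in each room), and for each profile name a profitable swap, distinguishing cross-type swaps (which shift fractions by $\tfrac1s$) from same-type swaps (which exchange fractions). Your structural observations are also sound --- in particular the remark that no two-profile instance can be killed by a pair of mutually inverse swaps, since that would force some agent to strictly prefer $x$ to $y$ and $y$ to $x$; this correctly explains why several peaks per color are needed.

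However, there is a genuine gap: you never actually produce the instance. The theorem is an existence claim about a counterexample, so the construction \emph{is} the proof; everything else is scaffolding. Your proposal repeatedly defers the only hard step --- ``choosing the preferences so that every profile is covered'' and ``the bookkeeping over placements'' --- which is precisely where all the mathematical content lies, and where a candidate instance could fail (e.g., some profile might admit a placement in which the intended same-type swap does not fire because the two agents who would trade happen to share a room). For comparison, the paper resolves this with a $9$-agent instance with room size $3$: agents $r_1,r_2$ have peak $1$ (order $1 \succ \tfrac23 \succ \tfrac13 \succ 0$), agents $r_3,r_4,r_5$ have peak $\tfrac23$ (order $\tfrac23 \succ \tfrac13 \succ 1 \succ 0$), agent $b_1$ has peak $0$, and $b_2,b_3,b_4$ have peak $\tfrac13$ (order $\tfrac13 \succ \tfrac23 \succ 0 \succ 1$); the case analysis then splits on whether a purely red room exists, and in each case identifies a swap that is available no matter how agents are placed within the profile. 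Until you supply a concrete instance and carry out that per-placement verification, you have a correct plan but not a proof.
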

\begin{proof}
	The following single-peaked roommate diversity game  
	$G=(\{r_{1},r_{2},r_{3},r_{4},r_{5}\},$\\ $\{b_{1},b_{2}, b_{3}, b_{4}\},
	3,(\succsim_{i})_{i\in N})$ with:
	\begin{alignat*}{2}
	r_{1},r_{2} & : \frac33\succ \frac{2}{3} \succ \frac{1}{3} \succ \frac03; \text{ }
	r_{3},r_{4},r_{5} && : \frac{2}{3} \succ \frac{1}{3} \succ \frac33 \succ \frac03; \\
	b_{1} & : \frac03 \succ \frac{1}{3} \succ \frac{2}{3} \succ \frac33;  \text{ }
	b_{2}, b_{3}, b_{4} && : \frac{1}{3}\succ \frac{2}{3} \succ \frac03 \succ\frac33.
	\end{alignat*}
	does not admit an exchange-stable outcome.
	
	We will first argue that there is no exchange stable outcome
	containing a purely red coalition $S$. Indeed, if such an outcome exists, 
	either (i) both of the remaining red agents are in the same coalition, or
	(ii) the remaining red agents are in two different coalitions. 
	
	In case (i), there also exists a purely blue coalition, which we denote by 
	$T$.
	At least one of $r_{3},r_{4}$ and $r_{5}$ has to be in $S$, and
	at least one of $b_{2},b_{3}$ and $b_{4}$ has to be in $T$;
	thus, at least one agent in $S$ and at least one agent in $T$
	would like to swap places.
	
	In case (ii), as $b_{1}$ strictly prefers $0$ to $\frac{1}{3}$, she 
	would like to swap with the red agent 
	in the other mixed coalition. As all red agents prefer $\frac{2}{3}$ to 
	$\frac{1}{3}$, this red agent would be happy to swap with $b_1$, too.
	
	It remains to argue that there is no exchange stable outcome with no purely 
	red coalition.
	In this case, there exist two mixed coalitions of fraction $\frac{2}{3}$ 
	and one mixed coalition of fraction $\frac{1}{3}$. Consequently, at least 
	one of 
	$r_{1}$ and $r_{2}$ belongs to a coalition of fraction $\frac{2}{3}$;
	this agent would like to swap with the blue agent in the 
	other mixed coalition of fraction $\frac{2}{3}$. As all blue agents prefer 
	$\frac{1}{3}$ to $\frac{2}{3}$, this blue agent 
	would be happy with the swap as well, so no such outcome
	can be exchange stable.
\end{proof}

\noindent Further, the associated existence problem for strong exchange stability is NP-complete.
\begin{theorem} \label{th:SSS}
	It is {\em NP}-complete to decide whether a given instance of 
	the roommate diversity problem admits 
	a strongly exchange stable outcome. 
	The hardness result still holds if
	the preferences are dichotomous and every agent approves at most four fractions. 
\end{theorem}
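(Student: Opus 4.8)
The plan is to first establish membership in NP and then prove hardness by reducing from an NP-complete stability-existence problem for anonymous hedonic games with dichotomous preferences and boundedly many approved sizes, in the spirit of \citet{ballester2004np} and \citet{peters2016complexity}. For membership, observe that a strongly exchange stable outcome is its own certificate: given an outcome $\pi$, I iterate over all pairs of agents $i,j$ of opposite type lying in distinct rooms, compute the two fractions that would result from swapping them, namely $\frac{c+1}{s}$ for the red agent and $\frac{a-1}{s}$ for the blue agent (where the relevant rooms contain $c$ and $a$ reds), and check whether this is a weak exchange deviation. Since same-type swaps leave every fraction unchanged and hence can never yield a strict improvement, only opposite-type pairs must be examined, and the whole test runs in time $O(n^2)$.

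For the reduction I would follow the template of Theorem~\ref{th::Roo_Core_NPc}. Given an anonymous game on $n$ agents, I set the room size to $s=n$ and introduce one red agent $r_i$ per game agent $i$, letting $r_i$ approve the fraction $\frac{\ell}{s}$ exactly when $i$ approves coalition size $\ell$; this keeps the number of approved fractions bounded by four. All remaining agents are blue dummies whose sole role is to pad rooms to size $s$; as in Theorem~\ref{th::Roo_Core_NPc} I introduce them in large quantity (e.g.\ $s^2$ per relevant fraction) so that a surplus is always available and leftover dummies can be collected into pure blue rooms. A game partition then corresponds to a room outcome in which a coalition of size $\ell$ becomes a room of fraction $\frac{\ell}{s}$, i.e.\ a room with $\ell$ reds and $s-\ell$ blues.

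The key structural observation is that meaningful swaps coincide with single-agent relocations. A swap between a red $r_i$ in a room with $a$ reds and a blue dummy in a room with $c$ reds sends $r_i$ into a room of fraction $\frac{c+1}{s}$ and lowers the abandoned room to $\frac{a-1}{s}$, exactly mirroring agent $i$ leaving a size-$a$ coalition to join a size-$c$ coalition, and $r_i$ strictly benefits precisely when $i$ would. Same-type swaps are harmless as noted above. Thus, if the dummies can be arranged so that (a) a willing blue partner is always available whenever a red relocation mirrors a strictly improving game move, and (b) no blue dummy ever has a profitable swap of its own when the underlying partition is Nash stable, then $\pi$ is strongly exchange stable if and only if the corresponding game partition is Nash stable; the option of an agent forming a new small coalition is represented by swapping a red into one of the abundant pure blue rooms.

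The delicate part is precisely the design of the blue dummies' approval sets, because here -- unlike in the group-deviation reduction of Theorem~\ref{th::Roo_Core_NPc} -- every blue agent is itself a potential initiator of a swap. A weak exchange deviation needs one agent to improve strictly and the other not to worsen, so I must calibrate the dummies so that at any configuration corresponding to a Nash-stable partition neither a red nor a blue can find such a partner, while at configurations corresponding to a non-stable partition the mirror of the improving game move remains available. Achieving this simultaneously for red-initiated and blue-initiated swaps, while keeping every dummy within the budget of four approved fractions, is the main obstacle; I would expect to need several dummy ``types'' per fraction with carefully chosen one- or two-element approval sets, together with the surplus argument of Theorem~\ref{th::Roo_Core_NPc} to guarantee that the required partners exist and that the leftover dummies sitting in pure rooms stay inert. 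The remaining checks -- that same-type swaps cause no trouble and that the dummy supply never runs short -- are routine.
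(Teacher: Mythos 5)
Your key structural claim---that ``same-type swaps leave every fraction unchanged and hence can never yield a strict improvement''---is false, and the error is not cosmetic. In a same-type swap the two \emph{room} fractions are indeed unchanged, but the two agents trade rooms, so each agent ends up with the \emph{other} room's fraction: a red agent in a room of fraction $\frac{a}{s}$ who swaps with a red agent in a room of fraction $\frac{c}{s}$ now enjoys $\frac{c}{s}$, and this can be a strict improvement. (This is precisely why the paper's notion of same-type-exchange stability, Theorem~\ref{th::Room_SSwap}, is non-trivial.) For NP membership this only costs you a slightly larger but still polynomial check, but in the reduction it opens a real hole: Nash stability of the anonymous game constrains agent $i$'s preferences for sizes $|P_j|+1$ and $1$, not for $|P_j|$ itself, so it does \emph{not} rule out two red agents wanting to trade places. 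Consequently the outcome you build from a Nash stable partition can fail to be strongly exchange stable because of red--red swaps, and your forward direction breaks. The paper closes this gap with a step you are missing: starting from the naive outcome it runs weak same-type-exchange dynamics to convergence (termination follows from the potential argument of Theorem~\ref{th::Room_SSwap}), and then observes that since these swaps only weakly improve every red agent, the inequalities supplied by Nash stability persist, so no red--blue deviation is created in the final outcome.

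The second gap is the one you yourself flag as ``the main obstacle'': the design of the blue dummies. Your instinct to reuse the fraction-specific approval sets of Theorem~\ref{th::Roo_Core_NPc} (blue agents preferring $\frac{\ell}{s}$ to $0$) is exactly what would go wrong, since such blues can strictly improve and hence initiate deviations. The paper's resolution is much simpler than the multi-type calibration you anticipate: introduce $s^2-s$ blue agents who are \emph{indifferent among all fractions} (in the dichotomous variant of \citet{peters2016complexity}, they approve nothing). Then no blue agent can ever be the strict mover in a weak exchange deviation, yet every blue agent is always a willing partner, so a red agent's weak exchange deviation with a blue agent in another room (or in a pure blue room, which encodes forming a singleton) corresponds exactly to a Nash deviation of the underlying agent. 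This one design choice simultaneously discharges both of your conditions (a) and (b) and immediately yields the backward direction as well.
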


\begin{proof}
	To show membership, note that it is possible to verify in polynomial time 
	whether an outcome is strongly exchange stable by iterating over all pairs 
	of agents and checking whether they have a weak exchange-deviation.

	To show hardness, we reduce from the problem of deciding 
	whether an anonymous hedonic game admits a Nash stable 
	outcome~\citep{ballester2004np};
	recall that an outcome of a hedonic game is said to be {\em Nash stable}
	if no agent wants to move from her current coalition to another existing 
	coalition
	or to form a singleton coalition. Our reduction is similar to the 
	one in the proof of Theorem~\ref{th::Roo_Core_NPc};
	however, here, we introduce $n^2-n$ blue agents who are indifferent among 
	all fractions. Thereby, each time an agent has a Nash deviation in the 
	anonymous game,
	the corresponding red agent has a weak exchange-deviation with a blue 
	agent 
	in the
	corresponding outcome of the constructed roommate diversity problem and the 
	other way round.

	\textbf{Construction: }Given an anonymous game $G=(N, (\succ_{i})_{i\in 
	N})$, the room size $s$ in the corresponding roommate diversity problem is 
	set to $|N|$. Moreover, for each agent $i\in N$, a red agent $r_{i}$ is 
	introduced whose preference relation is defined by: $$ 
	\frac{j}{s}\succsim_{r_{i}} \frac{j'}{s} \text{ iff } j\succsim_{i} j', 
	\forall j,j'\in [s].$$ Further, for all $i\in [s^{2}-s]$, a blue agent 
	$b_{i}$ who is indifferent among all fractions is introduced. 
	
	\textbf{Correctness: }$(\Rightarrow)$ We start by assuming that $G$ admits 
	a Nash stable outcome $\pi=\{P_{1},\dots,P_{q}\}$. From $\pi$, we construct 
	a strongly exchange stable outcome of the corresponding roommate diversity 
	problem. First of all, for all $\ell \in [q]$, we insert one coalition 
	$P'_{\ell}$ consisting of all red agents corresponding to the agents in 
	$P_{\ell}$ and $s-|P_{\ell}|$ blue agents into $\pi'$: $P'_{\ell}=\{r_{i} : 
	i\in 
	P_{\ell}\}\cup \{b_{i} : i\in [s(\ell-1)+1,s\ell-|P_{\ell}|]\}$. The 
	remaining blue 
	agents are put into pure coalitions of size $s$. $\pi'$ is not guaranteed 
	to be strongly exchange stable, as there might exist a pair of red agents 
	with an exchange-deviation. However, as $\pi$ is Nash stable, for all $i\in 
	[n]$ it holds that:
	$$\theta(\pi'(r_{i}))\succsim_{r_{i}}\frac{1}{s} \text{ and }\forall j\in 
	[q]\setminus\{\pi'(r_{i}) 
	\}:\theta(\pi'(r_{i}))\succsim_{r_{i}}\frac{|P_{j}|+1}{s}.$$
	To construct a strongly exchange stable outcome from $\pi'$, we execute 
	weak same-type-exchange-deviations as long as there exists a pair with a 
	weak 
	same-type-exchange-deviation to obtain an outcome $\pi''$.\footnote{As 
	discussed in 
	Theorem~\ref{th::Room_SSwap}, this process is always guaranteed to 
	terminate.} It remains to 
	show that no two agents of a different type have a weak exchange-deviation 
	in 
	$\pi''$: recall that by performing weak same-type-exchange-deviations every 
	agent's situation can only be improved. Consequently, in $\pi''$, it holds 
	for all $i\in [n]$ that: 
	$\theta(\pi''(r_{i}))\succsim_{r_{i}}\theta(\pi'(r_{i}))$. Applying this to 
	the equation from above, it follows that no red agent has an incentive to 
	swap with a blue agent in $\pi''$. Consequently, $\pi''$ is strongly 
	exchange stable. 
	
	($\Leftarrow$) To prove the other direction, let us assume that the 
	roommate diversity problem admits a strongly exchange stable outcome 
	$\pi=\{P_{1},\dots,P_{\ell}, \dots, P_{k}\}$ where  w.l.o.g. the first 
	$\ell$ coalitions contain red agents, while the remaining coalitions are 
	purely blue coalitions. Let $\pi'=\{P^{'}_{1},\dots,P^{'}_{\ell}\} $ with 
	$P^{'}_{j}=\{i : r_{i}\in P_{j}\}$ for all $j\in [\ell]$ be an outcome of 
	the anonymous game. To prove that $\pi'$ is Nash stable, note that in 
	$\pi$, for all $i\in N$, there always exists at least one blue agent in 
	every coalition $r$ is not part of. As $\pi$ is strongly exchange stable, 
	from this and the fact that each blue agent is indifferent among all 
	coalitions it follows that for all $i\in [n]$ it holds that: $$
	\theta(\pi(r_{i}))\succsim_{r_{i}}\frac{1}{s} \text{ and }\forall j\in 
	[\ell]\setminus\{\pi(r_{i})\}:\theta(\pi(r_{i}))\succsim_{r_{i}}\frac{|P'_{j}|+1}{s}.
	$$ By construction, from this it follows that for all $i\in [n]$ it holds 
	that: $$\theta(\pi'(i))\succsim_{i} 1 \text{ and }  \forall j\in 
	[\ell]\setminus\{\pi'(i)\}:\theta(\pi'(i))\succsim_{i}|P'_{j}|+1,$$ which 
	implies that $\pi'$ is Nash stable. 
	
	It is also possible to show that the proven NP-hardness result still holds 
	if the agents' preferences are dichotomous and every agent approves at most 
	four fractions by reducing from the related problem for anonymous games, 
	which is NP-complete as proven by \citet{peters2016complexity}. The 
	construction of the reduction needs to be slightly adapted: for all red 
	agents, the set of approved fractions is the transformed set of the 
	corresponding agent in the anonymous game, and all blue agents do not 
	approve any fraction.  
\end{proof}

We believe that deciding whether a roommate diversity problem admits an exchange stable outcome
is NP-complete as well, but we were unable to extend the proof of Theorem~\ref{th:SSS} 
to show this.

%%%%%%%%%%%%%%%%%%%%%%%%%%%%%%%%%%%%%%%%%%%%%%%%%%%%%%%%%%%%%%%%%%%%%%%%%%%%

\section{Pareto Optimality}
In the roommate problem, Pareto optimality emerges as a natural notion of 
stability. Indeed, an outcome is not Pareto optimal if and only if we 
can rearrange the agents so that all of them are weakly better off and at least some
of them are strictly better off, i.e., there is a weakly improving 
deviation by the grand coalition \citep{morrill2010roommates,elkind2016price}. 

While for many other stability concepts we consider 
stable outcomes are not guaranteed to exist, by the definition of Pareto optimality, every instance of the roommate diversity
problem admits a Pareto optimal outcome. Indeed, we can start at an arbitrary outcome
and perform a sequence of at most $ns$ Pareto improvements, i.e., rearrangements of the agents
that make all agents weakly better off and some agents strictly better off. 
However, it is still computationally hard to compute a Pareto optimal outcome, as finding a Pareto improvement is difficult.

\begin{theorem}\label{th:Rea}
	For the unrestricted roommate diversity problem, 
        it is {\em coNP}-complete to decide whether a given outcome 
	is Pareto optimal; 
	moreover, we cannot compute a Pareto optimal outcome 
	in polynomial time unless {\em P=NP}. These results hold even if preferences 
        are dichotomous.
\end{theorem}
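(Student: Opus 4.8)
The plan is to establish membership first and then obtain both hardness claims from a single reduction. For membership in coNP, I would use the dichotomous structure: in any outcome $\pi$ each agent is either \emph{happy} (the fraction of her room lies in her approved set $D^+$) or \emph{unhappy}, and an outcome $\pi'$ weakly Pareto-dominates $\pi$ exactly when the happy set strictly grows, i.e.\ $H(\pi)\subsetneq H(\pi')$, where $H(\cdot)$ denotes the set of happy agents. Indeed, ``everyone weakly better'' forces every currently happy agent to stay happy ($H(\pi)\subseteq H(\pi')$), while ``someone strictly better'' means some currently unhappy agent becomes happy. A dominating outcome $\pi'$ is therefore a polynomially checkable certificate that $\pi$ is \emph{not} Pareto optimal, placing the verification problem in coNP.

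For hardness I would reduce from \textsc{3-Sat}, reusing the red-agent/blue-filler mapping of Theorem~\ref{th::Roo_Core_NPc}: since a room only exposes its red-count $\ell\in[0,s]$, Boolean information must be encoded through the scarcity and placement of red agents, just as count constraints are encoded for anonymous hedonic games. The target is a dichotomous instance $I(\phi)$ equipped with variable gadgets, clause gadgets, and a single designated \emph{flag} agent $a$, built so that (i) for every truth assignment there is an easily computable outcome in which all gadget agents are happy but $a$ is unhappy, and (ii) $a$ can additionally be made happy --- equivalently, the instance admits a \emph{perfect} outcome in which \emph{all} agents are happy --- if and only if the chosen assignment satisfies every clause, i.e.\ iff $\phi$ is satisfiable. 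The blue filler agents supply the slack needed to realise the required red-counts in the remaining rooms.

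Given this construction both claims follow. For verification, let $\pi$ be the outcome from (i) for an arbitrary assignment; as $a$ is the unique unhappy agent, any weak Pareto improvement must keep every gadget agent happy \emph{and} satisfy $a$, which by (ii) is possible iff $\phi$ is satisfiable. Hence $\pi$ is Pareto optimal iff $\phi$ is unsatisfiable, giving coNP-hardness and thus coNP-completeness. For computation, observe that whenever a perfect outcome exists it weakly dominates every non-perfect outcome (all agents are happy in it, and any agent unhappy elsewhere strictly improves), so \emph{every} Pareto optimal outcome of $I(\phi)$ is perfect exactly when $\phi$ is satisfiable. A hypothetical polynomial-time algorithm returning some Pareto optimal outcome, followed by a linear scan checking whether all agents are happy, would then decide satisfiability; hence no such algorithm exists unless P\,=\,NP.

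I expect the main obstacle to be the gadget design underlying (i)--(ii). Because a room conveys only its red-count, a signal has to be transmitted by how many red agents of a given ``type'' are consumed in one part of the instance and thereby rendered unavailable elsewhere, all while every preference relation stays dichotomous and, crucially, the \emph{almost-satisfiable} property is preserved: all agents except the flag can always be made happy. The technical care goes into verifying that each variable gadget admits both a fully-happy ``true'' and a fully-happy ``false'' configuration, that the clause gadgets correctly aggregate these into the satisfaction of $a$, and that no unintended rearrangement enlarges $H(\pi)$; the filler agents of Theorem~\ref{th::Roo_Core_NPc} provide the counting flexibility that makes this encoding possible.
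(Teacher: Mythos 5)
Your coNP-membership argument is fine (indeed, for the unrestricted problem one simply guesses a dominating outcome $\pi'$ and verifies it; your happiness-set characterization $H(\pi)\subsetneq H(\pi')$ is a correct specialization to the dichotomous case, though note the completeness claim is stated for unrestricted preferences, so the certificate argument should be phrased without it). The logical skeleton you build on top of properties (i) and (ii) is also sound: if such an instance $I(\phi)$ existed, the outcome from (i) would be Pareto optimal iff $\phi$ is unsatisfiable, and since a perfect outcome weakly dominates every non-perfect one under dichotomous preferences, any polynomial-time Pareto-optimal-outcome algorithm plus a happiness scan would decide satisfiability.

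However, there is a genuine gap: the entire hardness argument is conditioned on a gadget construction --- variable gadgets, clause gadgets, and a flag agent satisfying (i) and (ii) --- that you never actually build, and you yourself flag this as ``the main obstacle.'' This is not a routine detail. Because diversity preferences are anonymous within a type (a room exposes only its red-count, and all agents of one type with the same preference relation are interchangeable), transmitting Boolean structure through scarcity of red agents while keeping \emph{every} non-flag agent simultaneously satisfiable for \emph{every} assignment is exactly the hard combinatorial core of such a proof; without it, nothing has been shown. The paper avoids this difficulty entirely: it reduces not from \textsc{3-Sat} but from the already-known results of Aziz et al.\ that verifying Pareto optimality in \emph{anonymous hedonic games} is coNP-complete and computing a Pareto optimal outcome there is NP-hard (both holding for dichotomous preferences). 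The reduction is then the lightweight embedding of Theorem~\ref{th:SSS}: one red agent per original agent, room size $s=|N|$, and blue filler agents indifferent among all fractions; Pareto dominance transfers in both directions because blue agents can be stripped out of any dominating outcome. If you want a self-contained proof along your lines, you would essentially have to reproduce (inside the diversity framework) the gadgetry that Aziz et al.\ already built for anonymous games --- so either supply that construction in full, or restructure your reduction to start from their problem as the paper does.
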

\begin{proof}
	To show that an outcome is not Pareto optimal, it suffices to guess a Pareto
	improvement and verify that it indeed makes all agents weakly better off and some
	agents strictly better off; this establishes that our decision problem is 
	in coNP.
	
	For both hardness results, we utilize the fact that \citet{aziz2013pareto} 
	proved that deciding whether an outcome of an anonymous game is Pareto 
	optimal is coNP-complete and computing a Pareto optimal outcome is NP-hard. 
	Note that their proofs also hold for the case where agents' preferences are 
	dichotomous and every agent approves at 
	most four sizes. 
	
	To show hardness of verifying whether a given outcome is Pareto optimal, we 
	construct a reduction from the related problem for anonymous 
	games. The reduction works analogously to the reduction in the proof of 
	Theorem~\ref{th:SSS}. To prove the correctness of the reduction, let us 
	assume that 
	$\pi=\{P_{1},\dots,P_{q}\}$ is a Pareto optimal outcome of the original 
	anonymous game. Let  $\pi'=\{P'_{1},\dots,P'_{k}\}$ be the outcome of the 
	corresponding roommate diversity problem where all agents are replaced by 
	their corresponding red agent and free spots are filled with blue agents. 
	Then, $\pi'$ needs to be Pareto optimal, as every outcome that weakly 
	Pareto dominates $\pi'$ can be converted into an outcome that weakly Pareto 
	dominates $\pi$ by removing all blue agents and replacing all red agents by 
	their corresponding agent in the anonymous game. This reasoning also holds 
	for the other direction.
	
	Secondly, computing a Pareto optimal outcome in a roommate diversity 
	problem is NP-hard: it is possible to apply the reduction from 
	Theorem~\ref{th:SSS} to transform every anonymous game into a corresponding 
	roommate 
	diversity problem. Assuming that we have found a Pareto optimal outcome of 
	this roommate diversity problem, it is possible to obtain a Pareto optimal 
	outcome of the original anonymous game as described above. 
\end{proof}

%%%%%%%%%%%%%%%%%%%%%%%%%%%%%%%%%%%%%%%%%%%%%%%%%%%%%%%%%%%%%%%%%%%%%%%%%%%%%%%%%%%%%%%%%
\section{Envy-Freeness}
We can think of envy-freeness as a ``one-sided'' version of exchange stability. 
Thus, similarly to same-type-exchange stability, we can define same-type-envy-freeness
by only considering envy among agents of the same type. 
Same-type-envy-freeness is a plausible variant of envy-freeness, 
as people tend to envy those who are similar to them \citep{salovey1984some}.
Moreover, same-type-envy-freeness 
is also an appealing notion of fairness: if agent $i$ and agent $j$
are of the same type and $i$ envies $j$, swapping $i$ and $j$ has no effect
on other agents, so the decision which of these agents should get a better set
of roommates is essentially arbitrary.
Unfortunately, an outcome that is fair in this sense is not guaranteed to exist. 

\begin{theorem} \label{th::Room_Envy}
	There exists an instance of the roommate diversity problem with room size two 
	that has no same-type-envy-free outcome and thereby also no envy-free outcome.
	Moreover, in this instance the agents' preferences are 
	single-peaked and dichotomous.
\end{theorem}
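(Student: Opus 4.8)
The plan is to exploit the scarcity of mixed rooms created by a parity obstruction. For room size two the only admissible room fractions are $0$ (pure blue), $\frac12$ (one red and one blue), and $1$ (pure red). The crucial observation is that a swap between two agents of the same type never changes the fraction of the room being entered: if a blue agent $b_j$ takes the place of another blue agent $b_i$ in a mixed room, the room stays mixed. Hence a blue agent sitting in a fraction-$0$ room envies a blue agent in a fraction-$\frac12$ room precisely when the former strictly prefers $\frac12$ to $0$. So to rule out same-type-envy-freeness among the blue agents it suffices to force some blue agent into a fraction-$0$ room while another blue agent occupies a fraction-$\frac12$ room, with the first agent preferring $\frac12$.

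First I would fix the agent counts so that this tension is unavoidable. Take one red agent and three blue agents, so $n=4$ and there are two rooms. Since a single red agent cannot form a pure-red pair, she must share a room with a blue agent; the outcome is therefore forced, up to relabeling, to consist of one mixed room and one pure-blue room, regardless of preferences. Thus exactly one blue agent can occupy the unique mixed slot, and the other two are necessarily placed together in a fraction-$0$ room. Next I would assign preferences so that every blue agent covets the scarce mixed slot: give all four agents the dichotomous preference with approved set $D^+=\{\frac12\}$ (peak at $\frac12$); in particular each blue agent has $\frac12\succ 0$. In any outcome the two blue agents in the pure-blue room sit at fraction $0$ while the third sits at fraction $\frac12$, and since each pure-room blue strictly prefers $\frac12$ to $0$ and replacing the mixed blue by herself keeps the room mixed, she envies the blue agent in the mixed room. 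As this holds for every choice of which blue agent is lucky, no outcome is same-type-envy-free, and hence (since envy-freeness implies same-type-envy-freeness) none is envy-free.

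The main obstacle, and the point I would treat most carefully, is arguing that \emph{no} rearrangement escapes the envy rather than merely checking a single outcome. Here this is clean because the parity constraint (one red agent) pins down the room structure completely, reducing the analysis to the three symmetric choices of which blue agent takes the mixed slot. A secondary technical point is confirming that the intended strict preference $\frac12\succ 0$ is realizable by a \emph{single-peaked} dichotomous relation over $\{0,\frac12,1\}$: this forces $D^+=\{\frac12\}$, since a single-peaked dichotomous relation must have a contiguous interval of approved fractions, so one cannot, for instance, approve both $0$ and $1$ while disapproving $\frac12$. With $D^+=\{\frac12\}$ the relation is dichotomous by construction and single-peaked with peak $\frac12$, so the instance meets all the requirements of the statement.
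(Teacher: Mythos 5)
Your proof is correct and takes essentially the same approach as the paper: the identical instance structure (one red agent, three blue agents, room size two), where the lone red agent forces every outcome to consist of exactly one mixed pair and one pure-blue pair. The only difference is a mirror image in the preferences --- the paper's blue agents prefer fraction $0$ to $\frac{1}{2}$ (so the mixed-room blue agent envies the pure-room blues), whereas yours prefer $\frac{1}{2}$ to $0$ (so the pure-room blues envy the mixed-room blue agent); both yield the same unavoidable same-type envy, and your direct use of the dichotomous preference with approved set $\{\frac{1}{2}\}$ cleanly handles the single-peaked-and-dichotomous requirement that the paper addresses via a transformation remark.
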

\begin{proof}
	Let $G=(\{r_{1}\},\{b_{1},b_{2},b_{3}\},2,(\succsim_{i})_{i\in N})$ 
	with: 
	$$
	r_{1}: \frac22 \succ \frac{1}{2} \succ \frac02;\qquad b_{1},b_{2},b_{3}: 
	\frac02\succ \frac{1}{2} \succ \frac22.
	$$
	This game is clearly single-peaked and can be transformed into a 
	dichotomous game where all agents disapprove their two bottom fractions 
	and approve their top-fraction. 
	As every outcome of this game consists of one mixed and one purely blue coalition, 
	the blue agent in the mixed coalition always envies the two blue agents in the pure 
	coalition.
\end{proof}

On the positive side, there exist two special cases where the existence 
of a same-type-envy-free outcome is guaranteed.

\begin{theorem}
	A same-type-envy-free outcome is guaranteed to exist if the number of red agents 
	is divisible by $s$ or by $k$.
\end{theorem}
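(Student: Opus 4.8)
The plan is to reduce same-type-envy-freeness to a simple condition on room fractions, and then exhibit, in each of the two divisibility cases, an explicit outcome satisfying that condition. The key observation is that swapping two agents of the \emph{same} type does not change the fraction of either room. Concretely, if $i$ and $j$ are both red (the blue case is symmetric) and we move $i$ into $j$'s room in place of $j$, the number of red agents in $\pi(j)$ is unchanged, so $\theta\big((\pi(j)\setminus\{j\})\cup\{i\}\big)=\theta(\pi(j))$. Hence a red agent $i$ envies a red agent $j$ precisely when $\theta(\pi(j))\succ_i\theta(\pi(i))$. It follows that an outcome is same-type-envy-free whenever, for each type, all agents of that type lie in rooms of the same fraction: then no agent can strictly prefer another same-type agent's room to her own, since the two fractions coincide. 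So it suffices to construct, in each case, an outcome in which every red agent sees the same fraction and every blue agent sees the same fraction.

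First I would handle the case $s\mid r$. Since $r+b=ks$, divisibility of $r$ by $s$ forces $s\mid b$ as well, so I can split the red agents into $r/s$ pure red rooms (each of fraction $1$) and the blue agents into $b/s$ pure blue rooms (each of fraction $0$). Every red agent is then in a fraction-$1$ room and every blue agent in a fraction-$0$ room, so the uniform-fraction condition above holds within each type, and the outcome is same-type-envy-free.

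For the case $k\mid r$, set $m=r/k$; since $r\le ks$ we have $m\in[0,s]$. I would place exactly $m$ red agents and $s-m$ blue agents in each of the $k$ rooms. This is feasible because the total number of red agents used is $km=r$ and the total number of blue agents used is $k(s-m)=ks-r=b$. Now every room has fraction $m/s$, so in fact no agent envies any other agent of either type; the outcome is fully envy-free and, a fortiori, same-type-envy-free. I do not expect a genuine obstacle here: once the fraction-invariance observation is in place, both cases reduce to the routine construction of outcomes with the appropriate uniform fractions, and the only things to verify are the integrality and range of $r/s$, $b/s$, and $m=r/k$, all of which follow directly from $r+b=ks$ and $r\le ks$.
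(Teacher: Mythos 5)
Your proof is correct for the stated theorem and follows essentially the same route as the paper: both divisibility cases are handled by constructing an outcome in which all agents of the same type sit in rooms of identical fraction, which by your fraction-invariance observation rules out same-type envy. One caveat: your side claim in the case $k\mid r$ that the outcome is \emph{fully} envy-free is false. Envy-freeness compares $\theta\bigl((\pi(j)\setminus\{j\})\cup\{i\}\bigr)$ with $\theta(\pi(i))$, and when $i$ is red and $j$ is blue this replacement \emph{changes} the fraction: a red agent whose room has $m$ red agents, upon taking a blue agent's place in another room, would land in a room with $m+1$ red agents, i.e., fraction $\frac{m+1}{s}$, and if she prefers $\frac{m+1}{s}$ to $\frac{m}{s}$ she envies that blue agent. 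So the uniform-fraction outcome is same-type-envy-free (which is all the theorem requires) but not in general envy-free; this does not affect the validity of your proof of the statement itself.
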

\begin{proof}
	If $s$ divides $r$, there exists an outcome consisting of pure coalitions only.
	If $k$ divides $r$, there exists an outcome
	where the fraction of each coalition is $\frac{r}{k}$. In either case, 
	all agents of the same type are 
	in coalitions of the same fraction, so no agent envies another agent of her type.
\end{proof}

\noindent Nevertheless, it can be proven by a reduction from \textsc{Exact Cover by 3-Sets} that the general existence question for envy-freeness is NP-complete.

\begin{theorem} \label{th::Room_Envy_NPc}
	It is {\em NP}-complete to decide whether a given instance of 
	the roommate diversity problem admits an envy-free outcome, even 
	if no indifferences in the preference relations are allowed. 
	This hardness result also holds if 
	the agents' preferences are dichotomous 
	and every agent is allowed to approve 
	at most four fractions.
\end{theorem}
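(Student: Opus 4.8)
Membership in NP is the easy part. Given a candidate partition $\pi$, I would verify envy-freeness directly: for every ordered pair $i,j$ with $\pi(i)\neq\pi(j)$, compute the fraction of $(\pi(j)\setminus\{j\})\cup\{i\}$ and test whether $i$ strictly prefers it to $\theta(\pi(i))$. Since each agent's preference depends only on the fraction and there are $O(n^2)$ pairs, this runs in polynomial time, so the existence problem is in NP.

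For hardness I would reduce from \textsc{Exact Cover by 3-Sets}. The structural observation that drives the construction is how a single swap changes a room's fraction. If $i$ and $j$ have the same type, then $\theta((\pi(j)\setminus\{j\})\cup\{i\})=\theta(\pi(j))$, so $i$ envies $j$ exactly when $\theta(\pi(j))\succ_i\theta(\pi(i))$; hence a block of same-type agents sharing one strict single-peaked preference can avoid mutual envy only if every one of them sits in a room of the same, most-preferred realized fraction. A cross-type swap instead shifts the fraction by exactly $\frac1s$ (up when a red replaces a blue, down in the reverse). I would use the first fact to pin down the global structure and the second to police the interaction between the red ``selector'' agents and the blue ``filler'' agents.

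The plan for the reduction is to let the set of realized red-room fractions encode which $3$-sets are chosen, while the supply of blue agents forces the chosen sets to partition the universe. Concretely, I would introduce, for each element and for each $3$-set, a group of red agents with identical strict preferences so that same-type-envy-freeness forces each group into rooms of a single fraction, and I would tune the number of blue filler agents (and their preferences) so that the only feasible fractions correspond to ``set selected'' versus ``set unselected.'' A per-element gadget, built so that its agents are envy-free precisely when the element lies in exactly one selected set, then enforces the exact-cover condition, while a global count on the blue supply enforces that exactly $q$ sets are selected. An envy-free outcome would then exist iff the chosen sets form an exact cover, and conversely an exact cover would yield the intended envy-free partition.

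The main obstacle is designing the gadgets so that they interact only as intended: I must guarantee that no unwanted cross-type envy arises between gadgets (controlled via the $\pm\frac1s$ shift and by seeding every room with enough indifferent or appropriately peaked blue agents), and that same-type envy inside each group forces exactly the target fraction and no spurious alternative. Once the strict-preference reduction works, adapting it to the dichotomous case with at most four approved fractions should follow by replacing each strict single-peaked profile with the approval set consisting of the handful of fractions that the argument actually uses (the target fraction together with the few ``deviation'' fractions), mirroring the dichotomous adaptations already carried out in Theorems~\ref{th::Roo_SC_Npc} and~\ref{th:SSS}.
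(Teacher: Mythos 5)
Your NP-membership argument is fine, and you have correctly identified the paper's reduction source (\textsc{Exact Cover by 3-Sets}) and the right high-level encoding: blocks of red agents with identical strict preferences that envy-freeness forces into rooms of a single fraction, realized fractions encoding which $3$-sets are selected, and blue fillers policing feasibility. But what you have written is a plan, not a proof, and the part you explicitly defer --- ``designing the gadgets so that they interact only as intended'' --- is precisely the entire content of the hardness argument. Nothing in your text specifies the room size, the preference lists, or the agent counts, and neither direction of the correctness equivalence is argued. The paper resolves your ``main obstacle'' with concrete numerical choices that your sketch gives no way to recover: it sets $s=5q+1$ and associates with each $3$-set $A_j$ the two fractions $\frac{5j+1}{s}$ (selected) and $\frac{5j-2}{s}$ (unselected), which are spaced more than $\frac{1}{s}$ apart from each other and from the fractions of all other gadgets; since a cross-type swap shifts a room's fraction by exactly $\frac{1}{s}$, this spacing is what eliminates unwanted cross-type envy between gadgets. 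It then gives gadget $j$ exactly $5j-2$ red ``redundant'' agents, so a room of fraction $\frac{5j+1}{s}$ has space for exactly three element agents of $A_j$, which (together with disjointness of rooms) is what makes the selected sets an \emph{exact} cover.

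A second missing idea is more serious than gadget bookkeeping: in the backward direction you must argue that in any envy-free outcome every red element agent actually sits in a room whose fraction is among her top choices; otherwise the realized fractions need not encode a cover at all. The paper forces this by adding $s$ extra bottom-preference blue agents per red agent, which guarantees that every outcome contains a purely blue room $P$; then a red agent not at her peak either prefers $\frac{1}{s}$ to her current fraction and envies a blue agent in $P$, or sits at fraction $\frac{1}{s}$, in which case her blue roommates (who all prefer $0$ to $\frac{1}{s}$) envy the agents in $P$. Your sketch contains no mechanism playing this role, and without one the reduction does not go through. Finally, two pieces of your plan do not correspond to anything a correct construction needs: coverage of each element by \emph{exactly} one selected set falls out of the room structure as above (each set agent lies in exactly one room), and the number of selected sets is determined automatically rather than enforced by a ``global count on the blue supply'' --- indeed selecting exactly $q$ sets, with $q$ the size of the whole collection, is not even the right target. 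So: same approach as the paper, but with the decisive steps missing.
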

\begin{proof}
	To prove membership in NP, note that it is possible  to check in polynomial 
	time whether an outcome is envy-free by iterating over all pairs of agents 
	and checking whether one of them envies the other. 
	
	In the following, we utilize the fact that the problem {\scshape Exact 
	Cover By 3-Sets} 
	({\sc X3C}) is NP-complete \cite{garey2002computers}. An instance of this 
	problem is given 
	by a set $X=\{1,\dots,m\}$ and a collection $C=\{A_{1}, \dots, A_{q}\}$ 
	of 3-element subsets of $X$. It is a yes-instance if there exists a subset 
	$C'\subseteq C$ such that $C'$ is a partition of $X$. 
	
	To prove hardness, we construct a reduction from {\sc X3C}. Let 
	$X=\{1,\dots,m\}$ and $C=\{A_{1}, \dots, A_{q}\}$ be a collection of 
	3-element subsets of $X$. For $i\in X$, let 
	$J^{i}=\{j^{i}_{1},\dots,j^{i}_{m_{i}}\}$ be the set of all indices of sets 
	in $C$ to which $i$ belongs, i.e. $ j\in J^{i} $ if and only if $i\in 
	A_{j}$.  In the following, we construct a corresponding roommate diversity 
	problem. The general idea of the construction is to introduce for each 
	element $i\in X$ a corresponding agent $r_{i}$ such that in an envy-free 
	outcome for all $A_{j}\in C$ there either exists a coalition of fraction 
	$\frac{5j+1}{s}$ including the three agents corresponding to the elements 
	in $A_{j}$ or a coalition of fraction $\frac{5j-2}{s}$. 
	
	\textbf{Construction: }To begin with, let us set $s=5q+1$ and let us 
	introduce one red so-called set agent $r_{i}$ for each $i\in X$ with the 
	following preference relation:  
	$$r_{i}: \frac{5j_{1}^{i}+1}{s}\sim_{r_{i}} \dots \sim_{r_{i}} 
	\frac{5j_{m_{i}}^{i}+1}{s}\succ_{r_{i}}\frac{1}{s}\succ_{r_{i}}\dots \text{ 
	}.$$
	Moreover, for each $j\in [q]$, we introduce $5j-2$ red so-called redundant 
	agents  with the following preference relation: 
	$$r_{j}^{p}: \frac{5j+1}{s}\sim_{r_{j}^{p}} 
	\frac{5j-2}{s}\succ_{r_{j}^{p}}\frac{1}{s}\succ_{r_{j}^{p}}\dots, \text{ } 
	\forall p\in [5j-2].$$
	Furthermore, for each $j\in [q]$, we insert $s-(5j+1)$ blue so-called 
	filling agents with the following preference relation: 
	$$b_{j}^{p}: \frac{5j+1}{s}\sim_{b_{j}^{p}} 
	\frac{5j-2}{s}\succ_{b_{j}^{p}}0 \succ_{b_{j}^{p}}\dots , \text{ } \forall 
	p\in [s-(5j+1)].$$
	Additionally, for each $j\in [q],$ we introduce three blue so-called 
	additional agents $\tilde{b}^{1}_{j},\tilde{b}^{2}_{j},\tilde{b}^{3}_{j}$ 
	with the following preference relation: 
	$$\tilde{b}^{1}_{j},\tilde{b}^{2}_{j},\tilde{b}^{3}_{j}: 
	\frac{5j-2}{s}\succ 0 \succ \dots \text{ }.$$
	We further insert blue agents strictly preferring $0$ to all other 
	fractions until the total number of agents is divisible by $s$.    
	Lastly, for each red agent of the game, we introduce $s$ blue agents 
	strictly preferring $0$ to all other fractions.
	
	\textbf{Correctness: }$(\Rightarrow)$ We start by assuming that there 
	exists a partition $\pi \subseteq C$ of $X$. In the following, using $\pi$, 
	we construct an envy-free outcome $\pi'$ of the roommate diversity problem. 
	Therefore, for each $A_{j}\in \pi$, we add one coalition $P_{j}$ consisting 
	of the corresponding three red set agents, the $5j-2$ designated red 
	redundant agents and the $s-(5j+1)$ designated blue filling agents: 
	$P_{j}=\{r_{i} : i\in A_{j}\}\cup \{r_{j}^p : p\in [5j-2] \}\cup \{b_{j}^p 
	: p\in [s-(5j+1)] \} $. Moreover, for each $A_{j}\notin \pi$, we create a 
	coalition $P_{j}$ consisting of the designated $5j-2$ red redundant agents 
	together with the designated $s-(5j+1)$ blue filling agents and the 
	designated three additional blue agents: 
	$P_{j}=\{\tilde{b}^{1}_{j},\tilde{b}^{2}_{j},\tilde{b}^{3}_{j}\}\cup 
	\{r_{j}^p : p\in [5j-2] \}\cup \{b_{j}^p : p\in [s-(5j+1)] \}$. Finally, we 
	put the remaining blue agents into purely blue coalitions. $\pi'$ consists 
	of purely blue coalitions and $q$ mixed coalitions of fraction 
	$\frac{5j+1}{s}$ or $\frac{5j-2}{s}$ for $j\in [q]$. 
	
	We prove that $\pi'$ is envy-free by iterating over all agents and showing 
	that they do not envy another agent: as $\pi$ is a partition, all red set 
	agents are in one of their most preferred coalitions. Moreover, by 
	construction, all redundant agents and all filling agents  are in one of 
	their most preferred coalitions.

	Concerning the additional agents, for all $j\in [q]$ such that $A_{j}\notin 
	\pi$, $\tilde{b}^{1}_{j},\tilde{b}^{2}_{j},\tilde{b}^{3}_{j}$ are in their 
	top coalition. For all $j\in [q]$ such that $A_{j}\in \pi$, 
	$\tilde{b}^{1}_{j},\tilde{b}^{2}_{j},\tilde{b}^{3}_{j}$ are in a purely 
	blue coalition. However, as $A_{j}\in \pi$, there does not exist a 
	coalition of fraction $\frac{5j-2}{s}$. As there also does not exist a 
	coalition within distance $\frac{1}{s}$ to this fraction, 
	$\tilde{b}^{1}_{j},\tilde{b}^{2}_{j}$ and $\tilde{b}^{3}_{j}$ do not envy 
	another agent for all $j\in [q]$.
	
	As all blue agents strictly preferring a purely blue coalition to all other 
	coalitions are in a pure coalition, no agent envies another agent in 
	$\pi'$. 
	
	$(\Leftarrow)$ Let us assume that there exists an envy-free outcome $\pi$ 
	of the constructed roommate diversity problem. In the following, we prove 
	that this implies that there exists a partition $\pi'\subseteq C$ of $X$. 
	First of all, note that due to the introduction of $s$ blue agents for each 
	red agent, every outcome of the game includes at least one purely blue 
	coalition $P\in \pi$. 
	
	Using this, we prove that each red agent is part of one of her most 
	preferred coalitions. For the sake of contradiction, let us assume that 
	there exists a red agent $p$ for which this is not the case. Then, it 
	either holds that $p$ is in a coalition of fraction $\frac{1}{s}$ or in a 
	coalition of a fraction to which she prefers $\frac{1}{s}$. In the former 
	case, all blue agents in $i$'s coalition envy the agents in $P$, as all 
	blue agents prefer $0$ to $\frac{1}{s}$. In the latter case, $p$ envies the 
	agents in $P$. Consequently, every red agent is in one of her most 
	preferred coalitions.
	
	From this it follows that every coalition of fraction $\frac{5j+1}{s}$ for 
	some $j \in [q]$ includes at most $5j-2$ red redundant agents  and thereby 
	exactly three set agents $r_{i}$ with $i\in A_{j}$. By removing all non-set 
	agents from $\pi$, it is possible to obtain a partition $\pi'\subseteq C$ 
	of $X$.
	
	Note that the reduction still holds if no indifferences in the agents' 
	preferences are allowed. It is possible to replace the indifferences in the 
	agents' preference relations by strict preferences in an arbitrary way 
	without influencing the validity of the proof. The second part of the proof 
	remains  unaffected, while the outcome $\pi'$ constructed in the first part 
	is still envy-free, as for all $j \in [q]$ only either a coalition of 
	fraction $\frac{5j-2}{s}$ or of fraction $\frac{5j+1}{s}$ exists and for 
	all $j,j'\in [q]$ it holds that $|\frac{5j+1}{s}-\frac{5j'-2}{s}|> 
	\frac{1}{s}$. 
	
	Note further that it is also possible to transform every {\sc X3C} instance
	into a corresponding dichotomous roommate diversity problem. To do so, the 
	construction given above needs to be only slightly adapted: for all agents, 
	their set of approved fractions consists of all explicitly listed 
	fractions.  
\end{proof}

We were not able to extend
Theorem~\ref{th::Room_Envy_NPc}
to same-type-envy-freeness, but conjecture that the hardness result still holds. 
If preferences are strict, there exists a simple algorithm 
that solves this problem in time linear in $n$ and single-exponential in $s$, 
i.e., this problem is in FPT with respect to $s$.

\begin{theorem}\label{th::EF}
	Given an instance of 
	the roommate diversity problem with room size $s$ and strict preferences,
	it is possible to check in time $\mathcal{O}(ns2^s)$ whether 
	this instance admits a same-type-envy-free outcome
	and to find one if it exists.
\end{theorem}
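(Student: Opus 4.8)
The plan is to design an FPT algorithm parameterized by the room size $s$. The key structural observation is that with strict preferences over the domain $D=\{\frac{j}{s}:j\in[0,s]\}$, an outcome is same-type-envy-free if and only if all red agents sharing the same most-preferred \emph{achievable} fraction end up in coalitions of that fraction (and symmetrically for blue), because any red agent placed in a coalition whose fraction she strictly disprefers to some other realized fraction would envy a red agent sitting there. So the relevant data about an outcome is captured entirely by the \emph{type vector} $(k_0,k_1,\dots,k_s)$, where $k_j$ is the number of coalitions of fraction $\frac{j}{s}$ (equivalently, containing exactly $j$ red and $s-j$ blue agents). There are at most $k+1\le n+1$ coalitions, so the number of such vectors is polynomial in $n$; but we want single-exponential dependence on $s$, so I would instead enumerate over the $2^{s+1}$ possible \emph{support sets} $T\subseteq[0,s]$ of fractions that are actually used, and for each support decide feasibility.

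\textbf{First}, I would reduce checking envy-freeness to a condition purely on the support $T$ of realized fractions. Fix a candidate set $T=\{t_1,\dots,t_p\}\subseteq[0,s]$ of realized red-counts. Given $T$, an assignment of the red agents is envy-free on the red side precisely when each red agent is placed in a coalition whose fraction is her most preferred fraction among those in $\{\frac{t}{s}:t\in T\}$ that she can be assigned to; and similarly for blue. Since preferences are strict, for each agent her favorite fraction within $T$ is uniquely determined, so envy-freeness forces each agent into coalitions of a single target fraction. This means that, once $T$ is fixed, the number $\rho_t$ of red agents wanting each fraction $\frac{t}{s}$ and the number $\beta_t$ of blue agents wanting it are determined by reading off each agent's top choice restricted to $T$.

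\textbf{Second}, given the demand counts induced by $T$, I would check feasibility as a simple counting/divisibility problem: we need nonnegative integers $k_t$ (the number of coalitions of fraction $\frac{t}{s}$, for $t\in T$) such that $\sum_{t\in T}k_t\cdot t = r$ (total red agents), $\sum_{t\in T}k_t\cdot(s-t)=b$ (total blue agents), each red agent wanting fraction $t$ is absorbed by coalitions of fraction $t$ (so $t\cdot k_t \ge \rho_t$ with the leftover slots filled by agents whose top choice is $t$ but are indifferent—here strictness makes the accounting clean), and symmetrically on the blue side, with $\sum_t k_t=k$. Because $\sum_t k_t = k$ is fixed and $|T|\le s+1$, this is a bounded system; I expect it to reduce to verifying that the red agents and blue agents demanding each fraction can be exactly packed, which is a matter of checking a handful of equalities once the $k_t$ are pinned down (they are essentially forced by the demands plus the global totals).

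\textbf{The main obstacle}, and the step I would spend the most care on, is the bookkeeping that ensures red and blue demands are \emph{simultaneously} satisfiable by the same multiset of coalitions: a coalition of fraction $\frac{t}{s}$ serves $t$ red slots and $s-t$ blue slots at once, so the red-side packing and blue-side packing are coupled through the shared variables $k_t$. I would argue that, because each agent's target fraction is uniquely determined by her strict preferences and the fixed support $T$, the counts $k_t$ are forced (each $k_t$ must equal $\lceil\rho_t/t\rceil$ appropriately, and consistency with $\beta_t$ and the totals $r,b,k$ is then just an equality check). Enumerating all $2^{s+1}$ supports and spending $\mathcal{O}(ns)$ per support to read off top choices and verify the equalities yields the claimed $\mathcal{O}(ns2^s)$ bound, with the output outcome reconstructed directly from any feasible $(k_t)_{t\in T}$.
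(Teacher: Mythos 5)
Your proposal is correct and follows essentially the same strategy as the paper's proof: enumerate the $2^{s+1}$ candidate supports $T\subseteq[0,s]$ of realized fractions, use strictness to pin each agent to her unique most preferred fraction within $T$, and accept $T$ exactly when, for every $t\in T$, the agents whose top choice is $t$ can be exactly packed into an integer number of rooms with $t$ red and $s-t$ blue agents each (your forced $k_t$ are the paper's multiplier $t$ for each $S_\ell$). The only blemish is the parenthetical suggesting $t\cdot k_t \ge \rho_t$ with ``indifferent'' agents filling leftover slots — under strict preferences any filler agent would envy agents in rooms of her own top fraction, so the packing must be an exact equality, which is what your final consistency check correctly demands anyway.
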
 
\begin{proof}
	Let $\pi=\{C_1, \dots, C_k\}$ be a same-type-envy-free outcome. 
	Then, for every agent $i$ there does not exist a coalition $C\in\pi$
	such that $i$ strictly prefers $\theta(C)$
	to the fraction of her own coalition $\theta(\pi(i))$.
	
	This observation gives rise to the following algorithm. 
	First, we guess a subset 
	$X\subseteq \{\frac{\ell}{s} : \ell\in [0,s] \}$ of precisely the fractions 
	that occur in one same-type-envy-free outcome of the game. 
	Then, for each $\frac{\ell}{s} \in X$, we determine the set $S_\ell$ of all agents 
	for whom $\frac{\ell}{s}$ is the most preferred element of $X$. 
	If for each $\frac{\ell}{s}\in X$, the set
	$S_\ell$ includes $t\cdot \ell$ red agents and $t\cdot (s-\ell)$ 
	blue agents for some positive integer $t$, 
	we create $t$ rooms of fraction $\frac{\ell}{s}$ consisting of exactly the agents from $S_\ell$; if for some fraction in $X$ this is not the case,
	we reject the current guess of $X$. If all guesses are rejected, there is no same-type-envy-free outcome.
\end{proof}

%%%%%%%%%%%%%%%%%%%%%%%%%%%%%%%%%%%%%%%%%%%%%%%%%%%%%%%%%%%%%%%%%%%%%%%%%%%%%%%%%%
\section{Parameterized Analysis}\label{sec:param}
In Section \ref{se::Two}, we saw that fixing the size of the rooms to $s=2$ has a 
significant impact on the complexity of finding stable outcomes.
Motivated by this result, as well as by Theorem~\ref{th::EF}, 
in this section, we study the parameterized complexity of the roommate 
diversity problem
with respect to parameter $s$. This is a promising parameter, 
since in most of our hardness reductions we converted
an anonymous hedonic game with $n$ agents into an instance with room size $n$; 
it is also appealing because in practice
the room size can be much smaller than the number of agents. 
Indeed, most of the algorithmic problems 
considered in this work turn out to be in FPT with respect to $s$.
We start by considering (strong) core stability.

\begin{theorem} \label{th::core_FPT}
	The problem of determining whether an instance of the roommate diversity problem 
	admits a (strongly) core stable 
	outcome is fixed-parameter tractable with respect to the room size. 
\end{theorem}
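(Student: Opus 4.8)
The plan is to exploit the fact that, once $s$ is fixed, agents fall into only boundedly many \emph{types}. Each agent is red or blue and has a weak order over the $s+1$ fractions in $D$; hence the number of distinct agent types is at most $2\cdot\mathit{Fub}(s+1)$, where $\mathit{Fub}(s+1)$ denotes the number of weak orders on $s+1$ elements, a quantity depending only on $s$. Since agents of the same type are interchangeable, I would represent the instance by the vector of type-multiplicities, and represent an outcome by a \emph{room profile}: a multiset of \emph{room types}, where a room type records how many agents of each agent type it contains (and hence also its fraction $\frac{j}{s}$). The number of room types is the number of size-$s$ multisets over the agent types, again bounded by a function of $s$ alone. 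Thus an outcome is fully described, up to renaming interchangeable agents, by a vector $(x_m)_m$ of non-negative integers indexed by room types, where $x_m$ is the number of rooms of type $m$; feasibility amounts to the linear \emph{supply constraints} stating that for each agent type $\tau$ we have $\sum_m a_{\tau,m}\,x_m$ equal to the given number of agents of type $\tau$, where $a_{\tau,m}$ is the (constant) number of type-$\tau$ agents in a room of type $m$.

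The key observation is that core stability can be written as a Boolean combination of \emph{linear} constraints on $(x_m)_m$. For a target fraction $\frac{j}{s}$, a blocking coalition exists precisely when there are at least $j$ red and at least $s-j$ blue agents who strictly prefer $\frac{j}{s}$ to the fraction of their current room. Because a room type $m$ determines both its fraction $f(m)$ and the preferences of the agents it contains, the number of red agents strictly preferring $\frac{j}{s}$ to their room equals $\rho_j=\sum_m c^{R}_{j,m}\,x_m$, where $c^{R}_{j,m}$ is the constant number of red agents in a type-$m$ room who strictly prefer $\frac{j}{s}$ to $f(m)$; similarly $\beta_j=\sum_m c^{B}_{j,m}\,x_m$ counts the analogous blue agents. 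Any $j$ such willing red agents together with any $s-j$ willing blue agents form a blocking coalition of fraction exactly $\frac{j}{s}$, so an outcome is core stable if and only if
\[
\bigwedge_{j=0}^{s}\Big(\rho_j\le j-1\ \vee\ \beta_j\le s-j-1\Big),
\]
a conjunction of $s+1$ disjunctions of linear inequalities.

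To turn this into a sequence of pure integer linear programs, I would guess, for each $j\in[0,s]$, which of the two disjuncts is satisfied; there are at most $2^{s+1}$ such guesses. Each guess yields an integer linear program whose variables are the $x_m$, whose constraints are the $s+1$ chosen inequalities together with the supply equalities, and whose number of variables is bounded by a function of $s$. By Lenstra's theorem, integer linear programs with a number of variables bounded by a function of $s$ can be decided in time $g(s)\cdot|I|^{O(1)}$; the instance admits a core stable outcome iff at least one of the programs is feasible, and a feasible solution yields the room profile of a core stable outcome. Summing over the guesses keeps the running time of the form $g'(s)\cdot|I|^{O(1)}$, so the problem is in FPT with respect to $s$. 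For strong core stability I would replace the blocking test by the weak-blocking test: a coalition of fraction $\frac{j}{s}$ weakly blocks iff one can select $j$ red and $s-j$ blue agents all \emph{weakly} preferring $\frac{j}{s}$ to their room with at least one preferring it strictly. This too is expressible through the linear counts of weak and strict preferrers and a bounded number of disjunctions per fraction, so the negated (non-weak-blocking) condition is again a bounded disjunction of linear inequalities at each $j$, and the same guess-and-solve scheme applies.

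The main obstacle is that core stability is inherently a non-convex condition---``no blocking coalition exists''---so it cannot be imposed directly as a single system of linear inequalities. The crux is therefore the two structural facts that make the disjunctive formulation work: that the counts $\rho_j,\beta_j$ of potential deviators are \emph{linear} in the room-profile variables, and that there are only boundedly many disjunctive patterns to try, each reducing to a fixed-dimension integer program solvable by Lenstra's algorithm.
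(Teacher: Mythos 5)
Your proposal is correct and follows essentially the same route as the paper's proof: both exploit that agents fall into boundedly many types once $s$ is fixed, encode an outcome by integer counts, express the existence of a blocking coalition of each fraction $\frac{j}{s}$ through linear counts of red/blue agents strictly preferring $\frac{j}{s}$ to their current room, and invoke the fixed-dimension ILP algorithm of Lenstra and Kannan (with the analogous weak-blocking counts for strong core stability). The only immaterial differences are that the paper indexes its variables by agent type and the number of red agents in the room (plus room-count variables $n_j$) rather than by full room types, and it absorbs the disjunctive ``no blocking coalition'' constraints into a single ILP via auxiliary indicator variables, whereas you branch over the at most $2^{s+1}$ disjunct patterns and solve one ILP per pattern.
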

\begin{proof}
Throughout the proof, we focus on finding core stable outcomes; in the end, 
we will comment on how to modify the proof for strong core stability.

As every agent is fully characterized by her preference relation and type, 
both the number of different red agents $t_{r}$ and 
the number of different blue agents $t_{b}$ 
can be upper-bounded by $2^{(s+1)^2}$. 
Let $\succsim^{R}_{1},...,\succsim^{R}_{t_{r}}$ 
(respectively, $\succsim^{B}_{1},...,\succsim^{B}_{t_{b}}$) 
be an enumeration of all possible preference relations of red 
(respectively, blue) agents. Then, an instance of our 
problem can be fully described by a $(t_{r}+t_{b})$-tuple 
$(r_{1},...,r_{t_{r}},b_{1},...,b_{t_{b}})$, where 
$r_{i}$ (resp., $b_{i}$) denotes the number of red (resp., blue) 
agents with preference relation $\succsim^{R}_{i}$ 
(resp., $\succsim^{B}_{i}$).

Now, we need a concise encoding of outcomes that enables 
us to check whether a given outcome is core stable. 
For a given outcome, 
let $r_{i,j}$ (resp., $b_{i,j}$) denote the number of red 
(resp., blue) agents with preference relation $\succsim^{R}_{i}$ 
(resp., $\succsim^{B}_{i}$) in rooms with $j$ red agents. 
Moreover, given $r_{i,j}$ and $b_{i,j}$, let $q^{R}(j)$ 
(resp., $q^{B}(j)$) denote 
the number of red (resp., blue) agents who strictly prefer 
a room with $j$ red agents to their current room:	
$$
q^{R}(j)=\sum_{\ell\in [s]}\hspace*{0.1cm}
\sum_{\substack{i\in [t_{r}]: \\ \frac{j}{s}\succ_{i}^{R} 
		\frac{\ell}{s}}} r_{i,\ell}, \qquad
q^{B}(j)=\sum_{\ell\in [0,s-1]}\hspace*{0.1cm}
\sum_{\substack{i\in [t_{b}]: \\ \frac{j}{s}\succ_{i}^{B} 
		\frac{\ell}{s}}} b_{i,\ell}.
$$ 
The values of these variables determine whether an outcome is core stable: 
for example, there exists a blocking coalition with three red agents 
if and only if 
$q^{R}(3)\geq 3$ and $q^{B}(3)\geq s-3$.

Hence, to decide whether an instance 
$G=(r_{1},...,r_{t_{r}},b_{1},...,b_{t_{b}})$ admits a 
core stable outcome, it is sufficient to check whether there exists an 
assignment to the 
variables $r_{i,j}$ and $b_{i,j}$ inducing an outcome for which no blocking 
coalition exists. We formalize this problem as an Integer Linear Program (ILP) 
and use the 
fact that it is possible to solve an ILP with $\rho$ variables and input length 
$L$ in time 
$\mathcal{O}(\rho^{2.5\rho+o(\rho)}L)$ 
\cite{lenstra1983integer,kannan1987minkowski}. 
To build the ILP, for each $j\in [0,s]$, we define an
additional variable $n_{j}$ denoting the number of rooms with $j$ red agents. 
The intuition detailed above results in the following collection of constraints:
\begin{align*}
&\sum_{j\in [s]} r_{i,j}  =r_{i}, \forall i\in [t_{R}]&   
&\sum_{j\in [0,s-1]} b_{i,j}  =b_{i}, \forall i\in [t_{B}]   \tag{1}\\[1em]
&\sum_{i\in [t_{R}]} r_{i,j} = jn_{j},  \forall j\in [s]&   
&\sum_{i\in [t_{B}]} b_{i,j} = (s-j)n_{j},  \forall j\in [0,s-1] \tag{2}
\\
&q^{R}(j) < j\vee q^{B}(j) < (s-j),&   &\forall j\in[0,s] \tag{3}\\[0.5em]
&b_{i,j},r_{i',j}, n_{j} \geq 0,&   
&\forall j \in [0,s], \forall i \in [t_{B}], \forall i' \in [t_{R}] \tag{4}
\end{align*}

First, constraints~(1) ensure that every agent is assigned to some 
coalition. Second, constraints~(2) ensure that the values of $r_{i,j}$ and 
$b_{i,j}$ 
induce a valid outcome, i.e., it is possible to put all agents into rooms of 
fraction 
specified by the variables $r_{i,j}$ and $b_{i,j}$. Constraints~(3) ensure that 
the induced 
outcome $\pi$ is core stable by enforcing that there is no blocking coalition 
of 
size $s$ with $j$ red and $s-j$ blue agents, for all $j\in [0,s]$; these 
constraints
are non-linear but can be converted to linear constraints by introducing
one extra variable per constraint \cite[Chapter~9]{bradley1977applied}.

The resulting ILP has $\mathcal{O}(s2^{s^2})$ variables and
has a feasible solution if and only if the 
underlying roommate diversity problem admits a core stable outcome. 
Now, applying the 
algorithm of \citet{lenstra1983integer} and \citet{kannan1987minkowski}, we can 
determine whether a roommate diversity problem admits a core stable outcome 
in time $f(s)\mathcal{O}(n)$ for some computable function $f(s)$.

By modifying the definitions of $q^R(j), q^B(j)$ and constraints (3) 
appropriately, 
we can adapt the ILP to check whether a roommate diversity problem admits a 
strongly core stable 
outcome.
\end{proof}

This approach, with appropriate modifications, extends to (strong) exchange stability and 
envy-freeness.

\begin{theorem} \label{th::swap_FPT} 
	The problem of determining whether an instance of the roommate diversity problem 
	admits a (strongly) exchange stable outcome  
	is fixed-parameter tractable with respect to the room size. 
\end{theorem}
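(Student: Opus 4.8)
The plan is to adapt the ILP-based approach from the proof of Theorem~\ref{th::core_FPT} to the setting of (strong) exchange stability. As before, I would first observe that since each agent is fully characterized by her type and her preference relation over the $s+1$ possible fractions, the number of distinct red agent types $t_r$ and blue agent types $t_b$ are both bounded by $2^{(s+1)^2}$, so an instance can be encoded by the tuple $(r_1,\dots,r_{t_r},b_1,\dots,b_{t_b})$ counting agents of each type. An outcome is then described by the variables $r_{i,j}$ and $b_{i,j}$ recording how many agents of each type sit in rooms with $j$ red agents, subject to the partition constraints~(1) and the feasibility constraints~(2) from the previous proof. The key difference is that I would replace the blocking-coalition constraints~(3) with constraints encoding the absence of any (weak) exchange deviation.

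The main conceptual step is to characterize exchange deviations in terms of the aggregate variables. A swap between a red agent and a blue agent moves the red agent into a room whose red-count increases by one (from $j$ to $j+1$, i.e.\ she experiences fraction $\frac{j+1}{s}$ in the target room after the swap) while the blue agent moves into a room whose red-count drops by one. A swap between two agents of the same type, as noted in Theorem~\ref{th::Room_SSwap}, never changes any room's fraction and so is irrelevant for stability under the ratio-based preferences; hence only red--blue swaps matter here. Therefore, for each ordered pair of fractions $(\frac{j}{s},\frac{j'}{s})$ I would introduce indicator-style feasibility conditions: a red agent currently in a room of $j$ red agents can deviate by swapping with a blue agent currently in a room of $j'$ red agents precisely when the red agent strictly prefers $\frac{j'+1}{s}$ to $\frac{j}{s}$ and the blue agent strictly (for strong exchange: weakly, with one strict) prefers $\frac{j-1}{s}$ to $\frac{j'}{s}$, and both relevant rooms are distinct and contain the other agent. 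I would then write, for every pair $(j,j')$, a constraint asserting that no such matching pair of agents simultaneously exists, analogous to the disjunctive constraints~(3), by summing the counts of red agents in $j$-rooms who want to move and blue agents in $j'$-rooms who want to move and forbidding both sums from being positive at once.

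The resulting system again has $\mathcal{O}(s\cdot 2^{s^2})$ variables and a number of constraints bounded by a function of $s$ alone, and the disjunctive (non-linear) stability constraints can be linearized with one auxiliary binary variable apiece, exactly as before via \cite[Chapter~9]{bradley1977applied}. Feeding this ILP into the algorithm of \citet{lenstra1983integer} and \citet{kannan1987minkowski} yields a running time of $f(s)\cdot\mathcal{O}(n)$, establishing fixed-parameter tractability with respect to $s$. To cover strong exchange stability I would merely relax the blue agent's strict-preference requirement to a weak one (insisting that at least one of the two agents strictly improves), updating the corresponding threshold sums.

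The step I expect to be the main obstacle is correctly encoding the deviation conditions in the aggregate counting variables, in particular handling the boundary bookkeeping that after a red--blue swap the deviating red agent lands in a room of fraction $\frac{j'+1}{s}$ rather than $\frac{j'}{s}$ (and symmetrically for the blue agent), together with ruling out the degenerate case where the two candidate rooms coincide. Getting these off-by-one fraction shifts and the distinctness requirement exactly right in the ILP constraints — so that a feasible assignment corresponds precisely to an exchange-stable outcome and vice versa — is where the care is needed; the FPT machinery itself then follows routinely from the bounded number of variables.
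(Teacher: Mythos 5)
Your overall strategy---encoding an outcome by the counts $r_{i,j},b_{i,j}$, expressing stability as disjunctive constraints over these counts, linearizing them, and invoking the algorithm of \citet{lenstra1983integer} and \citet{kannan1987minkowski}---is exactly the paper's, and your bookkeeping for red--blue swaps (the deviating red agent lands at fraction $\frac{j'+1}{s}$, the blue agent at $\frac{j-1}{s}$) matches the paper's different-type constraints. However, there is a genuine gap: your claim that a swap between two agents of the same type ``never changes any room's fraction and so is irrelevant for stability\ldots hence only red--blue swaps matter'' is wrong, and it causes you to drop constraints the paper needs. Exchange stability quantifies over \emph{all} pairs of agents in different rooms. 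A swap between two red agents sitting in rooms with $j$ and $k$ red agents ($j\neq k$) indeed leaves every room's fraction unchanged, but it moves each deviator into the other's room: if one strictly prefers $\frac{k}{s}$ to $\frac{j}{s}$ and the other strictly prefers $\frac{j}{s}$ to $\frac{k}{s}$, this is an exchange deviation and the outcome is not exchange stable. Theorem~\ref{th::Room_SSwap} says that outcomes immune to same-type swaps always \emph{exist} (via a potential argument); it does not say such swaps can be ignored when testing exchange stability of a given outcome. The paper's ILP accordingly contains, in addition to the different-type constraints, the same-type constraints $c^{R}(j,k)=0 \lor c^{R}(k,j)=0$ for all $j<k$ (and the analogue for blue agents), where $c^{R}(j,k)$ counts red agents in rooms with $j$ red agents who strictly prefer $\frac{k}{s}$ to $\frac{j}{s}$.

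As written, your ILP is therefore only a relaxation: its feasible solutions correspond to outcomes with no red--blue deviation, which may still admit same-type deviations. One might hope to repair this by post-processing---perform same-type swaps until convergence, as in the proof of Theorem~\ref{th:SSS}---but this does not work in general: when a red agent moves to a room with a different red count, the fraction she offers to potential blue swap partners changes from $\frac{j-1}{s}$ to $\frac{j'-1}{s}$, so same-type swaps can create new red--blue deviations. (In Theorem~\ref{th:SSS} that argument goes through only because all blue agents there are indifferent among all fractions.) Hence the equivalence of your relaxed ILP with the actual existence question would be a nontrivial claim requiring proof, which you do not supply---and you do not realize it needs one, since you treat it as definitional. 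The fix is simply to add the same-type constraints above. A secondary point: the degenerate case you flag (red agent in a $j$-room, blue agent in a room with the same red count $j$) is resolved in the paper not by a room-distinctness indicator but by a packing condition, namely $c^{R}(j,j+1)=0 \vee c^{B}(j,j-1)=0 \vee \bigl(c^{R}(j,j+1)\leq j \wedge c^{B}(j,j-1)\leq s-j\bigr)$, which says that all would-be deviators of this kind can be placed together in one single room.
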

\begin{proof}
	To construct an algorithm for this problem, we build upon the ideas and 
	notation from Theorem \ref{th::core_FPT}. However, it is not possible to 
	directly apply the reasoning from the previous theorem to this situation, 
	as here it is necessary to bookmark the fraction of the agents' current 
	coalition as well. Therefore, let $c^{R}(j,k)$ (resp., $c^{B}(j,k)$) denote 
	the number of red (resp., blue) agents being in a coalition with $j$ red 
	agents and preferring a coalition with $k$ red agents:
	$$c^{R}(j,k)=\sum_{\substack{i\in [t_{r}]: \\ \frac{k}{s}\succ_{i}  ^{R} 
	\frac{j}{s}}} r_{i,j},\qquad 
	c^{B}(j,k)=\sum_{\substack{i\in [t_{b}]: \\ \frac{k}{s}\succ_{i}  ^{B} 
	\frac{j}{s}}} b_{i,j}.$$
	
	To begin with, we only consider different-type swaps. To ensure that an 
	outcome is different-type-exchange stable, it is sufficient to check that 
	there does not exist a red agent $r^*\in R$ and a blue agent $b^*\in B$ in 
	different coalitions satisfying the following property: there exists a 
	$j\in[s]$ and $k\in[s]$  such that $r^*$ is in a coalition with $j$ red 
	agents and prefers a coalition with $k$ red agents, and $b^*$ is in a 
	coalition with $k-1$ red agents and prefers a coalition with $j-1$ red 
	agents. If two such agents exist, they always have a 
	different-type-exchange-deviation. In the case that $j\neq k-1$, two agents 
	satisfying the specified property are always in different coalitions. 
	Consequently, every different-type-exchange stable outcome has to satisfy 
	$c^{R}(j,k)=0\lor c^{B}(k-1,j-1)=0$ for all $j,k\in [s]: j\neq k-1 $.  In 
	contrast to this, in the case that $j=k-1$, it might be possible to put all 
	agents satisfying the specified property in the same coalition. However, if 
	this is not possible, the outcome becomes immediately unstable. Therefore, 
	every different-type-exchange stable outcome has to additionally satisfy 
	$c^{R}(j,j+1)=0 
	\vee c^{B}(j,j-1)=0 \vee  (c^{R}(j,j+1)\leq j \wedge c^{B}(j,j-1)\leq s-j)$ 
	for all $j\in [s-1].$ 
	
	On the other hand, to ensure that $\pi$ is same-type-exchange stable, it is 
	sufficient to check that for no $k\neq j\in [0,s]$ do there exist two 
	agents of the same type such that one of them is in a coalition with $j$ 
	red agents and prefers a coalition with $k$ red agents, and the other is in 
	a coalition with $k$ red agents and prefers a coalition with $j$ red 
	agents. Note that if two such agents exist they are in different coalitions 
	and thereby always have an exchange-deviation. Consequently, every 
	same-type exchange stable outcome has to satisfy $c^{R}(j,k)=0 \lor 
	c^{R}(k,j)=0$ for all $j,k\in [s]:j<k$ and  $c^{B}(j,k)=0 \lor 
	c^{B}(k,j)=0$ for all $j,k\in [0,s-1]:j< k$. 
	
	Putting all these ideas together results in the following collection of 
	constraints: 
	\allowdisplaybreaks
	\begin{align*}
	&\sum_{j\in [0,s-1]} b_{i,j}  =b_{i}, \forall i\in [t_{B}];   &   
	&\sum_{j\in [s]} r_{i,j}  =r_{i}, \forall i\in [t_{R}] \tag{1}\\
	&\sum_{i\in [t_{B}]} b_{i,j} = (s-j)n_{j}, \forall j\in [0,s-1];       &   
	&\sum_{i\in [t_{R}]} r_{i,j} = jn_{j}, \forall j\in [s] \tag{2}\\ 
	&c^{R}(j,k)=0\lor c^{B}(k-1,j-1)=0,&   &\forall j,k\in [s]: j\neq k-1 
	\tag{3.a}\\
	&c^{R}(j,j+1)=0 \vee c^{B}(j,j-1)=0 \hspace*{0.1cm}\vee&   &\\ 
	&(c^{R}(j,j+1)\leq j \wedge c^{B}(j,j-1)\leq s-j),&   &\forall j\in [s-1]. 
	\tag{3.b}\\ 
	&c^{R}(j,k)\leq 0 \vee c^{R}(k,j)\leq 0,&       &   \forall j,k\in [s]: j<k 
	\tag{4}\\
	&c^{B}(j,k)\leq 0\vee c^{B}(k,j)\leq 0,&       &   \forall j,k\in [0,s-1]: 
	j<k \tag{5}\\
	& b_{i,j},r_{i',j}, n_{j} \geq 0 , &   &\forall j \in [0,s], \forall i \in 
	[t_{B}], \forall i' 
	\in [t_{R}] \tag{6}
	\end{align*}
	As in the proof of Theorem~\ref{th::core_FPT}, the first two constraints 
	ensure that 
	$r_{i,j}$ and $b_{i,j}$ induce a valid outcome $\pi$, while constraints 
	(3.a) and (3.b) ensure that no pair of agents with a 
	different-type-exchange-deviation exists. Moreover, constraints (4) and (5) 
	ensure that no pair of agents with a same-type-exchange-deviation exists. 
	Again, it is possible to apply the ideas of 
	\citet[Chapter~9]{bradley1977applied} to obtain an ILP in standard form 
	where the number of used variables lies in $\mathcal{O}(s2^{s^2})$. 
	Therefore, by applying the algorithm from \citet{lenstra1983integer} and 
	\citet{kannan1987minkowski}, it is possible to determine whether a roommate 
	diversity problem admits an exchange stable outcome in $f(s)\mathcal{O}(n)$ 
	for some computable function $f(s)$. 
	
	It is also possible to adapt the definitions of $c^R(j,k), c^B(j,k)$ and 
	constraints (3)-(5) slightly to check whether a strongly exchange stable 
	outcome exists.  
\end{proof}
\begin{theorem} \label{th::envy_FPT}
	The problem of determining whether an instance of the roommate diversity problem 
	admits an envy-free outcome 
	is fixed-parameter tractable with respect to the room size. 
\end{theorem}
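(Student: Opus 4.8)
The plan is to reuse the ILP machinery developed in Theorem~\ref{th::core_FPT} and Theorem~\ref{th::swap_FPT}, since envy-freeness is structurally very similar to exchange stability. The key observation is that envy-freeness is a ``one-sided'' version of exchange stability: agent $i$ envies agent $j$'s place if $i$ would strictly prefer to take $j$'s room (after $j$ leaves), but we impose no requirement on $j$. Thus the whole argument hinges on counting, for each agent, how many other agents sit in rooms whose fraction that agent would strictly prefer to her own.

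First I would set up exactly the same encoding as before: since each agent is fully described by her type and her (weak) preference order over the $s+1$ fractions in $D$, the number of distinct red preference types $t_r$ and blue types $t_b$ are each bounded by $2^{(s+1)^2}$, a function of $s$ only. I would reuse the variables $r_{i,j}$ and $b_{i,j}$ counting how many agents of each preference type sit in rooms with $j$ red agents, together with the room-count variables $n_j$, and keep constraints~(1), (2) and the nonnegativity constraints~(6) from the proof of Theorem~\ref{th::swap_FPT} to guarantee the variables induce a genuine outcome.

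The new ingredient is the envy constraint. Using the quantities $c^R(j,k)$ and $c^B(j,k)$ already defined in Theorem~\ref{th::swap_FPT} --- the number of red (resp.\ blue) agents sitting in a room of fraction $\frac{j}{s}$ who strictly prefer fraction $\frac{k}{s}$ --- I would encode envy-freeness as follows. A red agent sitting in a room with $j$ red agents who strictly prefers a room with $k$ red agents will envy some agent precisely when a room of fraction $\frac{k}{s}$ exists that she could join by displacing a blue occupant, i.e.\ when $n_k \geq 1$ and $k > j$ (so displacing a blue agent still leaves $k$ red agents, matching her preferred fraction); an analogous statement holds for blue agents with $k < j$. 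So envy-freeness is captured by a collection of constraints of the shape $c^R(j,k) = 0 \vee n_k = 0$ for the relevant pairs $(j,k)$, and symmetrically for blue agents, where one must be slightly careful that displacing an agent of the opposite type in the target room is what produces the preferred fraction (the red envier wants a room with $k$ red agents after she enters, so she targets a room already containing $k$ red agents and kicks out a blue one). These disjunctions are again non-linear but can be linearized by the standard trick of \citet[Chapter~9]{bradley1977applied}, introducing one auxiliary variable per disjunction.

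I would then observe that the resulting ILP has $\mathcal{O}(s 2^{s^2})$ variables, a number depending only on $s$, with an input length of $\mathcal{O}(n)$ for encoding the counts $r_i, b_i$. Applying the algorithm of \citet{lenstra1983integer} and \citet{kannan1987minkowski}, which solves an ILP with $\rho$ variables in time $\mathcal{O}(\rho^{2.5\rho + o(\rho)} L)$, yields a running time of $f(s)\,\mathcal{O}(n)$, establishing membership in FPT with respect to $s$. The main obstacle I anticipate is getting the envy constraint exactly right: unlike the symmetric exchange deviation, envy requires care about the direction of the fraction change and about which opposite-type agent gets displaced in the target room, and one must double-check the boundary cases $j = 0$, $j = s$ and whether the envier's own departure from her source room is irrelevant (it is, since envy is one-sided and does not require the target agent to benefit). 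Once the disjunctive envy constraints are stated correctly for both types, the rest follows the template of the two preceding theorems verbatim.
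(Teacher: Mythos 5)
Your skeleton---the variables $r_{i,j}$, $b_{i,j}$, $n_j$, the feasibility constraints, the quantities $c^R(j,k)$ and $c^B(j,k)$, linearization of disjunctions via \citet[Chapter~9]{bradley1977applied}, and the Lenstra--Kannan step---is exactly the paper's. But the envy constraints, which are the actual substance of this proof, are wrong as you state them, in three ways. (i) The restriction to $k>j$ for red enviers (and $k<j$ for blue) is unjustified: preferences are arbitrary weak orders over $D$, so an agent may envy towards a \emph{smaller} fraction of her own color. Concretely, take $s=2$, three red agents and one blue agent, all with preference $\frac{1}{2}\succ\frac{0}{2}\succ\frac{2}{2}$. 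Every outcome consists of one pure red pair and one mixed pair, and the two red agents in the pure pair ($j=2$) envy the red agent in the mixed pair ($k=1<j$), so no envy-free outcome exists; yet none of your constraints is triggered, your ILP is feasible, and your algorithm answers ``yes.'' (ii) You conflate the two kinds of envy. The constraint $c^R(j,k)=0\vee n_k=0$ is the correct encoding of a red agent envying a \emph{red} agent (replacing a red occupant leaves $k$ red agents in the target room), but your justification speaks of displacing a \emph{blue} occupant, which is off by one: a red agent who takes a blue agent's place in a room with $k$ red agents lands in a room with $k+1$ red agents. Red-envies-blue therefore requires constraints on $n_{k-1}$, namely $c^R(j,k)=0\vee n_{k-1}=0$, and these (together with their blue analogues on $n_{k+1}$) are absent from your ILP altogether.

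(iii) The subtlety you dismiss as irrelevant is precisely the one that needs care. Envy requires the envied agent to sit in a \emph{different} room, so when $j=k-1$ the red envier's own room is itself a room with $k-1$ red agents, and the existence of one such room does not witness envy; the correct constraint in this case is $c^R(j,j+1)=0\vee n_j\leq 1$. Writing $n_{k-1}=0$ here instead would be violated whenever such an agent exists at all (her own room forces $n_j\geq 1$), so the coincidence case must be treated separately or the ILP also produces false negatives. The paper's proof of Theorem~\ref{th::envy_FPT} consists of exactly these three constraint families per type---same-type envy $c^R(j,k)=0\vee n_k=0$ for all $j,k$; different-type envy $c^R(j,k)=0\vee n_{k-1}=0$ for $j\neq k-1$; and $c^R(j,j+1)=0\vee n_j\leq 1$, with the symmetric constraints for blue agents---on top of the feasibility constraints you already have. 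Once these are in place, your count of $\mathcal{O}(s2^{s^2})$ variables, the linearization, and the application of \citet{lenstra1983integer} and \citet{kannan1987minkowski} do complete the proof exactly as you describe.
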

\begin{proof}
	First of all, recall that envy-freeness can be interpreted as a one-sided 
	version of exchange stability where only one agent needs to be willing to 
	swap with the other. Therefore, we only need to slightly adapt the proof of 
	Theorem~\ref{th::swap_FPT} to show the theorem. Again, we use the notation 
	introduced in 
	the previous two proofs. We only discuss the constraints that red agents 
	need to fulfill in an envy-free outcome. The constraints for blue agents 
	follow analogously.

	We start by considering same-type-envy. In a same-type-envy-free outcome, 
	for no $j,k\in [1,s]$, there exists a red agent $r^*$ who is in a coalition 
	with $j$ red agents and prefers an existing coalition with $k$ red agents, 
	i.e. 
	$\frac{k}{s}\succ_{r^*} \frac{j}{s}$. This would immediately imply that 
	$r^*$ envies another red agent in one of the coalitions with $k$ agents. 
	Thereby, it needs to hold that $c^{R}(j,k)= 0 \lor n_k= 0$ for all $j,k\in 
	[s]$.

	In a different-type-envy-free outcome, for no red agent $r^*$ in a 
	coalition with $j$ red agents who prefers a coalition with $k$ agents, a 
	different coalition with $k-1$ red agents is allowed to exist. In the case 
	where $j\neq k-1$, if one coalition with $k-1$ exists, $r^*$ cannot be part 
	of this coalition, as $r^*$ is part of a coalition with $j$ red agents. 
	Thereby, if there exists at least on coalition with $k-1$ red agents, 
	$r^*$ always envies at least one blue agent in such a coalition in this 
	case. However, if $j=k-1$ and $n_j=1$, $r^*$ is in the only coalition with 
	$k-1$ red agents and thereby there exists no blue agent in a different 
	coalition with $k-1$ red agents who $r^*$ could envy. These conditions can 
	be enforced by the following two constraints: $c^{R}(j,k)=0\lor n_{k-1}=0$ 
	for all $j,k\in [s]: j\neq k-1$ and $c^{R}(j,j+1)=0 \lor n_{j}\leq 1$ for 
	all $j\in [s-1]$.
	
	These considerations result in the following collection of constraints: 
	\begin{align*}
	&\sum_{j\in [0,s-1]} b_{i,j}  =b_{i}, \forall i\in [t_{B}];   &   
	&\sum_{j\in [s]} r_{i,j}  =r_{i}, \forall i\in [t_{R}] \tag{1}\\
	&\sum_{i\in [t_{B}]} b_{i,j} = (s-j)n_{j}, \forall j\in [0,s-1];       &   
	&\sum_{i\in [t_{R}]} r_{i,j} = jn_{j}, \forall j\in [s] \tag{2}\\ 
	&c^{R}(j,k)= 0 \lor n_k= 0,&   &\forall j,k\in [s] \tag{3}\\
	&c^{R}(j,k)=0\lor n_{k-1}=0,&   &\forall j,k\in [s]: j\neq k-1 \tag{4.a}\\ 
	&c^{R}(j,j+1)=0 \lor n_{j}\leq 1&   &\forall j\in [s-1]\tag{4.b}\\ 
	&c^{B}(j,k)= 0 \lor n_k= 0,&   &\forall j,k\in [0,s-1] \tag{5}\\
	&c^{B}(j,k)=0\lor n_{k+1}=0,&   &\forall j,k\in [0,s-1]: j\neq k+1 
	\tag{6.a}\\ 
	&c^{B}(j,j-1)=0 \lor n_{j}\leq 1&   & \forall j\in [s-1]\tag{6.b}\\ 
	& b_{i,j},r_{i',j}, n_{j} \geq 0 , &   &\forall j \in [0,s], \forall i \in 
	[t_{B}], \forall i' 
	\in [t_{R}] \tag{7}
	\end{align*}
	As in the proof of Theorem~\ref{th::core_FPT}, the first two constraints 
	ensure that 
	$r_{i,j}$ and $b_{i,j}$ induce a valid outcome $\pi$, while constraint (3) 
	ensures that no red agent envies another red agent. Moreover, constraints 
	(4.a) and (4.b) ensure that no red agent envies a blue agent. 
	Respectively, constraints (5), (6.a) and (6.b) ensure that no blue agent 
	envies another blue agent or a red agent. 
	Again, it is possible to apply the ideas of 
	\citet[Chapter~9]{bradley1977applied} to obtain an ILP in standard form 
	where the number of used variables lies in $\mathcal{O}(s2^{s^2})$. 
	Therefore, by applying the algorithm from \citet{lenstra1983integer} and 
	\citet{kannan1987minkowski}, it is possible to determine whether a roommate 
	diversity problem admits an envy-free outcome in $f(s)\mathcal{O}(n)$ for 
	some computable function $f(s)$. 
\end{proof}

Theorems~\ref{th::core_FPT}--\ref{th::envy_FPT} highlight a fundamental difference 
between the classical stable roommate problem and the roommate problem with 
diversity preferences: 
for the former all studied existence problems are NP-complete for $s= 3$ or even for $s=2$, 
while for the latter many of these problems are polynomial-time solvable if the size of the rooms 
is fixed or bounded. Thus, assuming diversity preferences fundamentally 
changes the complexity of the roommate problem. We note, however, that we were unable to
extend our ILP techniques to questions concerning Pareto optimality; it remains an open
problem whether these questions are also in FPT with respect to the room size $s$.

\section{Extensions Of The Model}\label{sec:ext}
One can generalize the roommate diversity problem, as defined in this paper, 
in several ways. This section proposes three independent extensions;
exploring them in detail is left for future work. 

\smallskip

\noindent \textbf{\textit{More than two types}\ } 
\citet{be20} generalized hedonic diversity games to allow for more than two 
types.
They proposed two different approaches. Under the first approach, 
agents are allowed to express their preferences over every 
possible combination of fractions of agents of all other types in their 
coalition: 
for example, assuming that red, blue and green agents are present, an agent is 
allowed to assess 
a coalition based on the fraction of red agents, the fraction of blue agents 
and the 
fraction of green agents in the coalition. Under the second approach, 
each agent only cares about the fraction of agents of her own type in her 
coalition:
e.g., a red agent is indifferent between a coalition with 2 red agents and 2 
blue agents and 
a coalition with 2 red agents, 1 blue agent and 1 green agent.
We can apply both of these approaches in our setting. Notably, the first 
approach 
results in a formalism that subsumes the multidimensional stable roommate 
problem 
with general preferences, as it is possible to create a designated type for 
each agent. 
Consequently, all negative results for the stable roommate problem hold for 
this model. 
However, the hardness results require an unbounded number of types. 
In fact, all existence questions considered in this paper still lie in 
$\mathit{FPT}$ with respect to the size of the rooms plus the number of types:
we can introduce a variable for the number of agents of a given type with a 
given preference
relation that are placed in a room with a given distribution of types 
and adjust the constraints in the ILPs accordingly. 
%and therefore become polynomial-time solvable if we fix both the number of 
%types $t$ and the 
%size of the rooms $s$. 
%Indeed, instead of fixing the number of red agents in every room, 
%we can to fix the number of agents of each type in every room 
%($\mathcal{O}(s^t)$), and to adjust the constraints in the ILPs accordingly.

\smallskip

\noindent \textbf{\textit{Free spots }\ } 
In our model, we assume that agents fit exactly into the available rooms. In 
practice, this 
is not always the case. We can extend our model by adding a 
parameter $k$ specifying the number of rooms, so that $ks\geq |N|$. Since it is 
no longer 
guaranteed that each room has exactly $s$ agents, the domain of the agents' 
preferences 
needs to be extended accordingly: $D=\{\frac{j}{t}: t\in [s], j\in [0,t]\}$.

In this model, we can consider modifications of traditional hedonic game 
stability notions, 
such as Nash stability, individual stability and core stability, 
where agents (groups) are only allowed to deviate into free spots; 
this approach has been explored by \citet{aziz2013stable} for the classic 
roommate setting.
In fact, this model is expressive enough to represent 
every hedonic diversity game: it suffices to set $s=k=n$.
Thus, all negative results for hedonic diversity games such as the hardness of 
determining 
whether a game admits a Nash or core stable outcome 
\citep{bredereck2019hedonic,be20} 
also hold in this model.

\smallskip

\noindent \textbf{\textit{Non-uniform room sizes }} Another natural extension  
is to relax the uniformity condition imposed on the size of the rooms 
and allow rooms of different sizes. For instance, in our  
wedding reception example, different tables may have different sizes. 
To capture this, we replace the constant $s$ in our model by a tuple 
$(s_1,....,s_k)$, 
where $s_i$ denotes the size of the $i$-th room; 
we require $\sum_{i=1}^{k}s_i=|N|$.
Again, it is necessary to adjust the 
domain of the agents' preferences: $D=\{\frac{j}{s_i}: i\in [k], j\in 
[0,s_i]\}$. 
Then, all existence problems considered in Section~\ref{sec:param}
can be shown to be fixed-parameter tractable with respect to 
$\max_{i\in [k]}s_i$, by a simple modification of the respective ILPs.

%%%%%%%%%%%%%%%%%%%%%%%%%%%%%%%%%%%%%%%%%%%%%%%%%%%%%%%%%%%%%%%%%%%%%%%%%%%%%%%%%%%%%
\section{Marriage Problem With Diversity Preferences}
Recall that in the stable marriage problem, we have $n$ men and $n$ women,
who have preferences over the members of the opposite sex,  
and the goal is to split them into $n$ pairs, with each pair consisting of 
one man and one woman, so that the resulting assignment is core stable;
one can also consider other notions of fairness and stability in this context, 
such as exchange stability or Pareto optimality. 
%Most of the 
%computational questions associated with this model are polynomial-time
%solvable \cite{irving1994stable}, at least when agents have strict preferences
%(see, however, the work of \citet{cechlarova2005exchange} on exchange 
%stability). 

\citet{knuth1976mariages}
generalized this model to higher dimensions, where the agents belong to $s\ge 2$
categories (say, freshmen, sophomores, juniors and seniors), and the goal is to 
form
groups of size $s$, with one representative from each category. Here, 
preferences are either defined 
over $(s-1)$-tuples of agents or induced by preferences over each category. 
For preferences defined over tuples, \citet{ng1991three} and, independently, 
\citet{subramanian1994new} proved NP-completeness of determining whether a game 
admits a 
core stable outcome; for the setting where preferences are defined 
over individual agents and then lifted to tuples,  
\citet{huang2007two} proved NP-completeness of several stability-related 
questions.

In this section, we consider a variant of this problem with diversity 
preferences: 
we assume that each agent belongs to one of $s$
categories and also to one of two types (say, red and blue) . The goal is to 
form groups
that have one representative from each category, with agents' preferences over 
groups
being determined by the fraction of red agents in each group. For 
instance, 
a project team may have to include one student from each year of study 
(i.e., a freshman, a sophomore, a junior and a senior), but the students'
preferences over teams are driven by the gender distribution in each team
(assuming for simplicity that each student identifies as male or female).
%As another example, for an interdisciplinary research project, the relevant
%catogories may be researchers' subject areas (e.g., economics, mathematics, 
%computer science), while researchers' preferences may be defined 
%in terms of experience with specific problems (such as, e.g., hedonic games).
%
We formalize this setting as follows.

\begin{definition}
	An $s$-dimensional {\em marriage diversity problem} is a quadruple 
	$G=(\{N_1,...,N_s\},$ $R,B, (\succsim_{i})_{i\in N_1\cup\dots\cup N_s})$ 
	with $N=\bigcupdott_{i\in [s]}N_i$, $N=R\cupdott B$, and $|N|=k\cdot s$ 
	for some $k\in \mathbb{N}$. The preference relation $\succsim_{i}$ 
	of each agent $i\in N$ is a weak order over the set 
	$D=\{\frac{j}{s} : j\in [0,s]\}$.
\end{definition}

An outcome of a marriage diversity problem is a partition 
$\pi=\{C_1,...,C_k\}$ of all agents into $k$ different coalitions of size $s$ 
such that 
$|C_i\cap N_j|=1$ for all $i\in [k]$, $j\in [s]$.
In what follows, we refer to $N_i$ as the {\em $i$-th dimension} 
of the problem and say that two agents 
$a,b \in N$ {\em lie in the same dimension} if $a, b\in N_i$
for some $i\in [s]$. The definitions of stability concepts for this setting
can be obtained by modifying the respective definitions for the roommate
diversity problem: for exchange stability, we only allow swaps between agents
that lie in the same dimension, and for core stability, we only allow deviations
by coalitions that contain exactly one agent from each dimension.

Interestingly, in contrast to the roommate diversity problem, 
the marriage diversity problem may fail to have an exchange stable outcome even 
for $s=2$. 

\begin{theorem}
	An instance of the marriage diversity problem may fail to have an exchange 
	stable outcome,
	even if $s=2$ and agents' preferences are dichotomous and single-peaked.
\end{theorem}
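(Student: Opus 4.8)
The plan is to construct an explicit small instance and argue by exhaustion that no exchange stable outcome exists. For the marriage diversity problem with $s=2$, an outcome partitions the agents into pairs, each pair containing exactly one agent from dimension $N_1$ and one from $N_2$; the only possible fractions are $\frac{0}{2}$, $\frac{1}{2}$, and $\frac{2}{2}$. Exchange deviations are restricted to agents lying in the same dimension, so a swap exchanges two $N_1$-agents (or two $N_2$-agents) between their respective pairs. The key structural observation I would exploit is that swapping two same-dimension agents of the \emph{same} type leaves all fractions unchanged and is therefore never strictly improving, so every useful swap must move agents of different types between pairs, thereby changing the red-fractions of exactly the two pairs involved.

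First I would fix the two dimensions and populate each with a carefully chosen mix of red and blue agents, keeping the instance as small as possible (likely $k=2$ or $k=3$ pairs). I would assign single-peaked dichotomous preferences so that some agents approve only the mixed fraction $\frac{1}{2}$ while others approve only a pure fraction ($0$ or $1$); this creates a tension analogous to the one driving the roommate counterexample in Theorem~\ref{th:SSSP}, where agents wanting mixed rooms and agents wanting pure rooms cannot be simultaneously satisfied. The asymmetry forced by the dimension structure---each pair must draw one agent from each of $N_1$ and $N_2$---is what makes a failure possible already at $s=2$, unlike the roommate case of Theorem~\ref{th::Room2}, so I would choose the type distribution across the two dimensions to be unbalanced (e.g.\ more red agents concentrated in one dimension).

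Next I would enumerate the candidate outcomes. Since $s=2$ and the instance is small, there are only a constant number of ways to match $N_1$-agents to $N_2$-agents up to the relevant symmetries, and each outcome is characterized by how many mixed versus pure pairs it contains. For each candidate I would exhibit a same-dimension pair of agents (one in each of two distinct pairs) who both strictly improve by swapping: typically an agent stuck in a pure pair who approves only $\frac{1}{2}$ together with an agent in a mixed pair who approves only the pure fraction, both lying in the same dimension so that the swap is admissible. I would verify that after the swap both agents land in an approved fraction while the instance offers no outcome in which every agent simultaneously attains an approved fraction.

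The main obstacle I anticipate is designing the instance so that the dimension constraint genuinely rules out the ``good'' assignments while still admitting the destabilizing swaps---it is easy to write down preferences for which a stable outcome happens to exist because the mixed-seeking and pure-seeking agents can be segregated into separate pairs respecting the one-per-dimension rule. The crux is therefore a counting argument: I would show that the number of red agents and their placement across the two dimensions forces at least one pair to be mixed and at least one to be pure in every outcome, and that the agents populating these pairs always include a same-dimension swapping pair who both prefer to switch. Once the type counts are pinned down, the remaining case analysis is routine and finite.
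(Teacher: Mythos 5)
Your high-level plan (a small explicit instance killed by exhaustive case analysis, driven by the tension between agents who want $\frac{1}{2}$ and agents who want a pure fraction) is indeed the paper's approach, but the proposal has a genuine gap: the instance itself, which is the entire content of this theorem, is never constructed, and the design principles you would use to find it are partly wrong, so following them would not lead to a correct proof. Concretely, your ``key structural observation'' is false: swapping two same-dimension agents of the same type leaves the fraction of every \emph{room} unchanged, but the two swapped agents move between rooms, so both can strictly improve (e.g., two red agents in the same dimension, one sitting in a mixed pair but preferring $\frac{2}{2}$, the other sitting in a pure red pair but preferring $\frac{1}{2}$). This is precisely why same-type-exchange stability in Theorem~\ref{th::Room_SSwap} requires a potential-function argument rather than being vacuous. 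Your counting heuristic also points the wrong way. With $s=2$, a different-type same-dimension swap changes the deviators' fractions only when their current partners also have different types; hence the only different-type deviations are (a) a red and a blue agent of one dimension, sitting in a pure red and a pure blue pair respectively, who both prefer $\frac{1}{2}$, and (b) such a pair of agents sitting in two mixed pairs, where the red prefers $\frac{2}{2}$ and the blue prefers $\frac{0}{2}$. Pattern (a) needs two opposite-colored pure pairs and pattern (b) needs two mixed pairs, so an outcome with exactly one mixed and one pure pair---the structure your unbalanced-distribution argument is designed to force---admits \emph{no} different-type deviation at all, and you have explicitly discarded the same-type ones.

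The paper's counterexample is, contrary to your heuristics, perfectly balanced and minimal: $N_1=\{r_1,b_1\}$, $N_2=\{r_2,b_2\}$ with pure-seekers in dimension~1, namely $r_1\colon \frac{2}{2}\succ\frac{1}{2}\succ\frac{0}{2}$ and $b_1\colon \frac{0}{2}\succ\frac{1}{2}\succ\frac{2}{2}$, and mixed-seekers in dimension~2, namely $r_2\colon \frac{1}{2}\succ\frac{2}{2}\succ\frac{0}{2}$ and $b_2\colon \frac{1}{2}\succ\frac{0}{2}\succ\frac{2}{2}$. There are only two outcomes: the all-pure one $\{\{r_1,r_2\},\{b_1,b_2\}\}$, where $r_2$ and $b_2$ (same dimension, both in pure pairs, both preferring $\frac{1}{2}$) have a deviation of type (a), and the all-mixed one $\{\{r_1,b_2\},\{b_1,r_2\}\}$, where $r_1$ and $b_1$ have a deviation of type (b). Note that neither outcome contains a mixed and a pure pair simultaneously, so the structure your plan aims to enforce never arises in the example that actually works; to complete your proof you would need to replace your heuristics by such a construction and verify the two cases.
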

\begin{proof}
	Consider an instance of the marriage diversity problem with 
	$R=\{r_1, r_2\}$, $B=\{b_1, b_2\}$, $s=2$, $N_1=\{r_1, b_1\}$, 
	$N_2=\{r_2, b_2\}$ and the following preferences:
	$$
	r_{1}: \frac22\succ \frac{1}{2} \succ \frac02,\hspace*{0.25cm} 
	r_{2}: \frac{1}{2}\succ \frac22 \succ \frac02,\hspace*{0.25cm} 
	b_{1}: \frac02\succ \frac{1}{2} \succ \frac22, \hspace*{0.25cm}
	b_{2}: \frac{1}{2}\succ \frac02 \succ \frac22.
	$$
	This instance has two possible outcomes: 
	$\pi=\{\{r_1,r_2\},\{b_1,b_2\}\}$ and $\pi'=\{\{r_1,b_2\},\{b_1,r_2\}\}$.
	In $\pi$,  agents $r_2$ and $b_2$ have an exchange deviation, and 
	in $\pi'$, agents $r_1$ and $b_1$ have an exchange deviation. 
	This example also shows that an envy-free outcome is not guaranteed to 
	exist.
\end{proof}

While the marriage diversity problem is more structured than the roommate 
diversity problem, 
the two problems are strongly related:
we can adapt the reductions from Theorem~\ref{th::Roo_Core_NPc} and 
Theorem~\ref{th:SSS} to prove that hardness results for core and strong 
exchange 
stability still hold for the marriage diversity problem.

\begin{theorem} \label{th::MaEx}
	It is {\em NP}-complete to decide whether a given instance of the marriage 
	diversity problem 
	admits a core stable outcome or a strongly exchange stable outcome. 
\end{theorem}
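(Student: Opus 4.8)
The plan is to adapt the two reductions from Theorem~\ref{th::Roo_Core_NPc} (core stability, from core stability of anonymous hedonic games) and Theorem~\ref{th:SSS} (strong exchange stability, from Nash stability of anonymous hedonic games), now respecting the dimensional structure of the marriage problem. Given an anonymous game on $n$ agents I set the group size to $s=n$ and place exactly \emph{one} red agent $r_i$ per dimension $i\in[s]$, using the same encoding as before ($\frac{j}{s}\succsim_{r_i}\frac{j'}{s}$ iff $j\succsim_i j'$). The one-red-agent-per-dimension placement is the key structural device: since the $s$ red agents occupy $s$ distinct dimensions, any subset of them may legally co-occur in a single group, so a group with $j$ red agents (fraction $\frac{j}{s}$) is formed by letting $j$ dimensions contribute their red agent and the remaining $s-j$ dimensions contribute blue agents. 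Thus a size-$j$ coalition of the anonymous game maps to a group of fraction $\frac{j}{s}$, exactly as in the roommate reductions. The blue agents are spread symmetrically across all dimensions: for core stability each dimension receives polynomially many copies of each blue agent of type $j$ (top fraction $\frac{j}{s}$, second choice $0$), enough both to complete every group in the forward construction and to exceed the bound $s$ used below; for strong exchange stability I use blue agents indifferent among all fractions, again spread evenly. Counts are padded so that $|N|$ is divisible by $s$ and the number of groups $k$ exceeds $s$.

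For core stability, the forward direction (core stable anonymous outcome $\Rightarrow$ core stable marriage outcome) builds one group per coalition $P_\ell$, using red agents $\{r_i: i\in P_\ell\}$, filling the remaining dimensions with type-$|P_\ell|$ blue agents, and packing leftover blue agents into purely blue groups; symmetry across dimensions makes the leftover blue agents split evenly into dimension-respecting pure groups. No group of fraction $\frac{j}{s}>0$ can block, since this would require $j$ red agents strictly preferring $\frac{j}{s}$, which by the preference encoding would yield a blocking coalition of the anonymous game. The backward direction reads off an anonymous outcome from the red agents and shows that a blocking set there induces a blocking group here; this is where the dimensional constraint bites, since the $s-j$ completing blue agents must come from the $s-j$ specific dimensions \emph{not} occupied by the blocking red agents. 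The required per-dimension counting argument replaces the global count of Theorem~\ref{th::Roo_Core_NPc}: at most $s$ groups contain a red agent, so at most $s$ of any dimension's agents lie in mixed groups, and provisioning strictly more than $s$ type-$j$ blue agents per dimension guarantees, in each needed dimension, a type-$j$ blue agent sitting in a fraction-$0$ group who strictly prefers $\frac{j}{s}$ and will join the block.

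For strong exchange stability the correspondence is actually cleaner than in the roommate case: since every red agent sits alone in its dimension, no two red agents share a dimension, red--red swaps are simply forbidden, and the same-type-swap repair step of Theorem~\ref{th:SSS} is unnecessary. A red agent $r_d$ may swap only with a blue agent of dimension $d$, and since every group other than $r_d$'s own contains a blue dimension-$d$ agent, $r_d$ can reach any group; replacing that blue agent raises the target group's fraction from $\frac{c}{s}$ to $\frac{c+1}{s}$, mirroring a move of the corresponding agent into a size-$c$ coalition, or into a pure-blue group, mirroring becoming a singleton. As blue agents are indifferent, any such swap weakly suits the blue partner, so a weak exchange deviation for $r_d$ exists iff the corresponding agent has a Nash deviation. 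Hence strong exchange stability of the marriage outcome is equivalent to Nash stability of the induced anonymous outcome, giving both directions; one only checks that some pure-blue group always exists (guaranteed since at most $s$ groups contain red agents and $k>s$) so that the singleton deviation is represented.

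Membership in NP is routine. For each target fraction $\frac{j}{s}$ one tests feasibility of a dimension-respecting blocking group by a per-dimension count of which dimensions can supply a red, respectively blue, agent strictly preferring $\frac{j}{s}$; for strong exchange stability one iterates over all same-dimension agent pairs. The main obstacle throughout is this dimensional feasibility constraint: unlike in the roommate problem, the agents completing a blocking coalition or realizing an exchange deviation must be drawn from prescribed dimensions, so the crux is the symmetric placement of blue agents together with the per-dimension counting bound certifying that such agents are always available.
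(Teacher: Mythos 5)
Your proposal is correct and takes essentially the same approach as the paper: both reductions place exactly one red agent per dimension with the size-to-fraction preference encoding, populate each dimension with polynomially many type-$j$ blue agents (for core stability) or with blue agents indifferent among all fractions (for strong exchange stability), and exploit the per-dimension counting bound and the fact that red--red swaps are forbidden by the dimension constraint. The only cosmetic difference is that you pad the instance so that $k>s$ to guarantee a pure-blue group exists, whereas the paper's construction has $k=s$ and the singleton-deviation condition holds trivially in the one edge case without a pure-blue group.
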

\begin{proof}
	To begin with, we prove the theorem for core stability by constructing a 
	reduction from the related problem for anonymous games:
	
	\textbf{Construction: }Let $G=(N, (\succ_{i})_{i\in N})$ be an anonymous 
	game. We construct a corresponding $|N|$-dimensional marriage diversity 
	problem. For each $i\in N$, we create a new dimension containing one red 
	agent $r_{i}$ whose preference relation is defined by: $$ 
	\frac{j}{s}\succsim_{r_{i}} \frac{j'}{s} \text{ iff } j\succsim_{i} j', 
	\forall j,j'\in [s]$$ and, for each $\ell\in [s],$ $s^2$ blue agents: 
	$$b_{\ell}^{j}: \frac{\ell}{s}\succ_{b_{\ell}^{j}} 0, \forall j\in [s^2].$$
	
	The proof of correctness of this reduction is completely analogous to the 
	proof of Theorem~\ref{th::Roo_Core_NPc}.
	
	\smallskip
	\noindent Next, we prove hardness for strong exchange stability by 
	constructing a 
	reduction from the problem of deciding whether an anonymous game admits a 
	Nash stable outcome:
	
	\textbf{Construction: } Let $G=(N, (\succ_{i})_{i\in N})$ be an anonymous 
	game. From $G$, we create an $|N|$-dimensional marriage diversity problem. 
	For each $i\in N$, we create a new dimension containing a red agent $r_{i}$ 
	whose preference relation is defined by: $$ \frac{j}{s}\succsim_{r_{i}} 
	\frac{j'}{s} \text{ iff } j\succsim_{i} j', \forall j,j'\in [s]$$ and $n-1$ 
	blue agents who are indifferent among all coalitions.
	
	\textbf{Correctness: }$(\Rightarrow)$ Let  $\pi=\{P_{1},\dots,P_{q}\}$ be a 
	Nash stable outcome of the anonymous game. From~$\pi$, we construct a 
	strongly exchange stable outcome of the corresponding marriage diversity 
	problem. For all $j \in [q]$, we insert one coalition $P'_{j}$ consisting 
	of all red agents corresponding to the agents in~$P_{j}$ and $s-|P_{j}|$ 
	blue agents from the remaining dimensions into $\pi'$. The remaining blue 
	agents are put into valid pure coalitions of size $s$.
	
	$\pi'$ is strongly exchange stable, as no pair of red agents is allowed to 
	swap, as they are all from different dimensions. Furthermore, no red agent 
	has an incentive to swap with a blue agent, as this would correspond to an 
	individual deviation of the corresponding agent in the anonymous game which 
	cannot be rational, since $\pi$ is Nash stable. 
	
	($\Leftarrow$) Let $\pi=\{P_{1},\dots,P_{\ell}, \dots, P_{k}\}$ be a 
	strongly exchange stable outcome of the marriage diversity problem where  
	w.l.o.g. the first $\ell$ coalitions contain red agents, while the 
	remaining coalitions are purely blue coalitions. Let 
	$\pi'=\{P^{'}_{1},\dots,P^{'}_{\ell}\} $ with $P^{'}_{j}=\{i : r_{i}\in 
	P_{j}\}$ for all $j\in [\ell]$ be an outcome of the anonymous game. To 
	prove that $\pi'$ is Nash stable, note that for all red agents there always 
	exists exactly one blue agent from her dimension in every other coalition. 
	As $\pi$ is strongly exchange stable, from this and the fact that each blue 
	agent is indifferent among all coalitions it follows that for all~$i\in 
	[n]$ it holds that: $$
	\theta(\pi(r_{i}))\succsim_{r_{i}}\frac{1}{s} \text{ and }\forall j\in 
	[\ell]\setminus\{\pi(r_{i})\}:\theta(\pi(r_{i}))\succsim_{r_{i}}\frac{|P'_{j}|+1}{s}.
	$$ By construction, from this it follows that for all $i\in [n]$ it holds 
	that: $$\theta(\pi'(i))\succsim_{i} 1 \text{ and }  \forall j\in 
	[\ell]\setminus\{\pi'(i)\}:\theta(\pi'(i))\succsim_{i}|P'_{j}|+1,$$ which 
	implies that $\pi'$ is Nash stable. 
\end{proof}

Also, just as for the roommate problem, we can transform the single-peaked 
anonymous
hedonic game with empty core constructed by~\citet{banerjee2001core} 
into an instance of the marriage diversity problem with no core stable outcome.
\begin{corollary}
	A marriage diversity problem may fail to admit a core stable outcome, even 
	if preferences are strict and single-peaked.
\end{corollary}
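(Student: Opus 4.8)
The plan is to reuse the core-stability reduction from Theorem~\ref{th::MaEx} essentially verbatim, feeding it the single-peaked anonymous hedonic game with empty core constructed by \citet{banerjee2001core}. That reduction maps an anonymous game $G$ to a marriage diversity instance $G'$ that admits a core stable outcome if and only if $G$ does; since the game of \citet{banerjee2001core} has an empty core, the resulting $G'$ will admit no core stable outcome. The only additional work is to check that $G'$ can simultaneously be made strict and single-peaked, and that strengthening the preferences in this way does not disturb the correctness of the reduction.

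For the red agents, single-peakedness and strictness come essentially for free. The map $j\mapsto \frac{j}{s}$ is order-preserving, so red agent $r_i$, whose ranking of $[s]$ mirrors agent $i$'s ranking of coalition sizes, is single-peaked and strict over $\{\frac{1}{s},\dots,\frac{s}{s}\}$ with peak $\frac{p_i}{s}$ whenever $i$ is single-peaked and strict with peak $p_i$; the impossible fraction $\frac{0}{s}$ is ranked at the bottom, which is consistent with any peak lying in $[s]$. For the blue filling agents I would go beyond the bare requirement $\frac{\ell}{s}\succ 0$ used in Theorem~\ref{th::MaEx} and assign each type-$\ell$ blue agent the strict single-peaked order with peak $\frac{\ell}{s}$, placing the impossible fraction $\frac{s}{s}$ at the bottom; this automatically yields $\frac{\ell}{s}\succ 0$ while being strict and single-peaked.

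The step I expect to require the most care is verifying that completing the blue agents' rankings to full single-peaked orders introduces no new blocking coalitions. In the ($\Rightarrow$) direction, any coalition blocking $G'$ either contains a red agent---in which case the preference correspondence forces the corresponding agents to block in $G$, exactly as in Theorem~\ref{th::Roo_Core_NPc}, and the blue members' full preferences are never used---or it is the all-blue coalition of fraction $0$; but in the constructed outcome every blue agent sits either in a fraction-$0$ room or in a room of fraction equal to her peak, and single-peakedness guarantees she never strictly prefers $0$ to either, so no all-blue coalition blocks. In the ($\Leftarrow$) direction, a size-$j$ blocking set of $G$ is lifted to a fraction-$\frac{j}{s}$ blocking coalition of $G'$ using the $j$ corresponding red agents (which lie in distinct dimensions) together with $s-j$ type-$j$ blue agents from the remaining dimensions; since $s^2$ such blue agents exist and at most $s(s-1)$ can occupy mixed rooms, enough remain in fraction-$0$ rooms, and single-peakedness ($\frac{j}{s}\succ 0$) guarantees they strictly prefer the deviation. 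Thus the reduction goes through unchanged, and the resulting strict, single-peaked marriage diversity instance has no core stable outcome.
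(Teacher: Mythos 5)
Your proposal is correct and takes essentially the same approach as the paper, which justifies this corollary in a single sentence by feeding the single-peaked anonymous game with empty core of \citet{banerjee2001core} into the core-stability construction of Theorem~\ref{th::MaEx}. Your write-up simply makes explicit the details the paper leaves implicit (completing the blue agents' rankings to strict single-peaked orders with peak $\frac{\ell}{s}$ and re-verifying both directions); the only cosmetic remark is that the availability of blue agents is cleanest counted per dimension---each remaining dimension has $s^2$ type-$j$ blue agents and at most $s$ of its agents occupy mixed rooms---although your global bound of $s(s-1)$ mixed-room occupants also suffices.
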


On the positive side, we can apply our ILP approach to the marriage 
diversity problem to show that the problem of finding a (strongly) core stable 
or a (strongly) exchange stable outcome is fixed-parameter tractable with 
respect to
the number of dimensions.
\begin{theorem} \label{th::MaFPT}
	The problem of deciding whether a marriage diversity problem admits a 
	(strongly)
	core stable or a (strongly) exchange stable outcome is in $\mathit{FPT}$ 
	with respect to the number of dimensions $s$.
\end{theorem}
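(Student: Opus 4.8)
The plan is to generalize the Integer Linear Program from the proofs of Theorems~\ref{th::core_FPT} and~\ref{th::swap_FPT}, with the one genuinely new feature being the requirement that each room contain exactly one agent from every dimension. As before, an agent is fully described by her type and preference relation, but now also by her dimension; since there are at most $2^{(s+1)^2}$ preference relations, the number of agent classes is bounded by $s\cdot 2\cdot 2^{(s+1)^2}$, a function of $s$ alone, and the instance is encoded by the class counts.

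To encode an outcome I would introduce a variable $z_\gamma$ for every \emph{room configuration} $\gamma$, i.e., every assignment of one (type, preference)-class to each of the $s$ dimensions; the number of such configurations is $\bigl(2\cdot 2^{(s+1)^2}\bigr)^{s}$, still a function of $s$. Each $\gamma$ has a well-defined fraction $\theta(\gamma)=|\{d:\gamma(d)\text{ is red}\}|/s$. Because a configuration already fixes which agent occupies each dimension, feasibility is automatic: the only accounting constraint is that every agent is used exactly once, i.e., for each class $(d,t,p)$ the sum of $z_\gamma$ over configurations placing class $(t,p)$ in dimension $d$ equals the number of available agents of that class. This sidesteps the contingency-table realizability issue that a coarser, per-dimension-marginal encoding would raise.

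Stability then becomes a collection of constraints on the $z_\gamma$. For exchange stability the situation is in fact cleaner than in the roommate case: only same-dimension swaps are allowed, two agents of the same dimension necessarily sit in different rooms, and a same-type swap never changes any fraction, so the only deviations to rule out are swaps between a red and a blue agent of a common dimension. For each dimension $d$ and each pair of fractions I would, exactly as in constraints (3.a) of Theorem~\ref{th::swap_FPT}, write a disjunction forbidding the simultaneous presence of a red agent and a blue agent in dimension $d$ who would both strictly improve after exchanging rooms; the awkward ``$j=k-1$'' case of the roommate proof disappears here precisely because same-dimension agents can never share a room. Strong exchange stability is obtained by weakening the appropriate side of each disjunction to $\succsim$.

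The main obstacle will be core stability, because a blocking coalition must now pick exactly one agent from each dimension. For each target fraction $j$ I would first form, from the $z_\gamma$, a binary indicator telling whether dimension $d$ contains a red (respectively blue) agent who strictly prefers $\frac{j}{s}$ to her current room; these indicators are linearized from the underlying linear counts by the standard big-$M$ device. A blocking coalition of fraction $j$ then exists iff one can choose $j$ dimensions to supply red deviators and the remaining $s-j$ to supply blue deviators---a bipartite selection that is feasible iff no dimension is ineligible for both roles and the numbers of dimensions forced into a single role do not exceed the available slots. Negating this yields, for each $j$, a disjunction of linear inequalities in the indicator variables, which I would linearize using $f(s)$ auxiliary variables as in \citet[Chapter~9]{bradley1977applied}. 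The resulting ILP still has only $f(s)$ variables, so the algorithm of \citet{lenstra1983integer} and \citet{kannan1987minkowski} runs in time $f(s)\cdot\mathcal{O}(n)$; the strong-core variant follows by replacing strict with weak preference in the indicators.
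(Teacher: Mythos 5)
Your configuration-variable encoding ($z_\gamma$ per room profile) is a legitimate and arguably cleaner alternative to the paper's per-dimension marginal counts $r^\ell_{i,j}, b^\ell_{i,j}$: it does avoid the realizability question that the paper handles with its constraint forcing each dimension to supply exactly $n_j$ agents to fraction-$j$ rooms. Your core-stability treatment also matches the paper's in substance: the paper introduces $g_j$, $g_j^R$, $g_j^B$ (dimensions that can supply both a red and a blue deviator, only a red one, only a blue one) and forbids blocking via exactly the gap/slot condition you describe, so that part is fine.

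The genuine gap is in your exchange-stability argument. You claim that since a same-type swap never changes any room's fraction, ``the only deviations to rule out are swaps between a red and a blue agent of a common dimension.'' This conclusion is false: in a swap the two agents \emph{trade rooms}, so even though the fractions of the rooms are unchanged, each agent afterwards experiences the \emph{other} room's fraction. Concretely, take two red agents of the same dimension $\ell$, one in a room of fraction $\frac{j}{s}$ with $\frac{k}{s}\succ\frac{j}{s}$ in her preference order, the other in a room of fraction $\frac{k}{s}$ with $\frac{j}{s}\succ\frac{k}{s}$; by the definition of an exchange deviation ($\theta((\pi(i')\setminus\{i'\})\cup\{i\})\succ_i\theta(\pi(i))$, which here equals $\theta(\pi(i'))\succ_i\theta(\pi(i))$), this pair has an exchange deviation, yet your ILP imposes no constraint ruling it out. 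Consequently your program can be feasible on instances that admit no exchange stable outcome, and the algorithm would answer incorrectly. This is precisely why the paper's ILP for the marriage case keeps the same-type constraints of Theorem~\ref{th::swap_FPT}: in addition to the red--blue constraints $c^{R}_\ell(j,k)=0\vee c^{B}_\ell(k-1,j-1)=0$, it requires $c^{R}_\ell(j,k)=0\vee c^{R}_\ell(k,j)=0$ and $c^{B}_\ell(j,k)=0\vee c^{B}_\ell(k,j)=0$ for all $j<k$ and every dimension $\ell$. The repair is easy in your encoding---add, for each dimension and each unordered pair of distinct fractions, the analogous disjunction between same-type agents (no special cases are needed, since same-dimension agents always sit in different rooms)---but as written the proposal's exchange-stability ILP does not characterize exchange stability.
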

\begin{proof}
	To begin with, we adapt the ILP from Theorem~\ref{th::core_FPT} to prove 
	the theorem for 
	core stability. Here, apart from bookmarking the type and preference 
	relation of every agent, it is also necessary to bookmark her 
	dimension. Therefore, given a marriage diversity problem, let 
	$r_{i}^{\ell}$ (resp.,~$b_{i}^{\ell}$) be the number of red (resp., blue) 
	agents with preference relation $\succsim^R_i$ (resp., $\succsim^B_i$) 
	which are part of the $\ell$-th dimension and $r_{i,j}^{\ell}$ (resp., 
	$b_{i,j}^{\ell}$) the number of such agents in a room with $j$ red agents. 
	
	A blocking coalition containing $j$ red agents exists in an outcome $\pi$ 
	if there exist $j$ dimensions where at least one red agent prefers such a 
	coalition to her current coalition and a disjoint set of $s-j$ dimensions 
	where at least one blue agent prefers such a coalition to her current 
	coalition. To check this, for each $j\in[s]$, let $g_{j}$ denote the number 
	of dimensions where at least one red agent prefers a coalition of $j$ red 
	agents to her current coalition and at least one blue agent a coalition of 
	$j$ red agents to her current coalition. Moreover, let $g_{j}^R$ denote the 
	number of dimensions where at least one red agent prefers a coalition of 
	$j$ red agents to her current coalition and no blue agent prefers a 
	coalition of $j$ red agents to her current coalition. We define $g_{j}^B$ 
	analogously. Utilizing these variables, it is possible to ensure that no 
	blocking coalition exists by enforcing that for each $j$ the gap between 
	$j$ and~$g_{j}^R$ plus the gap between $s-j$ and $g_{j}^B$ is less than 
	$g_{j}$. This idea results in the following collection of adapted 
	constraints: 
	
	\begin{align*}
	&\sum_{j\in [s]} r^{\ell}_{i,j}  =r^{\ell}_{i}, \forall i\in [t_{R}]&   
	&\sum_{j\in [0,s-1]} b^{\ell}_{i,j}  =b^{\ell}_{i}, \forall i\in [t_{B}]   
	\tag{1}\\[1em]
	&\sum_{\substack{i\in [t_{R}]\\\ell\in[s]}} r^\ell_{i,j} = jn_{j},  \forall 
	j\in [s]&   &\sum_{\substack{i\in [t_{B}]\\\ell\in[s]}} b^\ell_{i,j} = 
	(s-j)n_{j},  \forall j\in [0,s-1] \tag{2.a}
	\\
	&\sum_{i\in [t_{R}]} r^\ell_{i,j}+ \sum_{i\in [t_{B}]} b^\ell_{i,j}= 
	n_{j},&   &\forall j\in [0,s], \forall \ell\in [s] \tag{2.b}
	\\
	&\max(0,j-g_j^R)+ & &\\
	&\max(0,(n-j)-g_j^B) <g_j,&   &\forall j\in[0,s] \tag{3}\\[1em]
	&b^\ell_{i,j},r^\ell_{i',j}, n_{j} \geq 0,&   &\forall j \in [0,s], \forall 
	i \in 
	[t_{B}], \forall i'\in [t_{R}], \forall \ell\in[s] \tag{5}
	\end{align*}
	First, constraints~(1) ensure that every agent in the game is assigned to 
	some room. Second, constraints~(2.a) and~(2.b) ensure that the values of 
	$r^\ell_{i,j}$ and $b^\ell_{i,j}$ induce a valid outcome of the game: 
	constraints (2.a) ensure that it is possible to put all agents into rooms 
	of fraction specified in the variables $r^\ell_{i,j}$ and $b^\ell_{i,j}$, 
	while constraints (2.b) ensure that it is possible to construct such an 
	outcome where all agents in a coalition come from different dimension. As 
	explained above, constraints (3) ensure that no blocking coalition exists.
	Following again the ideas of \citet[Chapter~9]{bradley1977applied}, it is 
	possible to convert this into an ILP in standard form formulated in terms 
	of~$\mathcal{O}(s2^{s^2})$ variables. Applying the algorithm from 
	\citet{lenstra1983integer} and \citet{kannan1987minkowski}, it is possible 
	to determine whether a marriage diversity problem admits a core stable 
	outcome in $f(s)\mathcal{O}(n)$ for some computable function $f(s)$. 
	
	Turning to exchange stability, it is again possible to start from the 
	related Theorem~\ref{th::swap_FPT} for the roommate diversity problem. 
	However, in this 
	case, for two agents to have an exchange-deviation, they need to come from 
	the same dimension.  Therefore, let $c^{R}_\ell(j,k)$ (resp., 
	$c^{B}_\ell(j,k)$) be the number of red (resp., blue) agents from dimension 
	$\ell$ being in a coalition with $j$ red agents and preferring a coalition 
	with $k$ red agents. Here, in contrast to the roommate setting, we do not 
	need to check whether two agents which may have a exchange-deviation come 
	from two different coalitions, as this constraint is already implied by the 
	fact that the two agents need to be from the same dimension. Thereby, 
	adapting the ILP from Theorem~\ref{th::swap_FPT} results in the following 
	collection of 
	constraints: 
	\begin{align*}
	&\sum_{j\in [0,s-1]} b^{\ell}_{i,j}  =b^{\ell}_{i}, \forall i\in [t_{B}]&   
	&\sum_{j\in [s]} r^{\ell}_{i,j}  =r^{\ell}_{i}, \forall i\in 
	[t_{R}]\tag{1}\\
	&\sum_{\substack{i\in [t_{B}]\\\ell\in[s]}} b^\ell_{i,j} = (s-j)n_{j},  
	\forall j\in [0,s-1]&   &\sum_{\substack{i\in [t_{R}]\\\ell\in[s]}} 
	r^\ell_{i,j} = jn_{j},  \forall j\in [s] \tag{2.a}\\
	&\sum_{i\in [t_{R}]} r^\ell_{i,j}+ \sum_{i\in [t_{B}]} b^\ell_{i,j}= 
	n_{j},&   &\forall j\in [0,s],\forall \ell\in [s] \tag{2.b}\\
	&c^{R}_\ell(j,k)=0\lor c^{B}_\ell(k-1,j-1)=0,&   &\forall j,k,\ell\in [s] 
	\tag{3}\\
	&c^{R}_\ell(j,k)=0 \vee c^{R}_\ell(k,j)=0,&       &   \forall j,k,\ell\in 
	[s]: j<k \tag{4}\\
	&c^{B}_\ell(j,k)=0\vee c^{B}_\ell(k,j)=0,&       &   \forall 
	\ell\in[s], \forall j,k\in [0,s-1]: j<k \tag{5}\\
	&b_{i,j},r_{i',j}, n_{j} \geq 0,&   &\forall j \in [0,s], \forall i \in 
	[t_{B}], \forall i' \in [t_{R}] \tag{5}
	\end{align*}
	Here, as before, constraints (1), (2.a) and (2.b) ensure that 
	$r_{i,j}^\ell$ and $b_{i,j}^\ell$ induce a valid outcome. Moreover, 
	constraints (3) ensure that no two agents have a 
	different-type-exchange-deviation. In addition, constraints (4) ensure that 
	no two red agents have a same-type-exchange-deviation, while constraints 
	(5) do the same for blue agents. 
	It is again possible to convert this to an ILP in standard form following 
	the ideas of \citet[Chapter~9]{bradley1977applied} and to apply the 
	algorithm from \citet{lenstra1983integer} and \citet{kannan1987minkowski} 
	to determine whether a marriage diversity problem admits an exchange stable 
	outcome in $f(s)\mathcal{O}(n)$ for some computable function $f(s)$.
\end{proof}
%\textit{Proof idea:} To adjust the ILPs to these problems, it is necessary to 
%bookmark the 
%dimension of every agent and to adjust the constraints accordingly. Thereby, 
%for example, 
%checking for a blocking coalition, it is also necessary to check whether the 
%blocking 
%coalition includes one agent per dimension.

%%%%%%%%%%%%%%%%%%%%%%%%%%%%%%%%%%%%%%%%%%%%%%%%%%%%%%%%%%%%%%%%%%%%

\section{Conclusions And Future Directions}
In this paper, we have proposed the roommate diversity problem,
which is an interesting special case of the
multidimensional stable roommate problem. We have initiated the algorithmic study 
of this problem by considering various standard solution concepts and 
analyzing the existence of stable outcomes and the complexity of computing them 
(see Table~\ref{t:Sum}). While we have answered many questions that arise in this context, 
the complexity of deciding whether a roommate diversity problem 
admits an exchange stable outcome remains open. Moreover, it is unclear 
if every instance with dichotomous preferences admits 
an exchange stable outcome. 

By parameterizing our computational problems by the size of the rooms, we 
showed 
that diversity preferences are a powerful restriction, as all studied existence 
problems lie 
in FPT with respect to this parameter. However, 
it would be desirable to obtain parameterized algorithms that are combinatorial
rather than ILP-based, since ILP-based algorithms tend to be slow in practice.

Finally, we believe that both the marriage diversity problem and the three 
extensions of our model
introduced in Section~\ref{sec:ext} deserve further attention.

\section* {Acknowledgements}
This work was supported by 
a DFG project ``MaMu'', NI 369/19 (Boehmer) and by
an ERC Starting Grant ACCORD under Grant Agreement 639945 (Elkind).
%% The file named.bst is a bibliography style file for BibTeX 0.99c
\bibliographystyle{ACM-Reference-Format}
\bibliography{RDP_arxiv}

%%% -*-BibTeX-*-
%%% Do NOT edit. File created by BibTeX with style
%%% ACM-Reference-Format-Journals [18-Jan-2012].

\begin{thebibliography}{43}

%%% ====================================================================
%%% NOTE TO THE USER: you can override these defaults by providing
%%% customized versions of any of these macros before the \bibliography
%%% command.  Each of them MUST provide its own final punctuation,
%%% except for \shownote{}, \showDOI{}, and \showURL{}.  The latter two
%%% do not use final punctuation, in order to avoid confusing it with
%%% the Web address.
%%%
%%% To suppress output of a particular field, define its macro to expand
%%% to an empty string, or better, \unskip, like this:
%%%
%%% \newcommand{\showDOI}[1]{\unskip}   % LaTeX syntax
%%%
%%% \def \showDOI #1{\unskip}           % plain TeX syntax
%%%
%%% ====================================================================

\ifx \showCODEN    \undefined \def \showCODEN     #1{\unskip}     \fi
\ifx \showDOI      \undefined \def \showDOI       #1{#1}\fi
\ifx \showISBNx    \undefined \def \showISBNx     #1{\unskip}     \fi
\ifx \showISBNxiii \undefined \def \showISBNxiii  #1{\unskip}     \fi
\ifx \showISSN     \undefined \def \showISSN      #1{\unskip}     \fi
\ifx \showLCCN     \undefined \def \showLCCN      #1{\unskip}     \fi
\ifx \shownote     \undefined \def \shownote      #1{#1}          \fi
\ifx \showarticletitle \undefined \def \showarticletitle #1{#1}   \fi
\ifx \showURL      \undefined \def \showURL       {\relax}        \fi
% The following commands are used for tagged output and should be
% invisible to TeX
\providecommand\bibfield[2]{#2}
\providecommand\bibinfo[2]{#2}
\providecommand\natexlab[1]{#1}
\providecommand\showeprint[2][]{arXiv:#2}

\bibitem[\protect\citeauthoryear{Abdulkadiro{\u{g}}lu and
  S{\"o}nmez}{Abdulkadiro{\u{g}}lu and S{\"o}nmez}{2003}]%
        {abdulkadirouglu2003school}
\bibfield{author}{\bibinfo{person}{Atila Abdulkadiro{\u{g}}lu} {and}
  \bibinfo{person}{Tayfun S{\"o}nmez}.} \bibinfo{year}{2003}\natexlab{}.
\newblock \showarticletitle{School choice: A mechanism design approach}.
\newblock \bibinfo{journal}{\emph{American Economic Review}}
  \bibinfo{volume}{93}, \bibinfo{number}{3} (\bibinfo{year}{2003}),
  \bibinfo{pages}{729--747}.
\newblock


\bibitem[\protect\citeauthoryear{Abraham, Levavi, Manlove, and
  O'Malley}{Abraham et~al\mbox{.}}{2007}]%
        {abraham2007stable}
\bibfield{author}{\bibinfo{person}{David~J Abraham}, \bibinfo{person}{Ariel
  Levavi}, \bibinfo{person}{David~F Manlove}, {and} \bibinfo{person}{Gregg
  O'Malley}.} \bibinfo{year}{2007}\natexlab{}.
\newblock \showarticletitle{The stable roommates problem with globally-ranked
  pairs}. In \bibinfo{booktitle}{\emph{3rd International Workshop on Internet
  Economics (WINE)}}. \bibinfo{pages}{431--444}.
\newblock


\bibitem[\protect\citeauthoryear{Abraham and Manlove}{Abraham and
  Manlove}{2004}]%
        {abraham2004pareto}
\bibfield{author}{\bibinfo{person}{David~J Abraham} {and}
  \bibinfo{person}{David~F Manlove}.} \bibinfo{year}{2004}\natexlab{}.
\newblock \showarticletitle{Pareto optimality in the roommates problem}.
\newblock In \bibinfo{booktitle}{\emph{Technical Report No. TR-2004-182}}.
  \bibinfo{publisher}{The Computing Science Department of Glasgow University}.
\newblock


\bibitem[\protect\citeauthoryear{Alcalde}{Alcalde}{1994}]%
        {alcalde1994exchange}
\bibfield{author}{\bibinfo{person}{Jos{\'e} Alcalde}.}
  \bibinfo{year}{1994}\natexlab{}.
\newblock \showarticletitle{Exchange-proofness or divorce-proofness?
  {S}tability in one-sided matching markets}.
\newblock \bibinfo{journal}{\emph{Economic Design}} \bibinfo{volume}{1},
  \bibinfo{number}{1} (\bibinfo{year}{1994}), \bibinfo{pages}{275--287}.
\newblock


\bibitem[\protect\citeauthoryear{Aziz}{Aziz}{2013}]%
        {aziz2013stable}
\bibfield{author}{\bibinfo{person}{Haris Aziz}.}
  \bibinfo{year}{2013}\natexlab{}.
\newblock \showarticletitle{Stable marriage and roommate problems with
  individual-based stability}. In \bibinfo{booktitle}{\emph{12th International
  Conference on Autonomous Agents and Multiagent Systems (AAMAS)}}.
  \bibinfo{pages}{287--294}.
\newblock


\bibitem[\protect\citeauthoryear{Aziz, Brandl, Brandt, Harrenstein, Olsen, and
  Peters}{Aziz et~al\mbox{.}}{2019}]%
        {aziz2019fractional}
\bibfield{author}{\bibinfo{person}{Haris Aziz}, \bibinfo{person}{Florian
  Brandl}, \bibinfo{person}{Felix Brandt}, \bibinfo{person}{Paul Harrenstein},
  \bibinfo{person}{Martin Olsen}, {and} \bibinfo{person}{Dominik Peters}.}
  \bibinfo{year}{2019}\natexlab{}.
\newblock \showarticletitle{Fractional hedonic games}.
\newblock \bibinfo{journal}{\emph{ACM Transactions on Economics and
  Computation}} \bibinfo{volume}{7}, \bibinfo{number}{2}
  (\bibinfo{year}{2019}), \bibinfo{pages}{6:1--6:29}.
\newblock


\bibitem[\protect\citeauthoryear{Aziz, Brandt, and Harrenstein}{Aziz
  et~al\mbox{.}}{2013}]%
        {aziz2013pareto}
\bibfield{author}{\bibinfo{person}{Haris Aziz}, \bibinfo{person}{Felix Brandt},
  {and} \bibinfo{person}{Paul Harrenstein}.} \bibinfo{year}{2013}\natexlab{}.
\newblock \showarticletitle{Pareto optimality in coalition formation}.
\newblock \bibinfo{journal}{\emph{Games and Economic Behavior}}
  \bibinfo{volume}{82} (\bibinfo{year}{2013}), \bibinfo{pages}{562--581}.
\newblock


\bibitem[\protect\citeauthoryear{Aziz and Klaus}{Aziz and Klaus}{2019}]%
        {aziz2019random}
\bibfield{author}{\bibinfo{person}{Haris Aziz} {and} \bibinfo{person}{Bettina
  Klaus}.} \bibinfo{year}{2019}\natexlab{}.
\newblock \showarticletitle{Random matching under priorities: {S}tability and
  no envy concepts}.
\newblock \bibinfo{journal}{\emph{Social Choice and Welfare}}
  \bibinfo{volume}{53}, \bibinfo{number}{2} (\bibinfo{year}{2019}),
  \bibinfo{pages}{213--259}.
\newblock


\bibitem[\protect\citeauthoryear{Ballester}{Ballester}{2004}]%
        {ballester2004np}
\bibfield{author}{\bibinfo{person}{Coralio Ballester}.}
  \bibinfo{year}{2004}\natexlab{}.
\newblock \showarticletitle{{N}{P}-completeness in Hedonic Games}.
\newblock \bibinfo{journal}{\emph{Games and Economic Behavior}}
  \bibinfo{volume}{49} (\bibinfo{year}{2004}), \bibinfo{pages}{1--30}.
\newblock


\bibitem[\protect\citeauthoryear{Banerjee, Konishi, and S{\"o}nmez}{Banerjee
  et~al\mbox{.}}{2001}]%
        {banerjee2001core}
\bibfield{author}{\bibinfo{person}{Suryapratim Banerjee},
  \bibinfo{person}{Hideo Konishi}, {and} \bibinfo{person}{Tayfun S{\"o}nmez}.}
  \bibinfo{year}{2001}\natexlab{}.
\newblock \showarticletitle{Core in a simple coalition formation game}.
\newblock \bibinfo{journal}{\emph{Social Choice and Welfare}}
  \bibinfo{volume}{18}, \bibinfo{number}{1} (\bibinfo{year}{2001}),
  \bibinfo{pages}{135--153}.
\newblock


\bibitem[\protect\citeauthoryear{Bartholdi~III and Trick}{Bartholdi~III and
  Trick}{1986}]%
        {bartholdi1986stable}
\bibfield{author}{\bibinfo{person}{John Bartholdi~III} {and}
  \bibinfo{person}{Michael~A Trick}.} \bibinfo{year}{1986}\natexlab{}.
\newblock \showarticletitle{Stable matching with preferences derived from a
  psychological model}.
\newblock \bibinfo{journal}{\emph{Operations Research Letters}}
  \bibinfo{volume}{5}, \bibinfo{number}{4} (\bibinfo{year}{1986}),
  \bibinfo{pages}{165--169}.
\newblock


\bibitem[\protect\citeauthoryear{Boehmer and Elkind}{Boehmer and
  Elkind}{2020}]%
        {be20}
\bibfield{author}{\bibinfo{person}{Niclas Boehmer} {and} \bibinfo{person}{Edith
  Elkind}.} \bibinfo{year}{2020}\natexlab{}.
\newblock \showarticletitle{Individual-based stability in hedonic diversity
  games}. In \bibinfo{booktitle}{\emph{34th AAAI Conference on Artificial
  Intelligence (AAAI)}}.
\newblock


\bibitem[\protect\citeauthoryear{Bogomolnaia and Jackson}{Bogomolnaia and
  Jackson}{2002}]%
        {bogomolnaia2002stability}
\bibfield{author}{\bibinfo{person}{Anna Bogomolnaia} {and}
  \bibinfo{person}{Matthew~O Jackson}.} \bibinfo{year}{2002}\natexlab{}.
\newblock \showarticletitle{The stability of hedonic coalition structures}.
\newblock \bibinfo{journal}{\emph{Games and Economic Behavior}}
  \bibinfo{volume}{38}, \bibinfo{number}{2} (\bibinfo{year}{2002}),
  \bibinfo{pages}{201--230}.
\newblock


\bibitem[\protect\citeauthoryear{Bradley, Hax, and Magnanti}{Bradley
  et~al\mbox{.}}{1977}]%
        {bradley1977applied}
\bibfield{author}{\bibinfo{person}{S.P. Bradley}, \bibinfo{person}{A.C. Hax},
  {and} \bibinfo{person}{T.L. Magnanti}.} \bibinfo{year}{1977}\natexlab{}.
\newblock \bibinfo{booktitle}{\emph{Applied Mathematical Programming}}.
\newblock \bibinfo{publisher}{Addison-Wesley Publishing Company}.
\newblock


\bibitem[\protect\citeauthoryear{Bredereck, Chen, Finnendahl, and
  Niedermeier}{Bredereck et~al\mbox{.}}{2017}]%
        {bredereck2017stable}
\bibfield{author}{\bibinfo{person}{Robert Bredereck}, \bibinfo{person}{Jiehua
  Chen}, \bibinfo{person}{Ugo~Paavo Finnendahl}, {and} \bibinfo{person}{Rolf
  Niedermeier}.} \bibinfo{year}{2017}\natexlab{}.
\newblock \showarticletitle{Stable roommate with narcissistic, single-peaked,
  and single-crossing preferences}. In \bibinfo{booktitle}{\emph{5th
  International Conference on Algorithmic Decision Theory (ADT)}}.
  \bibinfo{pages}{315--330}.
\newblock


\bibitem[\protect\citeauthoryear{Bredereck, Elkind, and Igarashi}{Bredereck
  et~al\mbox{.}}{2019}]%
        {bredereck2019hedonic}
\bibfield{author}{\bibinfo{person}{Robert Bredereck}, \bibinfo{person}{Edith
  Elkind}, {and} \bibinfo{person}{Ayumi Igarashi}.}
  \bibinfo{year}{2019}\natexlab{}.
\newblock \showarticletitle{Hedonic diversity games}. In
  \bibinfo{booktitle}{\emph{18th International Conference on Autonomous Agents
  and Multiagent Systems (AAMAS)}}. \bibinfo{pages}{565--573}.
\newblock


\bibitem[\protect\citeauthoryear{Cechl{\'a}rov{\'a}}{Cechl{\'a}rov{\'a}}{2002}]%
        {cechlarova2002complexity}
\bibfield{author}{\bibinfo{person}{Katar{\'{\i}}na Cechl{\'a}rov{\'a}}.}
  \bibinfo{year}{2002}\natexlab{}.
\newblock \showarticletitle{On the complexity of exchange-stable roommates}.
\newblock \bibinfo{journal}{\emph{Discrete Applied Mathematics}}
  \bibinfo{volume}{116}, \bibinfo{number}{3} (\bibinfo{year}{2002}),
  \bibinfo{pages}{279--287}.
\newblock


\bibitem[\protect\citeauthoryear{Cechl{\'a}rov{\'a} and
  Manlove}{Cechl{\'a}rov{\'a} and Manlove}{2005}]%
        {cechlarova2005exchange}
\bibfield{author}{\bibinfo{person}{Katar{\'{\i}}na Cechl{\'a}rov{\'a}} {and}
  \bibinfo{person}{David~F Manlove}.} \bibinfo{year}{2005}\natexlab{}.
\newblock \showarticletitle{The exchange-stable marriage problem}.
\newblock \bibinfo{journal}{\emph{Discrete Applied Mathematics}}
  \bibinfo{volume}{152}, \bibinfo{number}{1-3} (\bibinfo{year}{2005}),
  \bibinfo{pages}{109--122}.
\newblock


\bibitem[\protect\citeauthoryear{Chung}{Chung}{2000}]%
        {chung2000existence}
\bibfield{author}{\bibinfo{person}{Kim-Sau Chung}.}
  \bibinfo{year}{2000}\natexlab{}.
\newblock \showarticletitle{On the existence of stable roommate matchings}.
\newblock \bibinfo{journal}{\emph{Games and Economic Behavior}}
  \bibinfo{volume}{33}, \bibinfo{number}{2} (\bibinfo{year}{2000}),
  \bibinfo{pages}{206--230}.
\newblock


\bibitem[\protect\citeauthoryear{Cseh, Fleiner, and Harj{\'{a}}n}{Cseh
  et~al\mbox{.}}{2019}]%
        {DBLP:journals/corr/abs-1901-06737}
\bibfield{author}{\bibinfo{person}{{\'{A}}gnes Cseh},
  \bibinfo{person}{Tam{\'{a}}s Fleiner}, {and} \bibinfo{person}{Petra
  Harj{\'{a}}n}.} \bibinfo{year}{2019}\natexlab{}.
\newblock \showarticletitle{Pareto optimal coalitions of fixed size}.
\newblock \bibinfo{journal}{\emph{CoRR}}  \bibinfo{volume}{abs/1901.06737}
  (\bibinfo{year}{2019}).
\newblock
\showeprint[arxiv]{1901.06737}
\urldef\tempurl%
\url{http://arxiv.org/abs/1901.06737}
\showURL{%
\tempurl}


\bibitem[\protect\citeauthoryear{Cseh and Juhos}{Cseh and Juhos}{2019}]%
        {cseh2019pairwise}
\bibfield{author}{\bibinfo{person}{{\'A}gnes Cseh} {and}
  \bibinfo{person}{Attila Juhos}.} \bibinfo{year}{2019}\natexlab{}.
\newblock \showarticletitle{Pairwise preferences in the stable marriage
  problem}. In \bibinfo{booktitle}{\emph{36th International Symposium on
  Theoretical Aspects of Computer Science (STACS 2019)}}. Schloss
  Dagstuhl-Leibniz-Zentrum fuer Informatik.
\newblock


\bibitem[\protect\citeauthoryear{Downey and Fellows}{Downey and
  Fellows}{2013}]%
        {downey2013fundamentals}
\bibfield{author}{\bibinfo{person}{Rodney~G. Downey} {and}
  \bibinfo{person}{Michael~R. Fellows}.} \bibinfo{year}{2013}\natexlab{}.
\newblock \bibinfo{booktitle}{\emph{Fundamentals of parameterized complexity}}.
\newblock \bibinfo{publisher}{Springer}.
\newblock


\bibitem[\protect\citeauthoryear{Elkind, Fanelli, and Flammini}{Elkind
  et~al\mbox{.}}{2016}]%
        {elkind2016price}
\bibfield{author}{\bibinfo{person}{Edith Elkind}, \bibinfo{person}{Angelo
  Fanelli}, {and} \bibinfo{person}{Michele Flammini}.}
  \bibinfo{year}{2016}\natexlab{}.
\newblock \showarticletitle{Price of {P}areto optimality in hedonic games}. In
  \bibinfo{booktitle}{\emph{30th AAAI Conference on Artificial Intelligence
  (AAAI)}}. \bibinfo{pages}{475--481}.
\newblock


\bibitem[\protect\citeauthoryear{Gale and Shapley}{Gale and Shapley}{1962}]%
        {gale1962college}
\bibfield{author}{\bibinfo{person}{David Gale} {and} \bibinfo{person}{Lloyd~S
  Shapley}.} \bibinfo{year}{1962}\natexlab{}.
\newblock \showarticletitle{College admissions and the stability of marriage}.
\newblock \bibinfo{journal}{\emph{The American Mathematical Monthly}}
  \bibinfo{volume}{69}, \bibinfo{number}{1} (\bibinfo{year}{1962}),
  \bibinfo{pages}{9--15}.
\newblock


\bibitem[\protect\citeauthoryear{Garey and Johnson}{Garey and Johnson}{1979}]%
        {garey2002computers}
\bibfield{author}{\bibinfo{person}{M.~R. Garey} {and} \bibinfo{person}{David~S.
  Johnson}.} \bibinfo{year}{1979}\natexlab{}.
\newblock \bibinfo{booktitle}{\emph{Computers and Intractability: {A} Guide to
  the Theory of {NP}-Completeness}}.
\newblock \bibinfo{publisher}{W. H. Freeman}.
\newblock
\showISBNx{0-7167-1044-7}


\bibitem[\protect\citeauthoryear{Hofbauer}{Hofbauer}{2016}]%
        {hofbauer2016ddimensional}
\bibfield{author}{\bibinfo{person}{Johannes Hofbauer}.}
  \bibinfo{year}{2016}\natexlab{}.
\newblock \showarticletitle{$d$-dimensional stable matching with cyclic
  preferences}.
\newblock \bibinfo{journal}{\emph{Mathematical Social Sciences}}
  \bibinfo{volume}{82} (\bibinfo{year}{2016}), \bibinfo{pages}{72--76}.
\newblock


\bibitem[\protect\citeauthoryear{Huang}{Huang}{2007}]%
        {huang2007two}
\bibfield{author}{\bibinfo{person}{Chien-Chung Huang}.}
  \bibinfo{year}{2007}\natexlab{}.
\newblock \showarticletitle{Two's company, three's a crowd: {S}table family and
  threesome roommates problems}. In \bibinfo{booktitle}{\emph{15th European
  Symposium on Algorithms (ESA)}}. \bibinfo{pages}{558--569}.
\newblock


\bibitem[\protect\citeauthoryear{Huang}{Huang}{2010}]%
        {huang2010classified}
\bibfield{author}{\bibinfo{person}{Chien-Chung Huang}.}
  \bibinfo{year}{2010}\natexlab{}.
\newblock \showarticletitle{Classified stable matching}. In
  \bibinfo{booktitle}{\emph{21th Annual {ACM-SIAM} Symposium on Discrete
  Algorithms (SODA)}}. \bibinfo{pages}{1235--1253}.
\newblock


\bibitem[\protect\citeauthoryear{Irving}{Irving}{1985}]%
        {irving1985efficient}
\bibfield{author}{\bibinfo{person}{Robert~W Irving}.}
  \bibinfo{year}{1985}\natexlab{}.
\newblock \showarticletitle{An efficient algorithm for the \enquote{stable
  roommates} problem}.
\newblock \bibinfo{journal}{\emph{Journal of Algorithms}} \bibinfo{volume}{6},
  \bibinfo{number}{4} (\bibinfo{year}{1985}), \bibinfo{pages}{577--595}.
\newblock


\bibitem[\protect\citeauthoryear{Irving and Manlove}{Irving and
  Manlove}{2002}]%
        {irving2002stable}
\bibfield{author}{\bibinfo{person}{Robert~W Irving} {and}
  \bibinfo{person}{David~F Manlove}.} \bibinfo{year}{2002}\natexlab{}.
\newblock \showarticletitle{The stable roommates problem with ties}.
\newblock \bibinfo{journal}{\emph{Journal of Algorithms}} \bibinfo{volume}{43},
  \bibinfo{number}{1} (\bibinfo{year}{2002}), \bibinfo{pages}{85--105}.
\newblock


\bibitem[\protect\citeauthoryear{Iwama, Miyazaki, and Okamoto}{Iwama
  et~al\mbox{.}}{2007}]%
        {iwama2007stable}
\bibfield{author}{\bibinfo{person}{Kazuo Iwama}, \bibinfo{person}{Shuichi
  Miyazaki}, {and} \bibinfo{person}{Kazuya Okamoto}.}
  \bibinfo{year}{2007}\natexlab{}.
\newblock \showarticletitle{Stable roommates problem with triple rooms}. In
  \bibinfo{booktitle}{\emph{Proceedings of the 10th Korea--Japan Joint Workshop
  on Algorithms and Computation (WAAC)}}. \bibinfo{pages}{105--112}.
\newblock


\bibitem[\protect\citeauthoryear{Kamada and Kojima}{Kamada and Kojima}{2015}]%
        {kamada2015efficient}
\bibfield{author}{\bibinfo{person}{Yuichiro Kamada} {and}
  \bibinfo{person}{Fuhito Kojima}.} \bibinfo{year}{2015}\natexlab{}.
\newblock \showarticletitle{Efficient matching under distributional
  constraints: {T}heory and applications}.
\newblock \bibinfo{journal}{\emph{American Economic Review}}
  \bibinfo{volume}{105}, \bibinfo{number}{1} (\bibinfo{year}{2015}),
  \bibinfo{pages}{67--99}.
\newblock


\bibitem[\protect\citeauthoryear{Kannan}{Kannan}{1987}]%
        {kannan1987minkowski}
\bibfield{author}{\bibinfo{person}{Ravi Kannan}.}
  \bibinfo{year}{1987}\natexlab{}.
\newblock \showarticletitle{Minkowski's convex body theorem and integer
  programming}.
\newblock \bibinfo{journal}{\emph{Mathematics of Operations Research}}
  \bibinfo{volume}{12}, \bibinfo{number}{3} (\bibinfo{year}{1987}),
  \bibinfo{pages}{415--440}.
\newblock


\bibitem[\protect\citeauthoryear{Knuth}{Knuth}{1976}]%
        {knuth1976mariages}
\bibfield{author}{\bibinfo{person}{Donald~Ervin Knuth}.}
  \bibinfo{year}{1976}\natexlab{}.
\newblock \bibinfo{booktitle}{\emph{Mariages stables et leurs relations avec
  d'autres probl{\`e}mes combinatoires : introduction {\`a} l'analyse
  math{\'e}matique des algorithmes}}.
\newblock \bibinfo{publisher}{Les Presses de l'Universit{\'e} de Montr{\'e}al}.
\newblock


\bibitem[\protect\citeauthoryear{Lenstra~Jr}{Lenstra~Jr}{1983}]%
        {lenstra1983integer}
\bibfield{author}{\bibinfo{person}{Hendrik~W. Lenstra~Jr}.}
  \bibinfo{year}{1983}\natexlab{}.
\newblock \showarticletitle{Integer programming with a fixed number of
  variables}.
\newblock \bibinfo{journal}{\emph{Mathematics of Operations Research}}
  \bibinfo{volume}{8}, \bibinfo{number}{4} (\bibinfo{year}{1983}),
  \bibinfo{pages}{538--548}.
\newblock


\bibitem[\protect\citeauthoryear{Morrill}{Morrill}{2010}]%
        {morrill2010roommates}
\bibfield{author}{\bibinfo{person}{Thayer Morrill}.}
  \bibinfo{year}{2010}\natexlab{}.
\newblock \showarticletitle{The roommates problem revisited}.
\newblock \bibinfo{journal}{\emph{Journal of Economic Theory}}
  \bibinfo{volume}{145}, \bibinfo{number}{5} (\bibinfo{year}{2010}),
  \bibinfo{pages}{1739--1756}.
\newblock


\bibitem[\protect\citeauthoryear{Ng and Hirschberg}{Ng and Hirschberg}{1991}]%
        {ng1991three}
\bibfield{author}{\bibinfo{person}{Cheng Ng} {and} \bibinfo{person}{Daniel~S
  Hirschberg}.} \bibinfo{year}{1991}\natexlab{}.
\newblock \showarticletitle{Three-dimensional stable matching problems}.
\newblock \bibinfo{journal}{\emph{SIAM Journal on Discrete Mathematics}}
  \bibinfo{volume}{4}, \bibinfo{number}{2} (\bibinfo{year}{1991}),
  \bibinfo{pages}{245--252}.
\newblock


\bibitem[\protect\citeauthoryear{Niedermeier}{Niedermeier}{2006}]%
        {niedermeier2006invitation}
\bibfield{author}{\bibinfo{person}{Rolf Niedermeier}.}
  \bibinfo{year}{2006}\natexlab{}.
\newblock \bibinfo{booktitle}{\emph{Invitation to Fixed-Parameter Algorithms}}.
\newblock \bibinfo{publisher}{Oxford University Press}.
\newblock


\bibitem[\protect\citeauthoryear{Peters}{Peters}{2016}]%
        {peters2016complexity}
\bibfield{author}{\bibinfo{person}{Dominik Peters}.}
  \bibinfo{year}{2016}\natexlab{}.
\newblock \showarticletitle{Complexity of hedonic games with dichotomous
  preferences}. In \bibinfo{booktitle}{\emph{30th AAAI Conference on Artificial
  Intelligence (AAAI)}}. \bibinfo{pages}{579--585}.
\newblock


\bibitem[\protect\citeauthoryear{Ronn}{Ronn}{1990}]%
        {ronn1990np}
\bibfield{author}{\bibinfo{person}{Eytan Ronn}.}
  \bibinfo{year}{1990}\natexlab{}.
\newblock \showarticletitle{{NP}-complete stable matching problems}.
\newblock \bibinfo{journal}{\emph{Journal of Algorithms}} \bibinfo{volume}{11},
  \bibinfo{number}{2} (\bibinfo{year}{1990}), \bibinfo{pages}{285--304}.
\newblock


\bibitem[\protect\citeauthoryear{Salovey and Rodin}{Salovey and Rodin}{1984}]%
        {salovey1984some}
\bibfield{author}{\bibinfo{person}{Peter Salovey} {and} \bibinfo{person}{Judith
  Rodin}.} \bibinfo{year}{1984}\natexlab{}.
\newblock \showarticletitle{Some antecedents and consequences of
  social-comparison jealousy.}
\newblock \bibinfo{journal}{\emph{Journal of personality and Social
  Psychology}} \bibinfo{volume}{47}, \bibinfo{number}{4}
  (\bibinfo{year}{1984}), \bibinfo{pages}{780}.
\newblock


\bibitem[\protect\citeauthoryear{Sotomayor}{Sotomayor}{2011}]%
        {sotomayor2011pareto}
\bibfield{author}{\bibinfo{person}{Marilda Sotomayor}.}
  \bibinfo{year}{2011}\natexlab{}.
\newblock \showarticletitle{The {P}areto-stability concept is a natural
  solution concept for discrete matching markets with indifferences}.
\newblock \bibinfo{journal}{\emph{International Journal of Game Theory}}
  \bibinfo{volume}{40}, \bibinfo{number}{3} (\bibinfo{year}{2011}),
  \bibinfo{pages}{631--644}.
\newblock


\bibitem[\protect\citeauthoryear{Subramanian}{Subramanian}{1994}]%
        {subramanian1994new}
\bibfield{author}{\bibinfo{person}{Ashok Subramanian}.}
  \bibinfo{year}{1994}\natexlab{}.
\newblock \showarticletitle{A new approach to stable matching problems}.
\newblock \bibinfo{journal}{\emph{SIAM J. Comput.}} \bibinfo{volume}{23},
  \bibinfo{number}{4} (\bibinfo{year}{1994}), \bibinfo{pages}{671--700}.
\newblock


\end{thebibliography}

\end{document}